\pgfplotsset{compat=1.10}
\newtheorem{theorem}{Theorem}
\newtheorem{lemma}{Lemma}
\newtheorem{definition}{Definition}
\newtheorem{proposition}{Proposition}
\newtheorem{observation}{Observation}
\newlength{\algofontsize}
\newcommand{\vve}[1]{\textcolor{black}{{#1}}}
\newcommand{\pde}[1]{\textcolor{black}{{#1}}}
\newcommand{\feas}{\mathcal{F}}
\newcommand{\calB}{\mathcal{B}}
\newcommand{\calF}{\mathcal{F}}
\newcommand{\EE}{\mathbb{E}}
\newcommand{\Revenue}{\text{Revenue}}
\newcommand{\Welfare}{\text{Welfare}}
\newcommand{\Surplus}{\text{Surplus}}
\newcommand{\AV}{\text{AV}}
\newcommand{\algt}[1]{\ALG_{#1}}
\newcommand{\OPT}{\mathrm{OPT}}
\newcommand{\ALG}{\mathrm{ALG}}
\newcommand{\reals}{\mathbb{R}}
\newcommand{\nil}{\textsf{null}}
\newcommand{\st}{\rho}
\newcommand{\indicator}[1]{\textbf{{1}}[#1]}
\title{The Competition Complexity of Prophet Inequalities\thanks{J.~Correa was partially funded by Anillo ICMD (ANID grant ACT210005) and the Center for Mathematical Modeling (ANID grant FB210005). T.~Ezra is supported by the Harvard University Center of Mathematical Sciences and Applications.
M. Feldman was partially funded by the European Research Council (ERC) under the European Union's Horizon 2020 research and innovation program (grant agreement No. 866132), by an Amazon Research Award, by the NSF-BSF (grant number 2020788) and by a grant from TAU Center for AI and Data Science (TAD).  V.~Verdugo was partially funded by Anillo ICMD (ANID grant ACT210005).}}
\author{Johannes Brustle\thanks{London School of Economics, UK. Email: {j.brustle@lse.ac.uk}} \and Jos\'e Correa\thanks{Universidad de Chile, Santiago, Chile. Email: {correa@uchile.cl}} \and Paul D\"utting\thanks{Google Research, Zurich, Switzerland. Email: {duetting@google.com}} \and Tomer Ezra\thanks{Harvard University, Cambridge, USA. Email: {tomer@cmsa.fas.harvard.edu}} \and Michal Feldman\thanks{Tel Aviv University, Israel, and Microsoft ILDC. Email: {mfeldman@tauex.tau.ac.il}} \and Victor Verdugo\thanks{Universidad de O'Higgins, Rancagua, Chile. Email: {victor.verdugo@uoh.ch} }}
\date{}
\begin{document}

%% Make title
\maketitle

%% Abstract
\begin{abstract}

We study the classic single-choice prophet inequality problem through a resource augmentation lens. Our goal is to bound the $(1-\varepsilon)$-competition complexity of different types of online algorithms. This metric asks for the smallest $k$ such that the expected value of the online algorithm on $k$ copies of the original instance, is at least a $(1-\varepsilon)$-approximation to the expected offline optimum on a single copy. 

We show that block threshold algorithms, which set one threshold per copy, are optimal and give a tight bound of $k = \Theta(\log \log 1/\varepsilon)$. 
This shows that block threshold algorithms approach the offline optimum doubly-exponentially fast. For single threshold algorithms, we give a tight bound of $k = \Theta(\log 1/\varepsilon)$, establishing an exponential gap between block threshold algorithms and single threshold algorithms.

Our model and results pave the way for exploring resource-augmented prophet inequalities in combinatorial settings. In line with this, we present preliminary findings for bipartite matching with one-sided vertex arrivals, as well as in XOS combinatorial auctions.
Our results have a natural competition complexity interpretation in mechanism design and pricing applications.
\end{abstract}

%\newpage

%% Section: Introduction
\section{Introduction}

Studying algorithms beyond the worst case is an increasingly popular approach in theoretical computer science, and is obviously of great practical interest. In this work, we bring together two such directions, which are particularly relevant for online algorithms: The prophet inequality paradigm and resource augmentation. The former restricts the input to be stochastic ---rather than worst-case--- thus enhancing the online algorithm by giving it better information. The latter restricts the optimal algorithm to use fewer resources, thus degrading its performance.

The standard prophet inequality \cite{KrengelS77,KrengelS78} can be interpreted as an allocation problem. There is a single item, which we need to allocate to one of $n$ players that arrive one-by-one in an online fashion. Each player has a value $v_i \ge 0$ for the item. The values $v_1, \ldots, v_n$ are drawn independently   from known distributions $F_1, \ldots, F_n$. We compare the expected value achievable by an online algorithm that has to allocate the item in an online fashion, to the expected offline optimum, which can simply choose the maximum value in the sequence.

The classic result in this area is a tight factor $2$ approximation, which can be achieved with a simple (single) threshold algorithm \cite{SamuelCahn84}. By interpreting the threshold as a price and the players as buyers, this result has immediate implications for several economic and mechanism design applications. It implies that a simple (sequential) posted-price mechanism yields a constant factor approximation to the optimal mechanism \cite{HajiaghayiEtAl07,ChawlaEtAl10}.
Driven in part by this connection to market pricing, the last decade has seen an impressive amount of work on combinatorial extensions of the prophet inequality problem (e.g., for matroids \cite{ChawlaEtAl10,Alaei11,KleinbergW12,FeldmanSZ16}, general downward-closed set systems \cite{RubinsteinS17}, or combinatorial auctions \cite{FeldmanGL15,DuettingFKL17,DuttingKL20,CristiCorrea23,ezra2020pricing}).

Resource augmentation, in turn, has a long  history of success stories in the design and analysis of algorithms. It is particularly popular in scheduling \cite{KalyanasundaramP00,PhillipsSTW97}, but it has also been studied in many other areas such as paging \cite{SleatorT84}, bin packing \cite{CW00}, or selfish routing \cite{RoughgardenT02}. There are typically several plausible notions of resource augmentation, which illuminate different aspects of the problem at hand. In the scheduling literature, for example, 
we grant the online algorithm access to additional machines.
One can also assume it has access to faster machines. In paging, one can analyze the online algorithm with a larger cache relative to that used by the offline algorithm; while in bin packing, one may consider giving the online algorithms larger bins.

In the context of online allocation described above, we grant the online algorithm access to more samples from the same distributions, say the online algorithm can choose to allocate the item to one of $nk$ players, where for each distribution $F_i$, there are $k$ players whose value is drawn from this distribution. The question we seek to answer is that of understanding how many additional resources (samples) the online algorithm needs in order to effectively approximate the performance of the offline optimum in this stochastic setting.

\subsection{The Model}

In this work, we take a resource augmentation approach to prophet inequalities. We study the prophet inequality problem in a setting, where the offline algorithm is handicapped by having less allocation opportunities. This is particularly well motivated in the mechanism design and pricing applications, where it is very likely that the comparatively simpler (sequential) posted-price mechanism attracts additional buyers \pde{compared to the number of buyers that would show up if one was to sell the item through an auction.}

In our model, the online algorithm sees $k$ copies of a prophet inequality instance, where every instance has $n$ values $v_1, \ldots, v_n$ drawn independently from distributions $F_1, \ldots, F_n$. We compare the expected value achieved by the online algorithm on $k$ copies, to the expected maximum value on a single copy. 
Our default assumption is a fixed order model, in which the variables arrive in the same order in each block. 
Our results continue to hold when variables arrive in arbitrary order within each block, and each block may have its own arrival order.

We also consider a further relaxation, the $\gamma$-displacement model, which is parameterized by an integer $\gamma \geq 1$. In this model, the algorithm faces $\gamma k$ copies of the original instance, and an adversary can determine the arrival order, but is constrained by the fact that in each meta-block of $\gamma n$ variables, each type of variable should appear at least once.
This model interpolates between the block model with arbitrary intra-block arrivals 
and the fully adversarial model. 

Similarly to \pde{Eden et al.~\cite{EdenFFTW17a} and Beyhaghi and Weinberg~\cite{BeyhaghiW19}}, for $\varepsilon \ge 0$, we define the \pde{\emph{$(1-\varepsilon)$-competition complexity}} as the smallest $k$ such that the expected value of the online algorithm is guaranteed to be at least a $(1-\varepsilon)$ fraction of the expected maximum value for every instance. 
\pde{This complexity measure has previously been studied by Brustle et al.~\cite{BrustleCDV22} in the context of prophet inequalities and posted pricing, but only for the case of i.i.d.~distributions.}
\pde{They show that while for $\varepsilon = 0$ the competition complexity is unbounded, for $\varepsilon > 0$ it scales as $\Theta(\log \log 1/\varepsilon)$. This shows that the optimal online algorithm (dynamic pricing policy), approaches the optimal offline algorithm (optimal auction) doubly-exponentially fast.}

Of course, 
\pde{the competition complexity} metric is also interesting when restricted to a certain class of online algorithms.
We consider algorithms from three different classes: (1) Single threshold algorithms, which set a single threshold and accept the first value that exceeds the threshold. (2) Block threshold algorithms, which set a threshold for each copy, and accept the first value that exceeds the threshold for its copy. (3) General threshold algorithms, which set a threshold for each of the $nk$ steps and accept the first value that exceeds its threshold. 

Naturally, a simple backwards induction argument shows that threshold algorithms are optimal (in a per instance sense) among all online algorithms. More interestingly, as a first structural insight, we show that with respect to the competition complexity metric, block threshold algorithms are optimal (Proposition~\ref{prop:eq-class} in Section~\ref{sec:model}). We thus generalize the classic result of Samuel-Cahn \cite{SamuelCahn84}, which shows that for $k = 1$ a single threshold algorithm is optimal. Apart from this, threshold algorithms are of particular importance because of their simplicity and natural interpretation as posted-price mechanisms.

\subsection{Our Contribution}

As our main contribution, we resolve the $(1-\varepsilon)$-competition complexity for the classic (single-choice) prophet inequality problem \pde{with non-identical distributions}.
\pde{We show that the competition complexity for general distributions exhibits the same asymptotics as in the i.i.d.~case. Moreover, the optimal asymptotics are attained by block threshold algorithms.}

\medskip
\noindent {\bf Main Result 1 (Theorem~\ref{thm:dynamic}):} For every $\varepsilon > 0$, the $(1-\varepsilon)$-competition complexity of the class of block threshold algorithms is $\Theta(\log \log(1/\varepsilon))$.
\medskip

Our second main result shows a tight bound for single threshold algorithms. 

\medskip
\noindent {\bf Main Result 2 (Theorem~\ref{thm:static}):} For every $\varepsilon > 0$, the $(1-\varepsilon)$-competition complexity of the class of single threshold algorithms is $\Theta(\log(1/\varepsilon))$.
\medskip

\pde{We thus show that also in the case of general, non-identical distributions the best online algorithm (dynamic pricing policy) approaches the best offline algorithm (optimal auction) doubly exponentially fast. Moreover, this is true even when we don't use the full power of dynamic pricing, and prices remain constant within each block.}

In addition, our results show that there is an exponential gap between single threshold algorithms (static pricing policies) and block threshold algorithms (policies that update prices only periodically).

\subsection{Our Techniques}

Our main technical contribution is the upper bound of $O(\log \log(1/\varepsilon))$ on the $(1-\varepsilon)$-competition complexity for block threshold algorithms. 
The matching lower bound of $\Omega(\log \log(1/\varepsilon))$ already applies in the i.i.d.~setting and follows from \cite{BrustleCDV22}.

The main technical ingredient in our proof of the upper bound is an approximate stochastic dominance inequality that allows us to evaluate the performance of any block threshold algorithm, with decreasing thresholds $\tau_1>\tau_2>\ldots,\tau_k$ (Lemma \ref{lem:dynamic-1}). The approximation factor of this stochastic dominance inequality is parameterized by $\tau_1,\ldots,\tau_k$, and it is obtained  
\pde{through a careful} analysis of the minimum stochastic-dominance approximation factor achievable on each of the $k+1$ sub-intervals $[0,\tau_k),[\tau_k,\tau_{k-1}),\ldots,[\tau_1,\infty)$ defined by the block thresholds. 
Then, to find the best possible approximation guarantee for block threshold algorithms that can be obtained in this way, we need to solve a (high-degree polynomial) max-min optimization problem. Rather than solving this problem exactly, we provide a tight double-exponentially fast increasing lower bound on the value of this max-min problem by constructing an explicit set of thresholds (Lemma~\ref{lem:dynamic-2}). \pde{In combination with the lower bound, this} shows that our stochastic dominance approach provides an optimal competition complexity bound up to a constant factor.

Our upper bound of $O(\log(1/\varepsilon))$ for single threshold algorithms follows in a rather direct way from the ``median rule'' proof of \cite{SamuelCahn84}. 
Our key insight for this case is that the upper bound is asymptotically tight, which we show by providing an explicit lower bound construction.

\subsection{Other Arrival Orders}

We also explore the robustness of the $(1-\varepsilon)$-competition complexity metric to different assumptions about the arrival order.  
To study this effect, we introduce the $\gamma$-displacement model, where $\gamma \geq 1$ is a parameter. 
In this model, the algorithm faces $\gamma k$ copies of the original instance. The arrival order is determined by an adversary, but the adversary is constrained by the requirement that, within each meta-block of $\gamma n$ variables, each type of variable should appear at least once. 
While for $\gamma = 1$ the adversary is restricted to move variables within their block, for $\gamma > 1$ the adversary can also move variables across blocks.

For this model, we show that for every $\gamma \geq 1$, the $(1-\varepsilon)$-competition complexity of block threshold algorithms is $O(\gamma \log \log (1/\varepsilon))$ (Proposition~\ref{prop:displace}). Comparing this with the $(1-\varepsilon)$-competition complexity of block threshold algorithms in the default model, this shows that the competition complexity is increased by a multiplicative factor of $\gamma$, but the scaling behavior in $1/\varepsilon$ remains the same. This shows that the competition complexity of block threshold algorithms degrades gracefully as we move away from the default model.

We also show that a comparable result cannot be achieved for single threshold algorithms.
Namely, for this class of algorithms, we show that there exists some $\gamma > 1$, such that the $(1-\varepsilon)$-competition complexity of single threshold algorithms is least $\Omega(1/\varepsilon^{1/3})$ (Proposition~\ref{prop:displace-single-threshold}). This shows that the $(1-\varepsilon)$-competition complexity of this type of algorithms transitions from growing logarithmically in $1/\varepsilon$ when $\gamma = 1$ to growing (at least) polynomially in $1/\varepsilon$ when $\gamma > 1$. 

We complement these results, with a lower bound on the $(1-\varepsilon)$-competition complexity of general threshold algorithms in the fully adversarial model, showing that the $(1-\varepsilon)$-competition complexity is at least $\Omega(1/\varepsilon)$ (Proposition~\ref{prop:adversarial}).

\subsection{Extensions}

Our work opens up the question of studying resource-augmented prophet inequalities for richer combinatorial settings. We present some preliminary results for submodular and XOS combinatorial auctions. The proper generalization of threshold algorithms for this setting are prices, and similar to the distinction between block threshold algorithms and single threshold algorithms, we can distinguish between block-consistent prices which stay fixed within a block and static prices that remain fixed throughout.

\medskip
\noindent{\bf Additional Result 1 (Theorem~\ref{thm:xos-dynamic}):}
The $(1-\varepsilon)$-competition complexity of block-consistent prices for submodular and XOS combinatorial auctions is $O(\log(1/\varepsilon))$. 
\medskip

\medskip
\noindent{\bf Additional Result 2 (Theorem~\ref{thm:xos-static}):}
The $(1-\varepsilon)$-competition complexity of static prices for submodular and XOS combinatorial auctions is $O(1/\varepsilon)$. 
\medskip

Our results present a first glimpse at a potentially rich theory and already show that the constant-factor that can be shown in the single-shot setting  \cite{FeldmanGL15,DuettingFKL17} vanishes exponentially fast with block-consistent prices with additional resources. Whether this can also be achieved with static prices remains open, just as the question of whether a double-exponentially fast approach is possible with dynamic prices. 
More generally, it would be interesting to establish a formal separation between static and dynamic prices. 
It also remains open whether comparable results can be obtained for subadditive combinatorial auctions \cite{DuttingKL20,CristiCorrea23}.

Of course, the same question can be studied for other combinatorial settings such as matroids \cite{ChawlaEtAl10,KleinbergW12} or matching constraints \cite{GravinW19,ezra2022prophet}. Finally, we hope that our paper will spark work on other notions of resource augmentation.

\subsection{Related Work}

A seminal paper in Economic Theory that is very relevant to our work is a paper by Bulow and Klemperer \cite{BulowKlemperer96}. This paper argues that in an i.i.d.~auction setting rather than going with the complex revenue-maximizing auction, one can attract one additional buyer and simply run a second price auction. The revenue that results from the simpler auction with one additional buyer will be higher than that from the optimal auction.

This work obviously found its audience in computer science, where it inspired work on the competition complexity of simple mechanisms, see for example the excellent work of Eden et al.~\cite{EdenFFTW17a}, Beyhaghi and Weinberg \cite{BeyhaghiW19}, and Babaioff et al.~\cite{BabaioffGG20}.

Our work is also related to the recent work of \cite{AbolhassaniEEHK17,LiuLPSS21} through the notion of ``frequent instances''. Both of these papers study the \emph{frequent prophets} problem, in which each distribution must be repeated at least some number of times: An instance is $m$-frequent if each distribution appears at least $m$ times. The inspiration behind frequent prophets comes from the difficulty of directly analyzing the general independent case in random and free-order models, in which the values either arrive in uniform random order or the algorithm is free to choose the order of observation, and the idea is to bring the instance closer to the i.i.d.~case. 
Abolhassani et al.~\cite{AbolhassaniEEHK17} show how to design a $0.738$-competitive policy for $O(1)$-frequent instances in the free-order model and $\Theta(\log n)$-frequent instances in the random-order model, while Liu et al.~\cite{LiuLPSS21}  improve on the random-order results by showing how to obtain a $(\beta-\epsilon)$-competitive policy for $O_\varepsilon(1)$-frequent instances, where $\beta\approx 0.745$ is the optimal guarantee for the i.i.d.~prophet inequality problem \cite{CorreaFHOV21}. The difference to our work is that they study both ALG and OPT on ``frequent instances'', while we compare the algorithm in a $k$-frequent instance with the prophet on a standard ($1$-frequent) instance.
Another difference is that we work in the fixed order setting, whereas the results in \cite{AbolhassaniEEHK17,LiuLPSS21} apply to the random and free-order model.

\subsection{Organization}

The paper is organized as follows. We formally state our model and the competition complexity metric in Section~\ref{sec:model}. We also introduce single threshold algorithms and block threshold algorithms, and establish the optimality of the latter. In Section~\ref{sec:block} we present our results for block threshold algorithms. Afterwards, in Section~\ref{sec:static}, we turn our attention to single threshold algorithms. We discuss  combinatorial extensions in  Section~\ref{sec:extensions}. We defer the discussion of additional arrival orders 
to Appendix~\ref{sec:arrivalorder}.

%% Section: Preliminaries
\section{Preliminaries}\label{sec:model}

\paragraph{The Prophet Inequality Problem.}

Consider the following game between an online algorithm (``gambler'') and an offline algorithm (``the prophet'').  The online algorithm gets to observe a sequence of $n$ (non-negative) numbers $v_1, \ldots, v_n$ one-by-one. Each $v_i$ is drawn independently from a known distribution $F_i \in \Delta$, where $\Delta$ is the set of all distributions over $\reals_{\geq 0}$. 
We call a sequence of distributions $F_1, \ldots, F_n$ an instance. 

We refer to the online algorithm as ALG. Upon seeing a value $v_i$ the online algorithm has to immediately and irrevocably decide whether to accept the current value $v_i$ and stop the game, or to proceed to the next value $v_{i+1}$.  Every online algorithm induces a stopping time 
$\st \in [n] \cup \{\nil\}$, where $\st$ is the index of the value chosen by the online algorithm (to handle the case where the online algorithm does not accept any value, we set $\st = \nil$ and interpret $v_{\nil}$ as zero). The algorithm's expected reward is $\EE[\ALG(v)] = \EE[v_{\st}]$. The offline algorithm, in contrast, can see the entire sequence of values $v_1, \ldots, v_n$ at once and can simply choose the maximum value $\max_{i \in [n]} v_i$. The offline algorithm's expected reward is $\EE[\max_{i\in[n]}v_i]$.

In the Prophet Inequality problem the online algorithm is evaluated by its \emph{competitive ratio}, defined as the worst-possible ratio (over all instances)  between the online algorithm's expected reward $\EE[\ALG(v)] = \EE[v_{\st}]$ and the offline algorithm's expected reward   $\EE[\max_{i \in [n]} v_i]$. Let $\alpha \in [0,1]$. An online algorithm is $\alpha$-competitive if
\[
\inf_{F_1, \ldots, F_n\in \Delta} \frac{\EE_{v \sim (F_1,\ldots,F_n)}[\ALG(v)]}{\EE_{v \sim (F_1,\ldots,F_n)}[\max_{i \in [n]} v_i]} \geq \alpha. 
\]

\paragraph{The Competition Complexity Benchmark.} Our goal is to compare the expected performance of an online algorithm on $k$ independent copies of the original instance, to the expected value of the offline algorithm on a single instance. This can be done under different assumptions on how the $nk$ values of the $k$ copies are presented to the online algorithm. 

Our default model is what we call the \emph{block model}. In this model, the online algorithm sees the $k$ copies of the original instance one after the other. We refer to each copy as a block. Within each block, the $n$ values arrive in the same order as in the original instance.

More formally, when $(F_1, \ldots, F_n)$ is the original instance, we denote with $(F_1,\ldots,F_n)^k$ the instance with $k$ copies. 
The input to the online algorithm consists of $kn$ numbers 
$$v_1^{(1)}, \ldots, v_n^{(1)},v_1^{(2)}, \ldots, v_n^{(2)},\ldots,v_1^{(k)}, \ldots, v_n^{(k)},$$ where each $v_i^{(j)}$ for $i \in [n]$ and $j \in [k]$ is an independent draw from $F_i$. The offline algorithm receives only $n$ numbers $v_1, \ldots, v_n$ where each $v_i$ for $i \in [n]$ is an independent draw from $F_i$.

Let $\mathcal{A}$ be a family of online algorithms. Let $\mathcal{A}_{n,k}$ be defined as all the online algorithms in $\mathcal{A}$ that are defined on $\Delta_{n,k} = \{(F_1,\ldots,F_n)^k \mid F_1, \ldots, F_n \in \Delta\}$.

\begin{definition}[Competition complexity]\label{def:competition} Given $\varepsilon \ge 0$, the $(1-\varepsilon)$-competition complexity with respect to a class of algorithms $\mathcal{A}$ is the smallest positive integer number $k(\varepsilon)$ such that for every $n$, every $F_1, \ldots, F_n\in \Delta$, and every $k\ge k(\varepsilon)$, it holds that
\[
\max_{\ALG \in \mathcal{A}_{n,k}}\EE_{v \sim (F_1, \dots, F_n)^k}[\ALG(v)] \geq (1-\varepsilon) \cdot \EE_{v \sim (F_1, \dots, F_n)}[\max_{i \in [n]} v_i].
\]
\end{definition}

The case $\varepsilon = 0$ is also referred to as \emph{exact} competition complexity; it was shown by Brustle et al.~\cite[Theorem 2.1]{BrustleCDV22} that the exact competition complexity is unbounded even for the i.i.d.~case (namely, where $F_i=F_j$ for all $i,j$). So, naturally, our focus will be on the case $\varepsilon > 0$.

We remark that, as is common in the literature, we focus on a worst-case notion of competition complexity, which asks for the minimum number of copies that suffices for a worst-case approximation guarantee. In Appendix~\ref{sec:expectedruntime}, we explore an alternative, which asks for the expected number of copies that are required to achieve this.

\paragraph{Classes of Online Algorithms.} 
We are particularly interested in  three types of online algorithms. In order of increasing generality these are:
\begin{itemize}
\item A \emph{single threshold algorithm} is defined by a single threshold $\tau$ and it accepts the first value $v_\ell$ (indexed according to arrival order) that is at least $\tau$. 
\item A \emph{block threshold algorithm} sets $k$ thresholds $\boldsymbol\tau = (\tau_1, \ldots, \tau_k)$, i.e., one threshold per block. It accepts the first value $v_i^{(j)}$ that is larger than the threshold $\tau_j$ for its block.
\item A \emph{general threshold algorithm} sets $nk$ thresholds $\boldsymbol\tau = (\tau_1, \ldots, \tau_{n k})$, and accepts the first value $v_\ell$ (again, indexed according to arrival order) that is at least $\tau_{\ell}$.
\end{itemize}

In all of the above cases, we allow the algorithm to accept a value only with a certain probability in case it exactly meets the threshold. This ability to randomize is relevant only for distributions with point masses.
A standard backward-induction argument shows that the optimal online algorithm is a general threshold algorithm.

In the classic prophet inequality problem, a single threshold algorithm attains the best possible competitive ratio \cite{SamuelCahn84}. Proposition~\ref{prop:eq-class} generalizes this result, and shows that for the more general competition complexity benchmark, it is without loss to focus on block threshold algorithms.
\pde{The main idea behind this reduction} \vve{is that finding the competition complexity of general threshold algorithms is, essentially, equivalent to understanding the competition complexity for instances where every distribution is a weighted Bernoulli. For these instances, and within each block, only non-zero values of a suffix of the block are chosen, and therefore it is sufficient to implement a single threshold per block.
This implies that block threshold algorithms are as powerful as general threshold algorithms for the block model.}

\begin{restatable}{proposition}{propeqclass}
\label{prop:eq-class}
For every $\varepsilon \in (0,1) $, the $(1-\varepsilon)$-competition complexity with respect to the class of block threshold algorithms is the same as  the $(1-\varepsilon)$-competition complexity with respect to the class of general threshold algorithms.
\end{restatable}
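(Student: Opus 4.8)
The plan is to prove the two inequalities between the competition complexities. One direction is trivial: since every block threshold algorithm is a general threshold algorithm, the competition complexity with respect to block threshold algorithms is at least that with respect to general threshold algorithms. So the real content is the reverse inequality, namely that $k$ copies suffice for block threshold algorithms whenever $k$ copies suffice for general threshold algorithms. Equivalently, I would show: for every instance $(F_1,\dots,F_n)$ and every $k$, there is a block threshold algorithm on $(F_1,\dots,F_n)^k$ whose expected reward is at least that of the optimal general threshold algorithm on $(F_1,\dots,F_n)^k$ --- or at least within the slack needed for the $(1-\varepsilon)$ guarantee. The cleanest route is a reduction to weighted-Bernoulli instances, as hinted in the excerpt.

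First I would argue that it suffices to establish the competition complexity bound on instances where each $F_i$ is a weighted Bernoulli, i.e.\ $v_i \in \{0, w_i\}$ for some $w_i \ge 0$. The standard trick: replace each $F_i$ by the distribution of $\max_{j\in[k]} v_i^{(j)}$ conditioned appropriately, or more precisely discretize and use the fact that $\EE[\max_i v_i]$ and the performance of threshold algorithms are both determined by the ``profile'' of quantiles; one shows that the worst case for the ratio $\max_{\ALG}\EE[\ALG]/\EE[\max]$ is attained (in the limit) by weighted-Bernoulli instances. I would make this rigorous by a limiting/discretization argument: split the support of each $F_i$ into thin layers $[t,t+dt)$, treat each layer as an independent weighted Bernoulli with weight $t$, and check that both the offline optimum and the optimal general threshold algorithm's value on $k$ copies are (approximately) additive over these layers, so that a bound proven layer-by-layer lifts to the original instance. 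This is the step I expect to be the main obstacle --- making the reduction to Bernoullis fully rigorous without losing constants or the worst-case nature of the bound requires care, since one must ensure the general threshold algorithm cannot exploit correlations across layers that a block threshold algorithm cannot.

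Given the reduction, the core combinatorial claim is: on a weighted-Bernoulli instance $(F_1,\dots,F_n)^k$, the optimal general threshold algorithm can be simulated by a block threshold algorithm. Here the key observation in the excerpt does the work. Fix the optimal general thresholds $\tau_1,\dots,\tau_{nk}$. Within a single block $j$ (positions $(j-1)n+1,\dots,jn$), a variable at position $\ell$ with weight $w_i$ is ``acceptable'' iff $w_i \ge \tau_\ell$; since a weighted Bernoulli only ever takes the values $0$ or $w_i$, the general threshold algorithm within block $j$ simply accepts the first position $\ell$ in the block whose realized value is nonzero \emph{and} whose weight clears $\tau_\ell$. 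Now I would argue that we may assume the set of acceptable positions within block $j$ is a \emph{suffix} of the block: if position $\ell$ is acceptable but $\ell+1$ is not, one can raise $\tau_{\ell+1}$ or lower $\tau_\ell$ without decreasing the algorithm's value, because... (this is where the backward-induction / exchange argument goes --- moving acceptance opportunities later in a block is never harmful for the prophet-inequality objective on weighted Bernoullis, since an early rejection only forfeits value that a later acceptable position recovers in expectation, and one can formalize this by comparing the continuation values). Once the acceptable set is a suffix of each block, a single threshold per block --- namely the smallest $\tau_\ell$ over acceptable positions $\ell$ in block $j$ --- reproduces exactly the same acceptance behavior, giving a block threshold algorithm with the same expected reward.

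Finally I would assemble the pieces: on weighted-Bernoulli instances the optimal general threshold algorithm and the optimal block threshold algorithm have the same value on $k$ copies; by the reduction this extends (up to vanishing error) to all instances; hence the $(1-\varepsilon)$-threshold $k(\varepsilon)$ is the same for both classes. I would also double-check the randomized-tie-breaking caveat: on genuine weighted Bernoullis there are point masses at $w_i$, so the ``accept with probability $q$ when exactly at the threshold'' rule is exactly what lets the suffix-structure and single-per-block threshold reproduce the optimum; this is consistent with the statement in the excerpt that randomization matters only for point masses. A remark on why the suffix reduction can fail for \emph{general} (non-Bernoulli) distributions within a block --- there a middle position can carry an intermediate value worth stopping for even though a later position has a higher threshold --- would clarify why the Bernoulli reduction is doing essential work and is not merely a convenience.
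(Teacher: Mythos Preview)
Your high-level plan --- reduce to weighted-Bernoulli instances and show that on those the optimal general threshold algorithm is already a block threshold algorithm --- matches the paper's strategy. But there is a genuine gap in the core step, and your reduction is not the one that makes it close.

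The suffix/exchange argument you sketch is false for arbitrary weighted-Bernoulli instances. Take a block with weights $w_1=10$, $w_2=1$ and continuation value $5$ after the block: the optimal policy accepts position $1$ and rejects position $2$, so the acceptable set is $\{1\}$, not a suffix, and every suffix modification strictly hurts. So ``moving acceptance opportunities later in a block is never harmful'' is simply wrong without additional structure on the weights. The paper does not use an exchange argument here at all; instead it \emph{constructs} the Bernoulli instance so that the weights within each block are \emph{increasing} (the $i$-th variable has weight $i\cdot\alpha$). Combined with the fact that the optimal general thresholds are monotone decreasing in time (backward induction), this immediately forces the acceptable set in each block to be a suffix --- no exchange needed.

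The related issue is the reduction itself. Your proposal to layer each $F_i$ into thin Bernoulli slices does not naturally produce increasing weights within a block, and the claimed additivity of $\EE[\max]$ and of $\mathrm{ALG}$ over layers does not hold (the maximum is not additive over independent layers, and the stopping time couples them). The paper's reduction goes in a different direction: starting from a \emph{hard} instance $(F_1,\dots,F_n)$ for block thresholds, it discretizes the distribution of $\max_i v_i$ itself into levels $\alpha,2\alpha,\dots,m\alpha$, and builds variables $D_1,\dots,D_{m-1}$ (with $D_i$ supported on $\{0,i\alpha\}$) so that $\max_i w_i$ reproduces the discretized law of $\max_i v_i$. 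On this instance the optimal general threshold algorithm is a block threshold algorithm by the monotonicity observation above; that block threshold policy is then translated back to the original instance (thresholds $\alpha(1+\lfloor\tau^\star_\ell/\alpha\rfloor)$), and a short accounting with the discretization error $\alpha$ shows the Bernoulli instance is still hard for general thresholds with $k'-1$ copies. So the paper argues by contradiction on a specific constructed instance rather than proving a per-instance equivalence.

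In short: your outline is on the right track, but the missing idea is that the Bernoulli instance must be engineered to have \emph{increasing weights} within each block (via discretizing the law of the maximum), which is precisely what makes the suffix structure automatic. Your exchange argument cannot substitute for this.
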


\begin{proof}
Let $k'$ be the $(1-\varepsilon)$-competition complexity of the class of block threshold algorithms, and let $k$ be the $(1-\varepsilon)$-competition complexity of the class of general threshold algorithms.
Clearly $k' \geq k$, since every block threshold algorithm is a general threshold algorithm.
We next prove the other inequality.
By the definition of $k'$, there exists an instance  $(F_1,\ldots,F_n)$, such that for every block threshold algorithm $\ALG$, it holds that 
\begin{equation}
\label{eq:1-var}    
\EE_{v \sim (F_1, \dots, F_n)^{k'-1}}[\ALG(v)] < (1-\varepsilon) \cdot \EE_{v \sim (F_1, \dots, F_n)}[\max_{i \in [n]} v_i].
\end{equation}
Given a block threshold algorithm $\ALG$, let $$\alpha= \frac{1}{4}\Big((1-\varepsilon) \cdot \EE_{v \sim (F_1, \dots, F_n)}[\max_{i \in [n]} v_i] - \EE_{v \sim (F_1, \dots, F_n)^{k'-1}}[\ALG(v)]\Big),$$ 
and let 
$$\beta^* = \inf \Big\{\beta\ge 0 :  \EE_{v \sim (F_1, \dots, F_n)}[\max_{i \in [n]} v_i \cdot \indicator{\max_{i \in [n]} v_i \geq \beta} ]  \leq \alpha \Big\}.$$
Let $m = \lceil \beta^*/\alpha \rceil$, and for every $i\in [m]$, let $p_i = \Pr[(i-1)\cdot \alpha \leq  \max_{i \in [n]} v_i<  i \cdot \alpha]$.

For every $i\in \{1,\ldots,m-1\}$, let $D_i$ be the weighted Bernoulli distribution that takes the value $i\cdot \alpha $ with probability $p_{i+1}/(1-\sum_{j=i+2}^m p_j)$, and zero otherwise. 
Let  $M_1 = \max_{i\in [m-1]}w_i$ with $w \sim (D_1,\ldots,D_{m-1})$, and  $M_2 = \max_{i\in [n]}v_i$ with $v \sim (F_1,\ldots,F_{n})$. It holds that $M_1$ has the same distribution as $\lfloor M_2 \cdot \indicator{M_2 <m \cdot \alpha}/\alpha  \rfloor \cdot \alpha $, since for every $r\in [m-1]$ we have

\begin{align}
\Pr[M_1= r\cdot \alpha ] & =  \frac{p_{r+1}}{1-\sum_{j=r+2}^m p_j}  \cdot \prod_{r'=r+1}^{m-1} \left(1- \frac{p_{r'+1}}{1-\sum_{j=r'+2}^m p_j}\right) \nonumber\\  
&=   \frac{p_{r+1}}{1-\sum_{j=r+2}^m p_j}  \cdot \prod_{r'=r+1}^{m-1} \left( \frac{1-\sum_{j=r'+1}^m p_{j}}{1-\sum_{j=r'+2}^m p_j}\right)\nonumber\\ 
&= p_{r+1}=\Pr\left[\left\lfloor \frac{M_2 \cdot \indicator{M_2 <m \cdot \alpha}}{\alpha } \right\rfloor \cdot \alpha=r\cdot \alpha \right] . \label{eq:probs}
\end{align}
Thus, for the instance $(D_1,\ldots,D_{m-1})$, it holds that 
\begin{equation}
    \EE_{w \sim (D_1, \dots, D_{m-1})}[\textstyle\max_{i \in [m-1]} w_i] \geq \EE_{v \sim (F_1, \dots, F_n)}[\max_{i \in [n]} v_i] -2 \alpha,\label{eq:alpha}
\end{equation}
since an $\alpha$ term is lost due to values above $m\cdot \alpha \geq \beta^*$, and another $\alpha $ term is lost due to the flooring.
On the other hand, since the thresholds calculated by the best general threshold algorithm for $(D_1,\ldots,D_{m-1})^{k'-1}$ are monotonically decreasing, and since the distributions of each block are weighted Bernoulli variables with increasing weights, it holds that within each block, only non-zero values of a suffix of the block are chosen.
Thus, the optimal algorithm $\ALG^{\star}$ for instance $(D_1,\ldots,D_{m-1})^{k'-1}$ is a block threshold algorithm.

Let $\tau_1^{\star}\geq \ldots\geq \tau_{k'-1}^{\star}$ be the monotone decreasing thresholds used by $\ALG^{\star}$. 
Consider the algorithm $\ALG'$ that selects a value in block $\ell$  if it is in the interval $[\alpha \cdot (1+\lfloor \tau^{\star}_\ell/\alpha\rfloor), m\cdot \alpha)$. By Equation~\eqref{eq:probs}, the probability that $\ALG'$ selects an element in every iteration is the same as  $\ALG^{\star}$, and given that both algorithms stop at block $\ell$, the expectation of $\ALG'$ is at least  the expectation of $\ALG^{\star}$.
Thus, $$\EE_{w \sim (D_1, \dots, D_{m-1})^{k'-1}}[\ALG^{\star}(w)] \leq \EE_{v \sim (F_1, \dots, F_n)^{k'-1}}[\ALG'(v)] \leq \EE_{v \sim (F_1, \dots, F_n)^{k'-1}}[\ALG(v)],$$ where the second inequality is by noticing that if we select values above $m\cdot\alpha$, it can only improve the performance, and the algorithm becomes a block threshold algorithm.
Therefore, 
\begin{align*}
& (1-\varepsilon)\cdot \EE_{w\sim (D_1, \dots, D_{m-1})}[\textstyle\max_{i \in [m-1]} w_i] - \EE_{w \sim (D_1, \dots, D_{m-1})^{k'-1}}[\ALG^{\star}(w)] \\ 
&\ge (1-\varepsilon)\cdot \EE_{w \sim (D_1, \dots, D_{m-1})}[\textstyle\max_{i \in [m-1]} w_i] - \EE_{v \sim (F_1, \dots, F_n)^{k'-1}}[\ALG(v)] \\
& = (1-\varepsilon)\cdot \EE_{w \sim (D_1, \dots, D_{m-1})}[\textstyle\max_{i \in [m-1]} w_i] -  (1-\varepsilon) \cdot \EE_{v \sim (F_1, \dots, F_n)}[\max_{i \in [n]} v_i] + 4\alpha \\ 
& \geq  (1-\varepsilon)  (-2\alpha) +4\alpha > 0,  
\end{align*}
where the equality is by definition of $\alpha$, the first inequality is by Equation~\eqref{eq:alpha}, and the last inequality is since $\varepsilon<1$, and since by Equation~\eqref{eq:1-var}, $\alpha >0$.
Thus, $k> k'-1$, which concludes the proof.
\end{proof}

\paragraph{Beyond the Block Model.} Our default model is the block model, according to which we repeat an instance $k$ times, keeping the arrival order within each block the same. Our results continue to hold, even if the arrival order within each block is arbitrary. We discuss this and additional models along with implications for our results 
in Appendix~\ref{sec:arrivalorder}.

\medskip

As is standard in the literature, to simplify the presentation, we generally assume that the distributions do not admit point masses (i.e., $F_i$ is continuous for every $i\in [n]$). In Appendix~\ref{app:pointmass}, we discuss how to adjust our algorithms for cases with point masses.

The no point masses assumption simplifies the exposition because in the absence of point masses every quantile is associated with a threshold. With point masses, this is not necessarily the case, but the problem can be resolved using randomization.

\section{Block Threshold Algorithms}\label{sec:block}

In this section, we study the competition complexity of the class of block threshold algorithms. 
The following is the main result of this section.

\begin{theorem}\label{thm:dynamic}
For every $\varepsilon>0$ the
 $(1-\varepsilon)$-competition complexity of the class of block threshold algorithms is $\Theta\left(\log\log(1/\varepsilon)\right)$.
\end{theorem}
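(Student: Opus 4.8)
### Proof Proposal for Theorem~\ref{thm:dynamic}

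\textbf{High-level plan.}\medskip
\noindent\textbf{Proof sketch (plan).}
The statement has two directions, and only one of them requires work. For the lower bound $\Omega(\log\log(1/\varepsilon))$ I would simply invoke the i.i.d.\ construction of Brustle et al.~\cite{BrustleCDV22}: there is an i.i.d.\ instance on which even the optimal online algorithm---hence, a fortiori, the optimal \emph{block} threshold algorithm, since block thresholds form a subclass of general thresholds (and by Proposition~\ref{prop:eq-class} the two classes have the same competition complexity anyway)---needs $\Omega(\log\log(1/\varepsilon))$ copies to reach a $(1-\varepsilon)$ fraction of the single-copy expected maximum. This needs no new argument.

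The content is the upper bound $O(\log\log(1/\varepsilon))$. First I would fix an instance $(F_1,\dots,F_n)$ and a decreasing sequence of thresholds $\tau_1>\tau_2>\dots>\tau_k$, and analyze the block threshold algorithm that uses $\tau_j$ in block $j$. The goal of this step is the approximate stochastic dominance inequality of Lemma~\ref{lem:dynamic-1}: I want to show that the law of the value $\ALG(v)$ collected over the $k$ copies dominates, up to a multiplicative factor depending only on $\tau_1,\dots,\tau_k$, a rescaled copy of the single-copy maximum $\max_{i\in[n]} v_i$. To get this, I would cut $[0,\infty)$ into the $k+1$ windows $[0,\tau_k),[\tau_k,\tau_{k-1}),\dots,[\tau_2,\tau_1),[\tau_1,\infty)$ determined by the thresholds, condition on the window into which the single-copy maximum falls, and on each window run a Samuel--Cahn/``median-rule''-type estimate over the relevant suffix of blocks: on window $[\tau_{j+1},\tau_j)$ the algorithm still has blocks $1,\dots,j$ and their decreasing thresholds available, it reaches block $\ell$ with probability $\prod_{m<\ell}\Pr[\text{block }m\text{ has no value}\ge\tau_m]$, and each reached block contributes a $(v_i-\tau_\ell)^+$-type term. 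Summing these window by window gives, for each window, the best stochastic-dominance factor compatible with the chosen thresholds, and the overall factor is the minimum over windows. (The reduction in Proposition~\ref{prop:eq-class} lets me assume, when convenient, that the $F_i$ are weighted Bernoullis, which makes the ``non-zero suffix of a block'' structure transparent; point masses are handled by the randomization reduction noted in the preliminaries.)

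Given the window-wise factors from the first step, bounding the competition complexity reduces to choosing $\tau_1,\dots,\tau_k$ to maximize the minimum window factor---a $\max$--$\min$ problem whose objective is a high-degree polynomial in the survival probabilities $q_j=\Pr[\max_{i\in[n]} v_i\ge\tau_j]$. I would not solve this exactly. Instead, following Lemma~\ref{lem:dynamic-2}, I would exhibit an explicit threshold sequence---equivalently, an explicit increasing quantile sequence $q_1<q_2<\dots<q_k$ chosen so that the residual error obeys a quadratic-type recursion $\varepsilon_{j+1}\lesssim\varepsilon_j^{2}$ (doubly-exponential decay)---and check directly that with $k=O(\log\log(1/\varepsilon))$ blocks every one of the $k+1$ window factors is at least $1-\varepsilon$. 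Substituting this choice into the inequality from the first step and taking expectations yields $\EE[\ALG(v)]\ge(1-\varepsilon)\,\EE[\max_{i\in[n]}v_i]$ for every instance, which is exactly the claimed upper bound; combining with the lower bound gives $\Theta(\log\log(1/\varepsilon))$.

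The step I expect to be the crux is the second one: pinning down the right recursion on the quantiles that simultaneously (a) provably keeps all $k+1$ window factors above $1-\varepsilon$ and (b) terminates in $k=O(\log\log(1/\varepsilon))$ steps---that is, proving the $\max$--$\min$ value climbs doubly-exponentially fast. A secondary wrinkle is that the benchmark is a \emph{fresh} independent copy rather than one of the $k$ blocks; I would absorb this by using that all copies are identically distributed, so $\EE[\max_{i\in[n]}v_i]$ equals the expected block-maximum and the window decomposition above applies verbatim.
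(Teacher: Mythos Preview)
Your plan is correct and follows the paper's proof: the lower bound is quoted from \cite{BrustleCDV22}, while the upper bound is exactly Lemma~\ref{lem:dynamic-1} (window-wise approximate stochastic dominance, case-splitting on which interval $[\tau_j,\tau_{j-1})$ contains the tail point $x$) plugged into Lemma~\ref{lem:dynamic-2} (an explicit doubly-exponential quantile schedule; the paper takes $1-p_\ell=(5/4)^{-(5/4)^\ell}$, so the recursion exponent is $5/4$ rather than $2$, but the effect is the same). The only wobbles in your sketch are that the window split is on the tail point $x$, not on ``where the single-copy maximum falls,'' that on each window the paper uses \emph{all} $k$ blocks (blocks with $\tau_\ell>x$ contribute $p_\ell/p_{\ell(\boldsymbol\tau,x)}$ and blocks with $\tau_\ell\le x$ contribute a Samuel--Cahn-style $(1-p_\ell)$ factor, so $\Phi_i$ has two sums), and that the weighted-Bernoulli reduction of Proposition~\ref{prop:eq-class} is not invoked in the upper-bound argument.
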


In what follows, given $\boldsymbol\tau=(\tau_1,\ldots,\tau_k)$ we denote by $\algt{{\boldsymbol \tau}}$ the block threshold algorithm such that for every copy $j\in \{1,\ldots,k\}$ it selects the first element that exceeds $\tau_j$, if such element exists. 
We denote $\tau_0=\infty$ and $\tau_{k+1}=0$. 

The remainder of this section is organized as follows. 
As a warm-up, we start by resolving the case of $k=2$, which motivates and presents the main ideas of our approach. 
In Section~\ref{sec:reduction} we 
establish a lemma that allows us to evaluate the performance of any block threshold algorithm $\algt{{\boldsymbol \tau}}$ by establishing an approximate stochastic dominance inequality that extends the case of $k=2$ to every $k$.
This inequality is parameterized by the thresholds $\tau= (\tau_1, \ldots, \tau_k)$. 
In order to get the best possible approximation factor, we need to solve the induced max-min problem.
In Section~\ref{sec:feasible} we solve the corresponding max-min problem by presenting an explicit feasible solution.
Finally, in Section~\ref{sec:full-proof}, we present the full proof of Theorem~\ref{thm:dynamic}, based on the ingredients established in previous sections.\\

\noindent{{\it Warm-up: The case of $k=2$}.} To design a block threshold algorithm for $k=2$, we compute the thresholds $\tau_1,\tau_2$ by finding the appropriate quantiles of the distribution of $\max_{i\in [n]}v_i$, with $v_i\sim F_i$ for every $i\in \{1,\ldots,n\}$.
More specifically, for $j\in \{1,2\}$, let $p_j=\Pr_{v \sim (F_1, \dots, F_n)}[\max_{i\in [n]}v_i \geq \tau_j]$, where $\tau_1$ and $\tau_2$ will be determined by specifying the values of $p_1$ and $p_2$.
By our assumption of $F_1,\ldots,F_n$ having no point masses, any $p_1,p_2 \in [0,1]$ corresponds to a pair of thresholds $\tau_1$ and $\tau_2$. 
To establish the competition complexity result involving the expectations of $\algt{\boldsymbol{\tau}}$ and $\max_{i\in [n]}v_i$, our goal is to state an approximate stochastic dominance result between these two random variables.
By setting $\tau_1\ge \tau_2$ and such that $p_1>0$, we can show that 
\begin{align}
    \Pr_{v \sim (F_1, \dots, F_n)^2}[\algt{{\boldsymbol \tau}} (v) \geq x] &\ge \phi_1(p_1,p_2)\Pr_{v \sim (F_1, \dots, F_n)}[\max_{i \in [n]} v_i \geq x] \quad \text{ when }x\ge \tau_1,\nonumber\\
    \Pr_{v \sim (F_1, \dots, F_n)^2}[\algt{{\boldsymbol \tau}} (v) \geq x] &\ge \phi_2(p_1,p_2)\Pr_{v \sim (F_1, \dots, F_n)}[\max_{i \in [n]} v_i \geq x] \quad \text{ when }\tau_1>x\ge \tau_2,\nonumber\\
    \Pr_{v \sim (F_1, \dots, F_n)^2}[\algt{{\boldsymbol \tau}} (v) \geq x] &\ge \phi_3(p_1,p_2)\Pr_{v \sim (F_1, \dots, F_n)}[\max_{i \in [n]} v_i \geq x] \quad \text{ when }\tau_2>x,\label{max-min-2}
\end{align}
where $\phi_1(p_1,p_2)=1-p_1+(1-p_1)(1-p_2)$, $\phi_2(p_1,p_2)=p_1/p_2+(1-p_1)(1-p_2)$, and $\phi_3(p_1,p_2)=p_1+p_2(1-p_1)$.
In particular, we get that
\begin{equation*}
\Pr_{v \sim (F_1, \dots, F_n)^2}[\algt{{\boldsymbol \tau}} (v) \geq x] \ge \min\Big\{\phi_1(p_1,p_2),\phi_2(p_1,p_2),\phi_3(p_1,p_2)\Big\}\Pr_{v \sim (F_1, \dots, F_n)}[\max_{i \in [n]} v_i \geq x]
\end{equation*}
for every $x\ge 0$.
The above inequality is stated in its general form for every $k$ in Lemma \ref{lem:dynamic-1}.
Then, in order to get the best possible approximation factor in this way, we solve the following max-min optimization problem: 
$$\max\Big\{\min\Big\{\phi_1(p_1,p_2),\phi_2(p_1,p_2),\phi_3(p_1,p_2)\Big\}:0<p_1\le p_2,\; p_1,p_2\in [0,1]\Big\}.$$
It can be shown that the optimal solution for this problem is attained at $p_1=2/5$ and $p_2 = 2/3$, which yields a factor of $4/5$.
For the case of general $k$, we do not solve exactly this max-min problem as it is a high-dimensional polynomial optimization problem,
but we construct explicitly a feasible solution that yields the optimal competition complexity guarantee up to a constant factor.
This is formalized in Lemma \ref{lem:dynamic-2}.

\medskip

\subsection{Reduction to Max-Min Problem via Approximate Stochastic Dominance}
\label{sec:reduction}
We start by showing a lemma that allows us to evaluate the performance of any block threshold algorithm $\algt{{\boldsymbol \tau}}$ by establishing an approximate stochastic dominance inequality that extends \eqref{max-min-2} to every $k$.
When $k=2$, the inequalities in \eqref{max-min-2} are obtained by splitting the domain of $x$ according to the three different sub-intervals defined by the two thresholds.
We exploit this idea to get an approximate stochastic dominance inequality by studying each of the $k+1$ ranges defined by the $k$ block thresholds.

More specifically, given $F_1,\ldots,F_n\in \Delta$, and $ \boldsymbol\tau=(\tau_1,\ldots,\tau_k)$ such that $p_0=0$ and $p_{\ell} =\Pr_{v \sim (F_1, \dots, F_n)}[\max_{j\in [n]}v_j \geq \tau_{\ell}]>0$ for every $\ell\in \{1,2,\ldots,k\}$, let
\begin{align*}
    \Phi_1(F_1,\ldots,F_n,\boldsymbol\tau)&=\sum_{\ell=1}^{k}\prod_{j=1}^{\ell}(1-p_j),\\
    \Phi_i(F_1,\ldots,F_n,\boldsymbol\tau)&=\frac{1}{p_i}\sum_{\ell=1}^{i-1}p_{\ell}\prod_{j=0}^{\ell-1}(1-p_j) + \sum_{\ell=i}^{k}\prod_{j=0}^{\ell}(1-p_j) \quad \text{for every $i\in \{2,\ldots,k\}$}, \text{ and}\\
    \Phi_{k+1}(F_1,\ldots,F_n,\boldsymbol\tau)&=\sum_{\ell=1}^{k}p_{\ell}\prod_{j=0}^{\ell-1}(1-p_j).
\end{align*} 
Let $\Phi(F_1,\ldots,F_n,\boldsymbol\tau) =\min_{i\in \{1,\ldots,k+1\}}\Phi_i(F_1,\ldots,F_n,\boldsymbol\tau).$
We say that $ \boldsymbol\tau=(\tau_1,\ldots,\tau_k)$ is {\it decreasing} if $\tau_j > \tau_{j+1}$ for every $j\in \{1,\ldots,k\}$.

\begin{lemma}\label{lem:dynamic-1}
For every $F_1,\ldots,F_n\in \Delta$, every $x\ge 0$, and every decreasing $\boldsymbol\tau=(\tau_1,\ldots,\tau_k)$ such that $\Pr_{v \sim (F_1, \dots, F_n)}[\max_{j\in [n]}v_j \geq \tau_1]>0$, we have
\begin{equation*}
\Pr_{v \sim (F_1, \dots, F_n)^k}[\algt{{\boldsymbol \tau}} (v)\geq x] \ge \Phi(F_1,\ldots, F_n,\boldsymbol\tau)\Pr_{v \sim (F_1, \dots, F_n)}[\max_{i \in [n]} v_i \geq x]. 
\end{equation*}
\end{lemma}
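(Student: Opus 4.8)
The plan is to fix a decreasing threshold vector $\boldsymbol\tau=(\tau_1,\ldots,\tau_k)$ and prove the claimed stochastic dominance inequality separately on each of the $k+1$ intervals $[\tau_k,\infty)$ is split into by the thresholds; concretely, for $x$ in the interval $[\tau_i,\tau_{i-1})$ (with $\tau_0=\infty$, $\tau_{k+1}=0$), I will show that $\Pr[\algt{\boldsymbol\tau}(v)\ge x]\ge \Phi_i(F_1,\ldots,F_n,\boldsymbol\tau)\cdot\Pr[\max_i v_i\ge x]$, and then take the minimum over $i$. The key observation driving the computation is that, since the thresholds are decreasing, the event that $\algt{\boldsymbol\tau}$ selects \emph{some} value in block $\ell$ (i.e. the max of block $\ell$ exceeds $\tau_\ell$, but no earlier block triggered) has probability $p_\ell\prod_{j=1}^{\ell-1}(1-p_j)$, where $p_\ell=\Pr[\max_i v_i\ge \tau_\ell]$ — here I use that the $k$ blocks are i.i.d.\ copies of the single-copy instance, so "block $\ell$ triggers" is exactly the event $\{\max_i v_i^{(\ell)}\ge \tau_\ell\}$, independent across $\ell$.

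First I would write the left-hand side by conditioning on which block $\ell$ the algorithm stops in. Given that the algorithm stops in block $\ell$, its accepted value is the first entry of block $\ell$ exceeding $\tau_\ell$; I will need the one-copy fact that $\Pr[\text{first value}\ge \tau_\ell \text{ exceeds } x\mid \text{some value}\ge \tau_\ell]$ compares favorably to $\Pr[\max_i v_i\ge x]$. The clean way to handle this is the standard observation (going back to Samuel-Cahn-type arguments) that for $x\ge\tau_\ell$, $\Pr[\algt{\boldsymbol\tau}\text{ accepts a value}\ge x \text{ in block }\ell] \ge \frac{p_\ell}{\text{something}}\cdot\Pr[\max v_i\ge x]$ when we already know the max of that block is $\ge x\ge \tau_\ell$, while for $x\le \tau_\ell$ the event $\{\text{block }\ell\text{ triggers}\}$ already guarantees the accepted value is $\ge \tau_\ell\ge x$, contributing the full probability $p_\ell\prod_{j<\ell}(1-p_j)$. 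Splitting the sum over blocks at $\ell=i$ — blocks $\ell<i$ have $\tau_\ell\ge\tau_i> x$ so contribute fully, blocks $\ell\ge i$ have $\tau_\ell\le x$ (roughly) and need the quantile comparison — is exactly what produces the two-part formula defining $\Phi_i$: the $\frac{1}{p_i}\sum_{\ell<i}p_\ell\prod(1-p_j)$ term from the "small-$\ell$" blocks after dividing through by $\Pr[\max v_i\ge x]\le p_i$, and the $\sum_{\ell\ge i}\prod_{j\le\ell}(1-p_j)$ term (note the product now runs to $\ell$, picking up the extra $(1-p_\ell)$) from the "large-$\ell$" blocks. The $\Phi_1$ and $\Phi_{k+1}$ formulas are the boundary cases $x\ge\tau_1$ and $x<\tau_k$ respectively, where one of the two sums is empty.

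The main obstacle I anticipate is the middle-interval bookkeeping: getting the index ranges, the off-by-one in the products ($\prod_{j=0}^{\ell-1}$ versus $\prod_{j=0}^{\ell}$, with $p_0=0$ a harmless convention), and the division by $p_i$ exactly right, and in particular justifying the per-block comparison $\Pr[v^{(\ell)}_{\text{first}\ge\tau_\ell}\ge x]\ge \frac{1}{p_i}\Pr[\max v_i\ge x]\cdot p_\ell$ for $\tau_i\le x<\tau_{i-1}$ and $\ell<i$. This comes down to the elementary fact that within a single copy, conditioned on stopping (max $\ge\tau_\ell$), the distribution of the selected value stochastically dominates nothing in general — but we only need the one-sided bound $\Pr[\text{selected}\ge x]\ge \Pr[\max_i v_i\ge x]-(1-p_\ell)=\Pr[\max v_i\ge x]-1+p_\ell$ is \emph{not} what we want either; rather the correct and simple bound is that the selected value in a triggered block is $\ge\tau_\ell\ge\tau_i> x$ is false when $x\ge\tau_i$... so in fact for $\ell<i$ we have $\tau_\ell\ge\tau_{i-1}>x$, hence a triggered block $\ell<i$ automatically yields an accepted value $\ge\tau_\ell>x$, contributing its full probability $p_\ell\prod_{j<\ell}(1-p_j)$ — no conditioning subtlety at all. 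The only genuine estimate is for $\ell\ge i$ where $\tau_\ell\le\tau_i\le x$: there I would bound $\Pr[\algt{\boldsymbol\tau}\text{ accepts }\ge x\text{ in block }\ell]\ge \prod_{j<\ell}(1-p_j)\cdot\Pr[\max_i v^{(\ell)}_i\ge x] = \prod_{j=0}^{\ell-1}(1-p_j)\cdot\Pr[\max v_i\ge x]$, using that if block $\ell$'s max is $\ge x\ge\tau_\ell$ then block $\ell$ triggers and its first value $\ge\tau_\ell$ is in particular $\ge x$ only if... no: I must be careful that the \emph{first} value exceeding $\tau_\ell$ need not be the max. The resolution is to use the weaker but sufficient bound $\Pr[\text{block }\ell\text{ accepts }\ge x] \ge \Pr[\exists i: v^{(\ell)}_i\ge x \text{ and blocks }1,\ldots,\ell-1\text{ don't trigger}] = \Pr[\max_i v_i\ge x]\prod_{j<\ell}(1-p_j)$, which holds because $x\ge\tau_\ell$ means any value $\ge x$ already exceeds $\tau_\ell$ and so would be accepted (if it's the first such) — and if it's not the first, an even earlier value $\ge\tau_\ell$ was accepted, but that one might be $<x$. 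So the honest bound needs the event "block $\ell$ triggers AND the first trigger value is $\ge x$", which for $x\ge\tau_\ell$ contains "no value $<x$ in positions before the first value $\ge x$, among those $\ge\tau_\ell$" — this is where a short monotonicity/coupling argument is required, and I expect this to be the one place needing real care rather than mechanical summation. I would set it up via the single-copy identity: for $x\ge\tau$, $\Pr[\text{first value}\ge\tau\text{ is also}\ge x]\ge\Pr[\max v_i\ge x]$, which is true because the first value $\ge\tau$ is $\ge x$ iff no value in $[\tau,x)$ appears before the first value $\ge x$; conditioning on the (nonempty) set of coordinates that are $\ge x$ and using independence across the remaining coordinates, the first-$\ge\tau$ coordinate lands on a $\ge x$ coordinate with probability at least the unconditional $\Pr[\max\ge x]/\Pr[\max\ge\tau]$-type expression — I'd make this precise but it is a routine prophet-style manipulation. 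Assembling the three cases and taking $\min_i\Phi_i$ finishes the proof.
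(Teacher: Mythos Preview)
Your high-level plan is exactly the paper's: condition on the block $\ell$ in which the algorithm stops, split according to whether $\tau_\ell>x$ or $\tau_\ell\le x$, and assemble the two sums defining $\Phi_i$. The case $\ell<i$ (so $\tau_\ell\ge\tau_{i-1}>x$) is also handled correctly: a triggered block automatically returns a value $\ge\tau_\ell>x$, contributing $p_\ell\prod_{j<\ell}(1-p_j)$, and dividing by $\Pr[\max_i v_i\ge x]\le p_i$ gives the first sum in $\Phi_i$.

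The gap is in the ``single-copy identity'' you propose for $\ell\ge i$. The claim
\[
\Pr[\text{first value}\ge\tau_\ell\text{ is also}\ge x]\;\ge\;\Pr[\max_i v_i\ge x]
\]
is false. Take $n=2$, $v_1$ deterministically equal to some point in $[\tau_\ell,x)$, and $v_2$ deterministically $\ge x$: the first value $\ge\tau_\ell$ is $v_1<x$, so the left side is $0$ while the right side is $1$. You even note (correctly) that the target formula needs the product to run up to $\ell$, picking up an extra $(1-p_\ell)$; your proposed identity would give the product only up to $\ell-1$, so the proposal is internally inconsistent on this point. The ``conditioning on the set of coordinates $\ge x$'' heuristic does not rescue it, because the positions in $[\tau_\ell,x)$ can sit arbitrarily early in the order.

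What the paper does instead for $\ell\ge i$ is to lower bound by a much smaller event: \emph{some $v_i^{(\ell)}\ge x$ and every other $v_j^{(\ell)}<\tau_\ell$}. On this event no value in $[\tau_\ell,x)$ can intercept, so the first value $\ge\tau_\ell$ is indeed $\ge x$. By independence and a union bound this event has probability
\[
\sum_{i}\Pr[v_i\ge x]\prod_{j\ne i}\Pr[v_j<\tau_\ell]\;\ge\;(1-p_\ell)\sum_i\Pr[v_i\ge x]\;\ge\;(1-p_\ell)\Pr[\max_i v_i\ge x],
\]
which supplies exactly the missing $(1-p_\ell)$. Plugging this into your decomposition recovers $\Phi_i$ as stated.
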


\begin{proof}
Given $x\ge 0$, let 
$\ell(\boldsymbol\tau,x)=\min\{\ell\in \{1,\ldots,k+1\}:x\ge \tau_{\ell}\}$, that is, $\ell(\boldsymbol\tau,x)$ is the first block $\ell$ for which $x$ is at least the threshold $\tau_{\ell}$.
Let $\calB$ be defined as follows: If there exists $\ell\in \{1,\ldots,k\}$ such that $\max_{i\in [n]}v_i^{(\ell)}>\tau_{\ell}$, then $$\mathcal{B}=\min\Big\{\ell\in \{1,\ldots,k\}:\max_{i\in [n]}v_i^{(\ell)}>\tau_{\ell}\Big\},$$
and $\calB=k+1$ otherwise.
That is, $\calB$ is equal to the first block for which there exists a value in the block that surpasses the block threshold, and it is equal $k+1$ in case such value does not exist.
Recall that we denote $p_0=0$ and $p_{\ell} =\Pr_{v \sim (F_1, \dots, F_n)}[\max_{j\in [n]}v_j \geq \tau_{\ell}]$ for every $\ell\in \{1,2,\ldots,k+1\}$; in particular $p_{k+1}=1$.
The following holds:

\begin{align}
    &\Pr_{v \sim (F_1, \dots, F_n)^k}[\algt{{\boldsymbol \tau}} (v)\geq x]\nonumber\\
    &=\sum_{\ell=1}^{k}\Pr_{v \sim (F_1, \dots, F_n)^k}[\algt{{\boldsymbol \tau}} (v)\geq x \text{ and } \calB = \ell]\nonumber\\
    &=\sum_{\ell=1}^{k}\Pr_{v \sim (F_1, \dots, F_n)^k}[\algt{{\boldsymbol \tau}} (v)\geq x \text{ and } \max_{i \in [n]}v_i^{(\ell)} \geq \tau_{\ell} \mid \calB> \ell-1]\Pr_{v \sim (F_1, \dots, F_n)^k}[\calB> \ell-1]\nonumber\\
    &\geq\sum_{\ell=1}^{k}\Pr_{v \sim (F_1, \dots, F_n)^k}[v_j^{(\ell)} \notin [\tau_{\ell},x)  \text{ for all } j\in [n] \text{ and } \max_{i \in [n]}v_i^{(\ell)} \geq \tau_{\ell} \mid \calB> \ell-1] \prod_{j=0}^{\ell-1}(1-p_j)\nonumber\\
    &=\sum_{\ell=1}^{k}\Pr_{v^{(\ell)} \sim (F_1, \dots, F_n)}[v_j^{(\ell)} \notin [\tau_{\ell},x)  \text{ for all } j\in [n] \text{ and } \max_{i \in [n]}v_i^{(\ell)} \geq \tau_{\ell}] \prod_{j=0}^{\ell-1}(1-p_j),\label{lem1-summation}
\end{align}
where in the second equality, we use that $\calB=\ell$ is equivalent to the algorithm not stopping before reaching the block $\ell$, and in block $\ell$ having $\max_{i \in [n]}v_i^{(\ell)} \geq \tau_{\ell}$. 
For the inequality, we observe that conditioned on $\calB> \ell-1$, the intersection of the event $\algt{{\boldsymbol \tau}} (v)\geq x$ with $\max_{i \in [n]}v_i^{(\ell)} \geq \tau_{\ell}$ contains $v_j^{(\ell)} \notin [\tau_{\ell},x)  \text{ for all } j\in [n]$. The final equality follows from independence across copies.

Suppose that $\ell(\boldsymbol{\tau},x)\ge 2$ and let $\ell\in \{1,\ldots, \ell(\boldsymbol{\tau},x)-1\}$.
In particular, we have $\tau_{\ell}>x$,
and therefore,
\begin{align}
    &\Pr_{v \sim (F_1, \dots, F_n)^k}[v_j^{(\ell)} \notin [\tau_{\ell},x)  \text{ for all } j\in [n] \text{ and } \max_{i \in [n]}v_i^{(\ell)} \geq \tau_{\ell}]\nonumber\\
    &= \Pr_{v^{(\ell)} \sim (F_1, \dots, F_n)}[\max_{i\in [n] }v_i^{(\ell)} \geq \tau_\ell]\nonumber\\
    &= \Pr_{v \sim (F_1, \dots, F_n)}[\max_{i\in [n] }v_i \geq x]\cdot \frac{\Pr_{v \sim (F_1, \dots, F_n)}[\max_{i\in [n]}v_i \geq \tau_\ell]}{\Pr_{v \sim (F_1, \dots, F_n)}[\max_{i\in [n] }v_i \geq x]}\nonumber \\
    &\ge \Pr_{v \sim (F_1, \dots, F_n)}[\max_{i\in [n] }v_i \geq x]\cdot \frac{\Pr_{v \sim (F_1, \dots, F_n)}[\max_{i\in [n]}v_i \geq \tau_\ell]}{\Pr_{v \sim (F_1, \dots, F_n)}[\max_{i\in [n] }v_i \geq \tau_{\ell(\boldsymbol\tau,x)}]} \nonumber\\
    &= \Pr_{v \sim (F_1, \dots, F_n)}[\max_{i\in [n] }v_i \geq x]\cdot\frac{p_\ell}{p_{\ell(\boldsymbol\tau,x)}},\label{lem1-case1}
\end{align}
where the  inequality holds since $x\geq \tau_{\ell(\boldsymbol\tau,x)}$.

Now suppose that $\ell(\boldsymbol{\tau},x)\le k$ and let $\ell\in \{\ell(\boldsymbol{\tau},x),\ldots,k\}$.
In particular, we have $x\geq\tau_{\ell}$, and therefore, 
\begin{align}
    &\Pr_{v^{(\ell)} \sim (F_1, \dots, F_n)}[v_j^{(\ell)} \notin [\tau_{\ell},x)  \text{ for all } j\in [n] \text{ and } \max_{i \in [n]}v_i^{(\ell)} \geq \tau_{\ell}]\nonumber\\
    &\ge \Pr_{v^{(\ell)} \sim (F_1, \dots, F_n)}[v_j^{(\ell)} \notin [\tau_{\ell},x)  \text{ for all } j\in [n]\text{ and }\max_{i \in [n]}v_i^{(\ell)} \geq x]\nonumber\\
    &\ge  \sum_{i=1}^n\Pr_{v \sim (F_1, \dots, F_n)}[v_i \geq x\text{ and } v_j < \tau_{\ell} \text{ for all } j \in [n]\setminus\{i\}]\nonumber\\
    &=\sum_{i=1}^n\Pr_{v \sim (F_1, \dots, F_n)}[v_i \geq x]\Pr_{v \sim (F_1, \dots, F_n)}[v_j < \tau_{\ell} \text{ for all } j \in [n] \setminus\{i\} ]\nonumber\\
    &\ge \sum_{i=1}^n\Pr_{v \sim (F_1, \dots, F_n)}[v_i \geq x]\Pr_{v \sim (F_1, \dots, F_n)}[\max_{j\in [n]}v_j<\tau_{\ell}]\nonumber\\
    &=(1-p_{\ell})\sum_{i=1}^n\Pr_{v \sim (F_1, \dots, F_n)}[v_i \geq x]\nonumber\\
    &\ge (1-p_{\ell})\Pr_{v \sim (F_1, \dots, F_n)}[\max_{i\in [n]}v_i\ge x],\label{lem1-case2}
\end{align}
where the first inequality holds since $x\ge \tau_{\ell}$;
the second inequality holds since the summation in the third line is made of disjoint events whose union has a probability that lower bounds the probability of the second line;
the first equality holds by independence across the values;
the third inequality holds since the upper bound on $\max_{j\in [n]}v_j$ implies the corresponding event in the third line,
and the last inequality holds by the union bound.

Then, when $\ell(\boldsymbol{\tau},x)=1$, from \eqref{lem1-summation} and \eqref{lem1-case2} we get 
\begin{align*}
    \Pr_{v \sim (F_1, \dots, F_n)^k}[\algt{{\boldsymbol \tau}} (v)\geq x]&\ge \Pr_{v \sim (F_1, \dots, F_n)}[\max_{i\in [n]}v_i\ge x] \sum_{\ell=1}^{k}(1-p_{\ell})\prod_{j=0}^{\ell-1}(1-p_j)\\
    &=\Pr_{v \sim (F_1, \dots, F_n)}[\max_{i\in [n]}v_i\ge x]\sum_{\ell=1}^{k}\prod_{j=0}^{\ell}(1-p_j)\\
    &=\Phi_1(F_1,\ldots,F_n,\boldsymbol\tau)\Pr_{v \sim (F_1, \dots, F_n)}[\max_{i\in [n]}v_i\ge x].
\end{align*}
When $\ell(\boldsymbol{\tau},x)\in \{2,\ldots,k\}$, from \eqref{lem1-summation}, \eqref{lem1-case1} and \eqref{lem1-case2} we get
\begin{align*}
    &\Pr_{v \sim (F_1, \dots, F_n)^k}[\algt{{\boldsymbol \tau}} (v)\geq x]\\
    &\ge\left(\sum_{\ell=1}^{\ell(\boldsymbol{\tau},x)-1}\frac{p_\ell}{p_{\ell(\boldsymbol\tau,x)}}\prod_{j=0}^{\ell-1}(1-p_j)+\sum_{\ell=\ell(\boldsymbol{\tau},x)}^{k}(1-p_{\ell})\prod_{j=0}^{\ell-1}(1-p_j)\right)\Pr_{v \sim (F_1, \dots, F_n)}[\max_{i\in [n]}v_i\ge x]\\
    &=\left(\frac{1}{p_{\ell(\boldsymbol\tau,x)}}\sum_{\ell=1}^{\ell(\boldsymbol{\tau},x)-1}p_{\ell}\prod_{j=0}^{\ell-1}(1-p_j)+\sum_{\ell=\ell(\boldsymbol{\tau},x)}^{k}\prod_{j=0}^{\ell}(1-p_j)\right)\Pr_{v \sim (F_1, \dots, F_n)}[\max_{i\in [n]}v_i\ge x]\\
    &=\Phi_{\ell(\boldsymbol{\tau},x)}(F_1,\ldots,F_n,\boldsymbol\tau)\Pr_{v \sim (F_1, \dots, F_n)}[\max_{i\in [n]}v_i\ge x].
\end{align*}
Finally, when $\ell(\boldsymbol{\tau},x)=k+1$, from \eqref{lem1-summation} and \eqref{lem1-case1} we get
\begin{align*}
    \Pr_{v \sim (F_1, \dots, F_n)^k}[\algt{{\boldsymbol \tau}} (v)\geq x]&\ge \Pr_{v \sim (F_1, \dots, F_n)}[\max_{i\in [n] }v_i \geq x] \sum_{\ell=1}^{k}\frac{p_\ell}{p_{k+1}}\prod_{j=0}^{\ell-1}(1-p_j)\\
    &=\Pr_{v \sim (F_1, \dots, F_n)}[\max_{i\in [n] }v_i \geq x] \sum_{\ell=1}^{k}p_{\ell}\prod_{j=0}^{\ell-1}(1-p_j)\\
    &=\Phi_{k+1}(F_1,\ldots,F_n,\boldsymbol\tau) \Pr_{v \sim (F_1, \dots, F_n)}[\max_{i\in [n]}v_i\ge x],
\end{align*}
where the first equality holds since $p_{k+1}=1$.

Overall, we conclude that for every $x\ge 0$ we have
\begin{align*}
     \Pr_{v \sim (F_1, \dots, F_n)^k}[\algt{{\boldsymbol \tau}} (v)\geq x]&\ge \Phi_{\ell(\boldsymbol{\tau},x)}(F_1,\ldots,F_n,\boldsymbol\tau) \Pr_{v \sim (F_1, \dots, F_n)}[\max_{i\in [n]}v_i\ge x]\\
     &\ge \min_{i\in \{1,\ldots,k+1\}}\Phi_{i}(F_1,\ldots,F_n,\boldsymbol\tau)\Pr_{v \sim (F_1, \dots, F_n)}[\max_{i\in [n]}v_i\ge x]\\
     &=\Phi(F_1,\ldots,F_n,\boldsymbol\tau)\Pr_{v \sim (F_1, \dots, F_n)}[\max_{i\in [n]}v_i\ge x],
\end{align*}
which finishes the proof of the lemma.
\end{proof}

Since Lemma \ref{lem:dynamic-1} provides an approximate stochastic dominance inequality parameterized in the thresholds $\boldsymbol{\tau}=(\tau_1,\ldots,\tau_k)$,
in order to get the best possible approximation factor by using this approach we have to solve the following max-min problem:
\begin{align}
&\max\Big\{\textstyle\min_{i\in \{1,\ldots,k+1\}}\Phi_i(F_1,\ldots,F_n,\boldsymbol\tau):\tau_1>\tau_2>\cdots>\tau_k\ge 0\Big\} \nonumber\\
&=\max\Big\{\Phi(F_1,\ldots,F_n,\boldsymbol\tau):\tau_1>\tau_2>\cdots>\tau_k\ge 0\Big\}. \label{max-min-k}
\end{align}
In the following subsection we study this max-min problem.

\subsection{Lower Bound on the Max-Min Problem via an Explicit Feasible Solution}
\label{sec:feasible}

When $k=2$, we can compute exactly the optimal solution of the max-min problem \eqref{max-min-k}.
The problem becomes much harder for general $k$, but our following lemma establishes a double-exponentially fast increasing  
lower bound on the optimal value of \eqref{max-min-k}.
This lower bounds holds by providing a specific set of thresholds, obtained after finding a well-chosen set of quantiles $p_1,\ldots,p_k$.

\begin{lemma}\label{lem:dynamic-2}
For every $F_1,F_2,\ldots,F_n\in \Delta$ and every $k\ge 2$, let $\overline{\boldsymbol\tau}=(\overline{\tau}_1,\ldots,\overline{\tau}_k)$ such that 
$$\Pr_{v \sim (F_1, \dots, F_n)}[\max_{j\in [n]}v_j \geq \overline{\tau}_{\ell}]=1-(5/4)^{-(5/4)^{\ell}}$$ for every $\ell\in \{1,\ldots,k\}.$
Then, we have
$\Phi(F_1,\ldots,F_n,\overline{\boldsymbol\tau})\ge 1-(5/4)^{-(5/4)^k}.$
\end{lemma}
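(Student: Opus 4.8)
The plan is to prove the lemma by establishing the stronger statement that each of the $k+1$ terms $\Phi_i:=\Phi_i(F_1,\ldots,F_n,\overline{\boldsymbol\tau})$ satisfies $\Phi_i\ge p_k$, where $p_k:=1-(5/4)^{-(5/4)^k}$ is exactly the claimed bound. Write $p_0=0$, $p_{k+1}=1$, and $p_\ell=1-(5/4)^{-(5/4)^\ell}=\Pr_{v\sim(F_1,\ldots,F_n)}[\max_j v_j\ge\overline{\tau}_\ell]$ for $\ell\in\{1,\ldots,k\}$; note $0<p_1<p_2<\cdots<p_k$, so $\overline{\boldsymbol\tau}$ is indeed decreasing. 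Put $q_\ell=\prod_{j=1}^{\ell}(1-p_j)$, with $q_0=1$. A first, purely algebraic step is to telescope the first sum in the definition of $\Phi_i$ via $q_{\ell-1}-q_\ell=q_{\ell-1}p_\ell$, which yields the uniform formula
\[
\Phi_i=\frac{1-q_{i-1}}{p_i}+\sum_{\ell=i}^{k}q_\ell\qquad\text{for all }i\in\{1,\ldots,k+1\},
\]
so in particular $\Phi_1=\sum_{\ell=1}^{k}q_\ell$ and $\Phi_{k+1}=1-q_k$. A second step records the self-similar spacing of the chosen quantiles, which is where the base $5/4$ is used: from $1-p_\ell=(5/4)^{-(5/4)^\ell}$ one gets $1-p_{\ell+1}=(1-p_\ell)^{5/4}$, and evaluating the geometric sum $\sum_{j=1}^{\ell}(5/4)^j=5(5/4)^\ell-5$ gives the closed forms $q_\ell=(5/4)^5(1-p_\ell)^5$ for $\ell\ge1$, and hence $q_{i-1}=(5/4)^5(1-p_i)^4$ for $i\ge2$.

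Next I would bound the $\Phi_i$ in three groups. For $i=k+1$, $\Phi_{k+1}=1-q_k\ge p_k$ is equivalent to $q_k\le1-p_k$, i.e.\ to $(5/4)^5(1-p_k)^4\le1$, i.e.\ to $1-p_k\le(4/5)^{5/4}$, which holds since $(5/4)^k\ge(5/4)^2>5/4$ for $k\ge2$. For $2\le i\le k$ it then suffices to show $\Phi_i\ge\Phi_{k+1}=1-q_k$; subtracting and using $1-q_k=(1-q_{i-1})+\sum_{\ell=i}^{k}q_{\ell-1}p_\ell$, one obtains
\[
\Phi_i-(1-q_k)=(1-q_{i-1})\frac{1-p_i}{p_i}+\sum_{\ell=i}^{k}q_{\ell-1}(1-2p_\ell).
\]
Writing $q_{\ell-1}(1-2p_\ell)=2q_\ell-q_{\ell-1}$ the sum telescopes to $-q_{i-1}+\sum_{\ell=i}^{k-1}q_\ell+2q_k\ge q_i-q_{i-1}$, so (multiplying through by $p_i$ and using $q_i=q_{i-1}(1-p_i)$) the desired bound reduces to $q_{i-1}(1-p_i+p_i^2)\le1-p_i$. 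By the closed form this reads $(5/4)^{\,5-3(5/4)^i}\,(1-p_i+p_i^2)\le1$, which is automatic for $i\ge3$ (the exponent is negative and $1-p_i+p_i^2\le1$) and for $i=2$ amounts to $1-p_2+p_2^2\le(4/5)^{5/16}$; the latter follows from $p_2(1-p_2)>4/25$, which uses $1-p_2\in(16/25,4/5)$, together with $(4/5)^{5/16}>(4/5)^{1/2}=2/\sqrt{5}>21/25$. Finally, for $i=1$, using $1-p_1,1-p_2>16/25$ one gets $\Phi_1\ge q_1+q_2=(1-p_1)(2-p_2)>(16/25)(41/25)>1\ge p_k$. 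Taking the minimum over $i$ yields $\Phi(F_1,\ldots,F_n,\overline{\boldsymbol\tau})\ge p_k$, as claimed.

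The point I expect to be the main obstacle is the range $2\le i\le k$ for indices $i$ close to $k$: as $k$ grows, both $\Phi_{k+1}=1-q_k$ and the target $p_k$ tend to $1$, so lossy estimates such as $q_{i-1}\le q_1=1-p_1$ genuinely fail, and one must exploit the exact relation $q_{i-1}=(5/4)^5(1-p_i)^4$ — that is, the doubly‑exponential, self‑similar spacing of the quantiles. Indeed, the condition $q_{i-1}(1-p_i+p_i^2)\le1-p_i$ is, roughly, what forces the base of the double exponent to be a constant close to $5/4$. Once that structure is in place, the remainder is the bookkeeping above plus two one‑line numerical checks.
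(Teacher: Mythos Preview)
Your proof is correct and follows essentially the same route as the paper: both telescope the first sum to obtain $\Phi_i=(1-q_{i-1})/p_i+\sum_{\ell\ge i}q_\ell$, both derive the closed form $q_\ell=(5/4)^{5}(1-p_\ell)^5$ (equivalently $q_\ell=(5/4)^{5(1-(5/4)^\ell)}$), and both split into the cases $i=1$, $i=2$, $3\le i\le k$, and $i=k+1$ with short numerical checks at the boundary. The only difference is that for the middle range the paper takes a shorter path: it drops the second sum entirely and shows the stronger bound $\Phi_i\ge(1-q_{i-1})/p_i\ge1$ directly (which reduces to $q_{i-1}\le1-p_i$, i.e.\ $(5/4)^{5-3(5/4)^i}\le1$, automatic for $i\ge3$), whereas you compare $\Phi_i$ to $\Phi_{k+1}$ via a second telescoping and end up with the slightly weaker requirement $q_{i-1}(1-p_i+p_i^2)\le1-p_i$.
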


\begin{proof}
To prove the lemma, we show that the following holds: 
\begin{align*}
\Phi_i(F_1,\ldots,F_n,\overline{\boldsymbol\tau})&\ge 1\;\text{ for every $i\in \{1,\ldots,k\}$, and }\\
\Phi_{k+1}(F_1,\ldots,F_n,\overline{\boldsymbol\tau})&\ge 1-(5/4)^{-(5/4)^k}.
\end{align*}
Then, the lemma follows since, overall, we get \begin{equation*}\Phi(F_1,\ldots,F_n,\overline{\boldsymbol\tau})\ge
1-(5/4)^{-(5/4)^k}.\end{equation*}
Let  $p_0=0$, and $p_{\ell} =\Pr_{v \sim (F_1, \dots, F_n)}[\max_{j\in [n]}v_j \geq \overline{\tau}_{\ell}]$ for every $\ell\in \{1,\ldots,k+1\}$.
When $i=1$, we have 
\begin{align*}
    \Phi_1(F_1,\ldots,F_n,\overline{\boldsymbol\tau})&=\sum_{\ell=1}^{k}\prod_{j=1}^{\ell}(1-p_j)\\
    &=\sum_{\ell=1}^{k}\prod_{j=1}^{\ell}(5/4)^{-(5/4)^{j}}\\
    &=\sum_{\ell=1}^k(5/4)^{5(1-(5/4)^{\ell})}\\
    &\ge (5/4)^{5(1-(5/4))}+(5/4)^{5(1-(5/4)^{2})}> 1,
\end{align*}
where the first inequality holds since $k\ge 2$.
For $i=2$, 
\begin{align*}
\Phi_2(F_1,\ldots,F_n,\overline{\boldsymbol\tau}) 
& = \frac{p_{1}}{p_2} + \sum_{\ell=2}^{k}\prod_{j=0}^{\ell}(1-p_j) \geq \frac{p_{1}}{p_2} + (1-p_1)(1-p_2) \\ &=  \frac{1-(5/4)^{-5/4}}{1-(5/4)^{-(5/4)^2}} + (5/4)^{-5/4} \cdot (5/4)^{-(5/4)^2} > 1,
\end{align*}
where the first inequality is since $k\geq 2$.

For every $i\in \{3,\ldots,k+1\}$, we have 
\begin{align*}
\sum_{\ell=1}^{i-1}p_{\ell}\prod_{j=0}^{\ell-1}(1-p_j)
&=\sum_{\ell=1}^{i-1}\left(\prod_{j=0}^{\ell-1}(1-p_j)-(1-p_{\ell})\prod_{j=0}^{\ell-1}(1-p_j)\right)\\
&=\sum_{\ell=1}^{i-1}\left(\prod_{j=0}^{\ell-1}(1-p_j)-\prod_{j=0}^{\ell}(1-p_j)\right)\\
&=1-\prod_{j=0}^{i-1}(1-p_j)\\
&=1-\prod_{j=1}^{i-1}(5/4)^{-(5/4)^{j}}= 1-(5/4)^{-\sum_{j=1}^{i-1}(5/4)^j} = 1-(5/4)^{5(1-(5/4)^{i-1})}.
\end{align*}
Thus, for every $i\in \{3,\ldots,k\}$ we have 
\begin{equation*}
\Phi_i(F_1,\ldots,F_n,\overline{\boldsymbol\tau})\ge \frac{1}{p_i}\sum_{\ell=1}^{i-1}p_{\ell}\prod_{j=0}^{\ell-1}(1-p_j)=\frac{1-(5/4)^{5(1-(5/4)^{i-1})}}{1-(5/4)^{-(5/4)^{i}}}\ge 1,
\end{equation*}
where the last inequality holds since $i\geq 3$, 
and when $i=k+1$, we have 
\begin{equation*}\Phi_{k+1}(F_1,\ldots,F_n,\overline{\boldsymbol\tau})=\sum_{\ell=1}^{k}p_{\ell}\prod_{j=0}^{\ell-1}(1-p_j)=1-(5/4)^{5(1-(5/4)^{k})}\ge 1-(5/4)^{-(5/4)^k},\label{case-k+1}
\end{equation*} 
where the last inequality 
holds for every $k\geq 2$.
This concludes the proof.
\end{proof}

\subsection{Putting it All Together}
\label{sec:full-proof}

With Lemma~\ref{lem:dynamic-1} and Lemma~\ref{lem:dynamic-2} at hand, we are now ready to prove our main theorem.

\begin{proof}[Proof of Theorem \ref{thm:dynamic}]
Given $\varepsilon>0$, by Lemma \ref{lem:dynamic-1} and Lemma \ref{lem:dynamic-2}, we have that for every $F_1,\ldots,F_n\in \Delta$ and every $k\ge \max(2,\log_{5/4}\log_{5/4}(1/\varepsilon))$, by taking $\overline{\boldsymbol\tau}=(\overline{\tau}_1,\ldots,\overline{\tau}_k)$ as defined in Lemma \ref{lem:dynamic-2}, the following holds:
\begin{align*}
    \EE_{v \sim (F_1, \dots, F_n)^k}[\algt{{\overline{\boldsymbol \tau}}} (v)]&=\int_{0}^{\infty}\Pr_{v \sim (F_1, \dots, F_n)^k}[\algt{{\overline{\boldsymbol \tau}}} (v)\geq x]\mathrm dx \\
    &\ge \Phi(F_1,\ldots, F_n,\overline{\boldsymbol\tau}) \int_{0}^{\infty}\Pr_{v \sim (F_1, \dots, F_n)}[\max_{i \in [n]} v_i \geq x]\mathrm dx\\
    &\ge \Big(1-(5/4)^{-(5/4)^k}\Big)\int_{0}^{\infty}\Pr_{v \sim (F_1, \dots, F_n)}[\max_{i \in [n]} v_i \geq x]\mathrm dx\\
    &\ge (1-\varepsilon)\int_{0}^{\infty}\Pr_{v \sim (F_1, \dots, F_n)}[\max_{i \in [n]} v_i \geq x]\mathrm dx\\
    &=(1-\varepsilon)\cdot \EE_{v \sim (F_1, \dots, F_n)}[\max_{i \in [n]} v_i],
\end{align*}
which implies that the $(1-\varepsilon)$-competition complexity of the class of block threshold algorithms is $O(\log\log(1/\varepsilon))$.

Then, the theorem follows by the fact that \cite[Theorem 2.2]{BrustleCDV22} show that the $(1-\varepsilon)$-competition complexity of the class of general
threshold algorithms when $F_1=F_2=\cdots=F_n$ is $\Omega(\log\log(1/\varepsilon))$.  
\end{proof}

We note that our proof of Theorem~\ref{thm:dynamic} actually shows a $\Theta(\log \log (1/\varepsilon))$ competition complexity guarantee in an approximate stochastic dominance sense, which is stronger than the regular competition complexity result, based on comparing expectations. 
In particular, our argument strengthens the result of Brustle et al. \cite{BrustleCDV22} even in the i.i.d.~case.

We know from the lower bound on the competition complexity of \cite{BrustleCDV22} that one cannot strengthen Lemma~\ref{lem:dynamic-2} to show that for every $c$ one can devise a series of quantiles for which $\Phi \geq 1- c^{-c^k} $. However, the proof in \cite{BrustleCDV22} is not explicit, and we now present a simpler proof that shows that one cannot obtain better than $1-\Omega(c^{-c^k})$ approximation to the value of the prophet using $k$ blocks in the stochastic dominance sense for $c=3$.

\begin{proposition}
For every $k\geq 1$, there exists a positive integer value $n_k$ and $F_1,\ldots,F_{n_k}\in \Delta$, such that for every $\boldsymbol\tau=(\tau_1,\ldots_,\tau_k)$, there exists $x\in \reals_{\geq 0}$ such that 
$$\Pr_{v\sim (F_1,\ldots,F_{n_k})^k}[\algt{\boldsymbol{\tau}}(v) \geq x] <(1-\varepsilon_k)\Pr_{v\sim (F_1,\ldots,F_{n_k})}[\max_{i \in [n_k]}v_i  \geq x] ,$$
where $\varepsilon_k=3^{-3^k}$ for every $k$.
\end{proposition}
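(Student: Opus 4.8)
The plan is to construct an explicit ``hard instance'' whose maximum distribution is a discretized geometric-type random variable, so that no choice of $k$ thresholds can track the upper tail of $\max_i v_i$ uniformly well. Concretely, I would pick a large integer $N = N(k)$ and build $F_1, \ldots, F_{n_k}$ so that $M := \max_{i \in [n_k]} v_i$ takes value $a^t$ on a geometrically decaying probability scale, i.e.\ $\Pr[M \ge a^t] = q^{t}$ for $t = 0, 1, \ldots, N$ for suitably chosen constants $a > 1$ and $q \in (0,1)$, where $a$ is taken huge so that $\EE[M]$ and in fact the tail $\Pr[M \ge x]$ is dominated by its top atom. Each $F_i$ is a weighted Bernoulli, and $n_k$ is chosen large enough that such a product distribution realizing this $M$ exists (this is the same reduction to weighted-Bernoulli instances already used in Proposition~\ref{prop:eq-class}; one needs $n_k$ roughly exponential in $N$). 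The key point of the construction is that on $k$ copies, a block threshold algorithm with thresholds $\tau_1 > \cdots > \tau_k$ effectively gets only $k$ ``attempts'' to catch a high value, and each attempt, once it fixes a threshold at some level $a^{t}$, can recover at most a $q$-fraction (roughly) of the tail mass sitting strictly above that level.

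The core of the argument is a counting/pigeonhole step: for any fixed $\boldsymbol\tau = (\tau_1, \ldots, \tau_k)$, I would WLOG round each $\tau_j$ down to the nearest level $a^{t_j}$ (rounding down only helps the algorithm), so the thresholds partition the scale $\{a^0, \ldots, a^N\}$ into at most $k+1$ buckets. Choosing $N \ge 3^k$ (the precise relation to be tuned), some bucket must span at least $3^{k}/(k+1) \gg $ many consecutive levels when $N$ is large, but more to the point I want a bucket $[\tau_{j+1}, \tau_j)$ that is ``deep'' in the sense that the algorithm, upon reaching block $j+1$, has very little remaining probability of not having stopped, yet the prophet's mass in that bucket's \emph{upper} part is comparatively large. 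I would then set $x$ to be the top of that bucket (or the smallest level above $\tau_j$), and bound $\Pr[\algt{\boldsymbol\tau}(v) \ge x]$ from above by summing, over blocks $\ell \le j$, the probability that block $\ell$ produces a value in $[\tau_\ell, x)$ (wasted) plus, over blocks $\ell > j$, the residual probability mass — using that after conditioning on not stopping, block $\ell$'s contribution above $x$ is at most $\Pr_{v}[\exists i: v_i \ge x] \le \Pr[M \ge x]/q^{\,(\text{gap})}$ times the survival probability, and the survival probabilities themselves decay geometrically in the number of blocks already passed. Comparing this upper bound with $\Pr[M \ge x]$ and optimizing the constants ($a$ large, $q$ and the bucket choice tuned so that the ratio is at most $1 - 3^{-3^k}$) gives the claimed $\varepsilon_k = 3^{-3^k}$.

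I expect the main obstacle to be making the pigeonhole/accounting tight enough to land exactly at base $3$ rather than some larger constant: one must carefully balance two competing losses — the ``wasted mass'' loss (block $\ell$ catches a value in $[\tau_\ell, x)$, which is useless for the event $\{\ge x\}$) against the ``missed mass'' loss (the algorithm survives past a level it should have caught) — and show that for \emph{some} $x$ both cannot simultaneously be made negligible. The doubly-exponential shape $3^{-3^k}$ should emerge because passing each additional block multiplies the relevant survival probability by a factor that itself shrinks geometrically as the thresholds are pushed toward the tail, compounding over $k$ blocks; quantifying ``the thresholds must be pushed that far'' is exactly where the choice $N \sim 3^k$ and the geometric ratio $q \sim 1/3$ (or $q$ close to a value making $\log_{1/q}$ work out) enter. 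A secondary technical point is verifying that a product of weighted Bernoullis with $n_k$ components can realize the target distribution of $M$ with the prescribed tail $q^t$ on $N+1$ atoms — this is routine (choose disjoint ``trigger'' events, one per atom, with appropriately set success probabilities, exactly as in the proof of Proposition~\ref{prop:eq-class}), but it fixes $n_k$ and must be stated; crucially it does not affect the stated bound since the proposition allows $n_k$ to depend on $k$. Finally, since we only need the inequality to hold for \emph{some} $x$ and \emph{every} $\boldsymbol\tau$, there is no need to solve any optimization exactly — the pigeonhole choice of the offending bucket does all the work, and taking $a \to \infty$ removes all lower-order contributions to both sides so that the comparison reduces to a clean inequality among the tail probabilities $q^{t}$.
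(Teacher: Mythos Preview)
Your plan has the right skeleton---build a hard instance, then for each $\boldsymbol\tau$ locate a witness $x$---but the engine you propose, a uniform pigeonhole on the $k{+}1$ buckets cut by the thresholds on a geometric scale, does not deliver the doubly-exponential rate $\varepsilon_k = 3^{-3^k}$. Pigeonhole hands you a single bucket of width $\gtrsim N/(k{+}1)$ levels; placing $x$ at its top extracts a loss of order $q^{N/(k+1)}$, which with $N\sim 3^k$ and $q=1/3$ is only $3^{-3^k/(k+1)}$. The real difficulty is controlling the contribution of the \emph{other} $k$ blocks to $\Pr[\algt{\boldsymbol\tau}(v)\ge x]$, and your sketch does not do this: the sentence about survival probabilities ``compounding over $k$ blocks'' is the correct intuition for where the double exponential lives, but it is disconnected from the pigeonhole step and never made quantitative. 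Two smaller issues: your waste accounting is reversed (with decreasing thresholds and $x$ near $\tau_j$, it is the \emph{later} blocks $\ell>j$, with $\tau_\ell<x$, that may stop on values in $[\tau_\ell,x)$; for $\ell\le j$ one has $\tau_\ell\ge\tau_j\approx x$ so that interval is essentially empty), and sending $a\to\infty$ is irrelevant since the statement compares tail \emph{probabilities}, in which the scale cancels.

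The paper's argument replaces the uniform pigeonhole by a per-threshold test against a fixed doubly-exponential benchmark. It takes the instance so that $M=\max_i v_i$ is \emph{uniform} on $[0,n_k]$ (via $v_i=i\cdot\mathrm{Ber}(1/i)$ plus continuous noise, with $n_k=3^{3^{k+2}}$), and for each $i\in[k]$ asks whether $\tau_i\le n_k\cdot 3^{-3^i}$. If no index passes, then $\Pr[M<\tau_\ell]>3^{-3^\ell}$ for every $\ell$, the survival probability $\prod_\ell \Pr[M<\tau_\ell]$ is large, and $x=0$ witnesses the failure. Otherwise, letting $i^\star$ be the first index that passes, one takes $x=n_k\cdot 3^{-2\cdot 3^{i^\star-1}}$, strictly between the benchmarks at levels $i^\star{-}1$ and $i^\star$, and bounds blocks $\ell<i^\star$ (all with $\tau_\ell>x$, but $\Pr[M\ge\tau_\ell]$ small by the failed benchmark test) separately from blocks $\ell\ge i^\star$ (all with $\tau_\ell\le\tau_{i^\star}\ll x$; the ratio $\tau_\ell/x\le 3^{-3^{i^\star-1}}$ is what makes their useful mass doubly-exponentially small). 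The idea you are missing is precisely this benchmark sequence: it is a \emph{non-uniform} pigeonhole whose cells shrink doubly-exponentially, forcing the thresholds onto the same doubly-exponential quantile schedule that drives the upper bound, and turning the $k$-dimensional search for a bad $x$ into a one-index dichotomy that a single ``large bucket'' cannot reproduce.
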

\begin{proof}
For every $k\ge 1$, let $n_k=3^{3^{k+2}}$, and for every $i\in [n_k]$, let $F_i$ be the distribution of the random variable $v_i=i \cdot \text{Bernoulli}(1/i) - \text{Uniform}[0,1]$. 
For ease of notation, let $F=(F_1,\ldots,F_{n_k})$.
We first observe that since $\max_{i\in[n_k]} i \cdot \text{Bernoulli}(1/i)$ is distributed uniformly on the set $\{1,\ldots,n_k\}$, the random variable $\max_{i\in [n_k]} v_i$ is  uniformly distributed over the interval $[0,n_k]$, and for every $x\in [0,n]$, it holds that $ \Pr[\max_{i\in [n_k]} v_i \geq x] = (n-x)/n$. 
For this instance, the optimal block threshold algorithm uses monotone decreasing thresholds, since otherwise sorting them leads to thresholds that stochastically dominates the original ones.
Given $\boldsymbol{\tau}=(\tau_1,\ldots,\tau_k)$, let $A = \{i\in [k] : \tau_i   \leq    n_k\cdot  3^{-3^i}\}$.
If $A = \emptyset$, then  for $x=0$, it holds that 
\begin{align*}
\Pr_{v\sim F^k}[\algt{\boldsymbol{\tau}}(v)] \geq x] & =   1- \prod_{\ell=1}^k \Pr_{v^{(\ell)}\sim F}[\max_{i \in [n_k]}v_i^{(\ell)} <  \tau_\ell] \\  &\leq 1- \prod_{\ell=1}^k \frac{\tau_\ell }{n_k} \leq 1- \prod_{\ell=1}^k 3^{-3^\ell }<
(1-\varepsilon_k)\Pr_{v\sim F}[\max_{i \in [n_k]}v_i  \geq x] 
\end{align*}
where 
the first inequality holds since $\max_{i\in [n_k]} v_i^{(\ell)}$ is uniformly distributed on the interval $[0,n]$, the second inequality is since $A=\emptyset$, and the last inequality holds by the definition of $\varepsilon_k$, together with $\Pr_{v\sim F}[\max_{i \in [n_k]}v_i  \geq x] =1$.

Else, if $A \neq \emptyset$, let $i^* = \min A$, and let $x=n_k\cdot  3^{-2\cdot3^{i^*-1}}$.
Then, we denote by $B_\ell$ (and by $\overline{B}_{\ell}$ its complement) the event in which there exists a value in block $\ell$ exceeding the threshold $\tau_\ell$, and by $B^x_\ell$ the event that in block $\ell$, for all $i\in [n_k]$ it holds that $v_i^{(\ell)} \notin [\tau_\ell,x) $. For $\ell\geq i^*$, it holds that 
\begin{equation}
\label{eq:blx}    
\Pr[B_\ell^x \text{ and } B_\ell] = \frac{\tau_\ell- \lfloor \tau_\ell \rfloor}{\tau_\ell}\cdot\left(\prod_{j=\lceil \tau_\ell \rceil}^{x-1}\frac{i-1}{i}\right)\cdot \left(1- \prod_{j=x}^{n_k} \frac{i-1}{i}\right) = \frac{n_k-x}{n_k} \cdot  \frac{\tau_\ell}{x}.\end{equation}
Thus, we have
\begin{align*}
\Pr_{v\sim F^k}[\algt{\boldsymbol{\tau}}(v)] \geq x] & =  \sum_{\ell=1}^k \left(\prod_{j=1}^{\ell-1}\Pr[\overline{B}_j]\right)\cdot\Pr[B_\ell] \cdot \Pr[\algt{\boldsymbol{\tau}}(v)\geq x \mid B_{\ell},\overline{B}_1,\ldots,\overline{B}_{\ell-1}]  
\\ & \leq  \sum_{\ell=1}^{i^*-1} \left(\prod_{j=1}^{\ell-1}\Pr[\overline{B}_j]\right)\cdot\Pr[B_\ell] + \sum_{\ell=i^*}^{k} \left(\prod_{j=1}^{\ell-1}\Pr[\overline{B}_j]\right)\cdot \Pr[B_\ell^x \wedge B_\ell]
\\ & \leq  \sum_{\ell=1}^{i^*-1} \left(\prod_{j=1}^{\ell-1}3^{-3^j}\right)\cdot \Big(1-3^{-3^\ell}\Big) + \sum_{\ell=i^*}^{k} \left((3^{-3^{i^*}})^{\ell-i^*}\prod_{j=1}^{i^*-1}3^{-3^j}\right) \cdot \frac{n_k-x}{n_k} \cdot  \frac{\tau_\ell}{x} 
\\ & < (1-\varepsilon_k) \cdot \frac{n_k-x}{n_k} = (1-\varepsilon_k)\Pr_{v\sim F}[\max_{i \in [n_k]}v_i  \geq x] ,
\end{align*}
where the first inequality holds since $\Pr[\algt{\boldsymbol{\tau}}(v)\geq x \mid B_{\ell},\overline{B}_1,\ldots,\overline{B}_{\ell-1}] $ is  $1$ for $\ell< i^*$, and is $\Pr[B_\ell^x \mid B_\ell]$ otherwise; the second inequality is by the definition of $A$, since the expression is monotone decreasing in $\tau_\ell$ for $\ell<i^*$, monotone increasing in $\tau_\ell$ for $\ell \geq i^*$, and by Equation~\eqref{eq:blx}; the last inequality holds by the definitions of $\varepsilon_k$ for every $k$, and the last equality is since $\max_{i \in [n_k]}v_i $ is uniformly distributed over the interval $[0,n_k]$.
\end{proof}

%\section{Static Pricing}
\section{Single Threshold Algorithms}
\label{sec:static}

In this section, we study the competition complexity of the class of single
threshold algorithms. 
Let $\algt{\tau}$ denote the single 
threshold algorithm with threshold $\tau \in \reals_{\geq 0}$. We show the following tight bound on the competition complexity of single threshold algorithms.
  
\begin{theorem}\label{thm:static}
For every $\varepsilon>0$ the
$(1-\varepsilon)$-competition complexity of the class of 
static 
single threshold algorithms is $\Theta\left(\log(1/\varepsilon)\right)$.
\end{theorem}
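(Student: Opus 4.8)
The plan is to prove Theorem~\ref{thm:static} by establishing matching upper and lower bounds on the competition complexity of single threshold algorithms.

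\medskip

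\noindent\textbf{Upper bound $O(\log(1/\varepsilon))$.}
The idea is to adapt the classical median-rule argument of Samuel-Cahn to the $k$-block setting. Given an instance $(F_1,\ldots,F_n)$, let $M = \max_{i\in[n]}v_i$ and choose the single threshold $\tau$ so that $p := \Pr[M \geq \tau]$ equals some constant (e.g.\ $p = 1/2$, as in the median rule; the exact constant will be tuned). I would decompose $\EE[\algt{\tau}(v)]$ over the $kn$ arrivals using the standard identity
\[
\EE[\algt{\tau}(v)] \geq \tau\cdot\Pr[\algt{\tau}\text{ accepts something}] + \sum_{\text{arrivals }(i,j)}\Pr[\text{not stopped before }(i,j)]\cdot\EE[(v_i^{(j)}-\tau)^+].
\]
Across the $k$ independent blocks, the probability that the algorithm has not accepted anything before block $j$ is $(1-p)^{j-1}$, so the ``surplus'' term telescopes into $\bigl(\sum_{j=1}^k (1-p)^{j-1}\bigr)\cdot\EE[(M-\tau)^+] \geq \frac{1-(1-p)^k}{p}\cdot\EE[(M-\tau)^+]$, while the ``threshold'' term contributes $\tau\cdot\bigl(1-(1-p)^k\bigr)$. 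Combining these with $\EE[M] \leq \tau + \EE[(M-\tau)^+]$ (using $\tau\leq\EE[M\mid M\geq\tau]$ is not even needed; one just splits $\EE[M] = \EE[M\,\indicator{M<\tau}] + \EE[M\,\indicator{M\geq\tau}] \leq \tau + \EE[(M-\tau)^+] + \tau\,p$, or more carefully via the Samuel-Cahn split) yields a bound of the form $\EE[\algt{\tau}(v)] \geq \bigl(1-(1-p)^k\bigr)\cdot c(p)\cdot\EE[M]$ for an appropriate constant $c(p)$ bounded away from $0$. Taking $p$ a fixed constant, $(1-p)^k \leq \varepsilon$ as soon as $k = \Omega(\log(1/\varepsilon))$, which gives the $O(\log(1/\varepsilon))$ upper bound. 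The one subtlety is that with a constant $p$ the factor $c(p)$ is itself only a constant less than $1$, so to get a $(1-\varepsilon)$ guarantee I actually want $p = p(\varepsilon)\to 0$ slowly; balancing $c(p)\approx 1-O(p)$ against $(1-p)^k\approx e^{-pk}$ forces $pk = \Theta(\log(1/\varepsilon))$ and $p = \Theta(\varepsilon)$ is too aggressive — instead $p$ constant and $k=\Theta(\log 1/\varepsilon)$ combined with a more careful accounting (keeping the $\tau p$ term on the good side) should close it, exactly as in the classical proof where the median rule already loses only a factor $2$ for $k=1$; here the extra blocks drive the loss to $(1+\varepsilon)$.

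\medskip

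\noindent\textbf{Lower bound $\Omega(\log(1/\varepsilon))$.}
Here I would exhibit an explicit family of instances on which no single threshold does well with $o(\log(1/\varepsilon))$ blocks. A natural candidate, mirroring the construction used in the Proposition just proved and in \cite{BrustleCDV22}, is a "geometric ladder" of weighted Bernoullis: for a parameter $m$, take $n=m$ variables where $v_i$ takes value $2^i$ with probability $2^{-i}$ (scaled so the $\max$ is, up to constants, spread geometrically across scales). Then $\EE[M] = \Theta(m)$. A single threshold $\tau$ effectively picks a scale $2^t$; within each of the $k$ blocks the algorithm collects roughly $2^t$ with probability $\Theta(1)$ per block but captures essentially nothing at scales much larger than $2^t$, and the probability of ever seeing a value $\geq \tau$ across $k$ blocks saturates, so the algorithm's value is $O(2^t + k)$ whereas the prophet gets $\Theta(m)$. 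Optimizing the adversary's choice of $m$ relative to $k$: whatever single $t$ the algorithm commits to, the prophet's value at the top $\Theta(\log m)$ scales is unreachable, and one shows $\EE[\algt{\tau}] \leq (1 - 2^{-\Theta(k)})\EE[M]$ roughly, i.e.\ to reach a $(1-\varepsilon)$ fraction one needs $2^{-\Theta(k)}\leq\varepsilon$, giving $k = \Omega(\log(1/\varepsilon))$. I would set this up by direct computation of $\Pr[\algt{\tau}(v)\geq x]$ as in the Proposition, splitting on whether $\tau$ falls below or above the "critical" scale, and in either case producing an $x$ witnessing the failure of $(1-\varepsilon)$-stochastic dominance (or just comparing expectations).

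\medskip

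\noindent\textbf{Main obstacle.}
The upper bound is essentially bookkeeping on top of Samuel-Cahn, so the real work is the lower bound: one must design a single instance (depending only on $\varepsilon$, not on $k$) that simultaneously defeats \emph{every} choice of single threshold $\tau$ with fewer than $\Omega(\log(1/\varepsilon))$ blocks, and argue the bound is tight — i.e.\ that the geometric spacing is the right one and that no cleverer threshold placement (e.g.\ randomized tie-breaking, or a threshold between scales) helps. The crux is quantifying how a single threshold, replicated across $k$ i.i.d.\ blocks, can only "cover" the mass that the prophet collects from values above $\tau$ — which across blocks behaves like a geometric sum capped at $1/p$ — while the prophet also harvests the heavy contribution from the single rare large value, the gap between these being exactly the $\varepsilon$ we must control, forcing $k = \Theta(\log(1/\varepsilon))$.
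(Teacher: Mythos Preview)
Your approach is essentially the paper's: set $\tau$ at the median of $\max_i v_i$ and run Samuel-Cahn across the $k$ blocks. But you tie yourself in knots over the ``subtlety'' that $c(p)<1$ and whether $p$ should depend on $\varepsilon$. It shouldn't. With $p=1/2$ your own decomposition already gives exactly $(1-1/2^k)\EE[M]$: the surplus coefficient is $\sum_{j=1}^k (1-p)^j = (1-p)\tfrac{1-(1-p)^k}{p}$, which for $p=1/2$ equals $1-1/2^k$ and matches the threshold coefficient, so the two combine against $\EE[M]\le \tau+\sum_i\EE[(v_i-\tau)^+]$ with no loss. The paper packages this slightly differently (showing directly that $\EE[\algt{\tau^*}\mid\text{accept}]\ge \EE[M]$, in fact in a stochastic-dominance sense), but the content is the same.

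\textbf{Lower bound.} Here there is a genuine gap. Your geometric-ladder instance ($v_i = 2^i\cdot\text{Ber}(2^{-i})$) may or may not work, but the analysis you sketch is wrong. The claim that the algorithm's value is $O(2^t+k)$ is false: with the median threshold on the ladder, the algorithm already gets $(1-1/2^k)\EE[M]=\Theta(m)$, not $O(\text{const}+k)$. More generally, for threshold $2^t$ with $t$ moderately large the per-block acceptance probability is $\approx 2^{1-t}$ and the conditional value is $\approx (m-t)2^{t-1}$, so the expectation scales like $(1-q^k)(m-t)2^{t-1}$, which behaves nothing like $2^t+k$. Consequently your conclusion ``$\EE[\algt{\tau}]\le (1-2^{-\Theta(k)})\EE[M]$'' is asserted, not proved, and there is no argument ruling out that some intermediate threshold beats this on your ladder.

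The paper sidesteps all of this with a much simpler two-scale i.i.d.\ instance: each $v_i$ is essentially $1$ plus a rare $\Theta(\varepsilon)$ bonus (specifically $1+200\varepsilon\cdot\text{Ber}(1/(20n))+\text{Unif}[0,\varepsilon]$), so that $\EE[M]\ge 1+8\varepsilon$. Then a case split on $\tau$ is clean: a high threshold forces the algorithm to wait for the rare bonus, and with $k\ll\log(1/\varepsilon)$ blocks the acceptance probability is too small to beat even $1$; a low threshold makes the conditional value $\EE[v_i\mid v_i\ge\tau]$ fall below $1+8\varepsilon$. You should replace the ladder with a construction of this type (two scales separated by $\Theta(\varepsilon)$, the upper one rare), where the case analysis over $\tau$ is tractable.
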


We prove Theorem~\ref{thm:static} through Lemma~\ref{lem:static} and Proposition~\ref{prop:static} below. Lemma~\ref{lem:static} establishes the upper bound claimed in the theorem, and Proposition~\ref{prop:static} shows a matching lower bound.

We begin with Lemma~\ref{lem:static}, which shows that for every instance, there exists a single threshold $\tau^*$ such that for every $\varepsilon>0$, the $(1-\varepsilon)$-competition complexity of the single threshold algorithm $\ALG_{\tau^*}$ with respect to the instance, is  $O(\log(1/\varepsilon))$. 
The upper bound follows rather directly from the celebrated ``median rule'' proof of Samuel-Cahn~\cite{SamuelCahn84}; we work a bit harder to show that it also holds in a stochastic dominance sense.

\begin{lemma}\label{lem:static}
For every $F_1,\ldots,F_n$, there exists a threshold $\tau^*$ such that for every  $\varepsilon>0$, 
and for every $k\geq \log_2(1/\varepsilon)$
\[
\EE_{v \sim (F_1, \dots, F_n)^k}[\algt{\tau^*}(v)] \geq (1-\varepsilon) \cdot \EE_{v \sim (F_1, \dots, F_n)}[\max_{i \in [n]} v_i].
\]
Furthermore, the stronger approximate stochastic-dominance inequality
\[
\Pr_{v \sim (F_1, \dots, F_n)^k}[\algt{\tau^*}(v) \geq x] \geq (1-\varepsilon) \Pr_{v \sim (F_1, \dots, F_n)}[\max_{i \in [n]} v_i \geq x]
\]
holds for all $x \in \reals_{\geq 0}$.
\end{lemma}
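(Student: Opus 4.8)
The plan is to follow the classical Samuel-Cahn ``median threshold'' argument, but to push it through in the stronger stochastic-dominance form, and then to exploit independence across the $k$ blocks to get the geometric decay in $k$. First I would define the threshold $\tau^*$ as the median of the offline optimum: let $\tau^*$ be such that $\Pr_{v \sim (F_1,\dots,F_n)}[\max_{i\in[n]} v_i \geq \tau^*] = 1/2$ (which exists under the no-point-mass assumption; with point masses one randomizes). Write $q = \Pr_{v\sim(F_1,\dots,F_n)}[\max_i v_i \geq \tau^*] = 1/2$ and, for each block, let $E_\ell$ be the event that the block-$\ell$ single-threshold algorithm accepts \emph{some} value, i.e.\ that $\max_i v_i^{(\ell)} \geq \tau^*$; these events are independent across $\ell$, each with probability $q = 1/2$. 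The single-threshold algorithm accepts in block $\ell$ iff $\overline{E_1},\dots,\overline{E_{\ell-1}}$ occur and $E_\ell$ occurs.

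The core of the argument is a per-block stochastic-dominance bound. Fix $x \geq 0$. I would split into the two cases $x < \tau^*$ and $x \geq \tau^*$, mirroring the case analysis in Lemma~\ref{lem:dynamic-1}. In the case $x < \tau^*$: conditioned on reaching block $\ell$ (event $\overline{E_1}\cap\dots\cap\overline{E_{\ell-1}}$, probability $2^{-(\ell-1)}$) and on $E_\ell$, the accepted value is automatically $\geq \tau^* > x$, so block $\ell$ contributes exactly $2^{-(\ell-1)}\cdot\tfrac12$ to $\Pr[\algt{\tau^*}(v)\geq x]$; summing a geometric series over $\ell=1,\dots,k$ gives $1 - 2^{-k}$, which is $\geq (1-\varepsilon)\cdot 1 \geq (1-\varepsilon)\Pr[\max_i v_i \geq x]$ since $k \geq \log_2(1/\varepsilon)$. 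In the case $x \geq \tau^*$: here I want to lower bound the probability that block $\ell$ accepts a value that is actually $\geq x$ (not merely $\geq \tau^*$). I would use the standard Samuel-Cahn trick: conditioned on reaching block $\ell$, the probability that block $\ell$ accepts a value $\geq x$ is at least $\sum_{i=1}^n \Pr[v_i^{(\ell)} \geq x]\cdot\Pr[v_j^{(\ell)} < \tau^* \text{ for all } j\neq i] \geq \Pr[\max_j v_j^{(\ell)} < \tau^*]\cdot \sum_i \Pr[v_i \geq x] \geq \tfrac12 \Pr[\max_i v_i \geq x]$, using independence, the union bound, and $\Pr[\max_j v_j < \tau^*] = 1-q = \tfrac12$. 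Then summing over blocks, with the reaching-probability $2^{-(\ell-1)}$ in front, $\Pr[\algt{\tau^*}(v)\geq x] \geq \sum_{\ell=1}^k 2^{-(\ell-1)}\cdot\tfrac12\,\Pr[\max_i v_i \geq x] = (1-2^{-k})\Pr[\max_i v_i \geq x] \geq (1-\varepsilon)\Pr[\max_i v_i \geq x]$.

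Having proved the stochastic-dominance inequality for all $x\geq 0$, the expectation bound follows immediately by integrating: $\EE[\algt{\tau^*}(v)] = \int_0^\infty \Pr[\algt{\tau^*}(v)\geq x]\,dx \geq (1-\varepsilon)\int_0^\infty \Pr[\max_i v_i \geq x]\,dx = (1-\varepsilon)\EE[\max_i v_i]$. I expect the main subtlety—the ``working a bit harder'' the text alludes to—to be the $x\geq\tau^*$ case: one must be careful that the event ``block $\ell$ accepts a value $\geq x$'' is correctly lower-bounded by the disjoint events $\{v_i^{(\ell)}\geq x,\ v_j^{(\ell)}<\tau^* \ \forall j\neq i\}$, which requires $x \geq \tau^*$ so that the accepted value in these scenarios is precisely $v_i^{(\ell)}$ (the first, and only, value clearing $\tau^*$), and to handle ties/point masses via randomization if one does not assume continuity. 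The conditioning on not having stopped in earlier blocks is clean because it only affects \emph{whether} block $\ell$ is reached, not the distribution of values within block $\ell$, by independence across blocks. A minor point is that $k\geq\log_2(1/\varepsilon)$ need not be an integer in the hypothesis, so one uses $k \geq \lceil \log_2(1/\varepsilon)\rceil$ copies, giving $2^{-k}\leq\varepsilon$.
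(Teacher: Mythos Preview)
Your proposal is correct and follows essentially the same approach as the paper: the median threshold $\tau^*$, the case split on $x$ versus $\tau^*$, and the Samuel-Cahn disjoint-events trick $\{v_i\geq x,\ v_j<\tau^*\ \forall j\neq i\}$ for the case $x\geq\tau^*$, followed by the geometric sum over blocks. The only difference is organizational: the paper conditions on the stopping block $\hat r$ and proves $\Pr[\algt{\tau^*}(v)\geq x\mid \hat r=r]\geq \Pr[\max_i v_i\geq x]$ for each $r$, then multiplies by $\Pr[\hat r\neq 0]=1-2^{-k}$, whereas you sum directly over blocks the unconditional contribution $2^{-(\ell-1)}\cdot\tfrac12\,\Pr[\max_i v_i\geq x]$; the two computations coincide. (One triviality: since $k$ is already an integer, $k\geq\log_2(1/\varepsilon)$ immediately gives $2^{-k}\leq\varepsilon$ without any ceiling.)
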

\begin{proof}
Consider the unique  threshold $\tau^*$ satisfying the following equation
\begin{equation}
    \prod_{i=1}^n \Pr_{v_i
\sim F_i}[v_i \leq \tau^*] = \frac{1}{2} .\label{eq:threshold}
\end{equation}

In what follows, for ease of notation we consider the input sequence of $nk$ values as $v_1,\ldots,v_{nk}$, where $v_{n(\ell-1)+i}\sim F_i$ for every $\ell\in [k]$ and every $i\in [n]$.
We next show that for every number of copies $k\geq 1$, given that the algorithm accepts a value, its expectation is larger than the value of the prophet for a single block, i.e., for every $k\ge 1$,
\begin{equation} \label{eq:ex_th}
    \EE_{v \sim (F_1, \dots, F_n)^k}[\algt{\tau^*}(v) \mid  \text{there exists }  i\in [nk]: ~v_i \geq \tau^*] \geq \EE_{v \sim (F_1, \dots, F_n)}[\max_{i \in [n]} v_i].
\end{equation}
In fact we show the following stronger claim, that for every $x\ge 0$ it holds that \begin{equation} \label{eq:ex_th-stochastic}
    \Pr_{v \sim (F_1, \dots, F_n)^k}[\algt{\tau^*}(v) \geq x \mid  \text{there exists }  i\in [nk]: ~v_i \geq \tau^*] \geq \Pr_{v \sim (F_1, \dots, F_n)}[\max_{i \in [n]} v_i \geq x].
\end{equation}

Let $\hat{i}$ be the first index of a value $v_i$ that exceeds $\tau^*$ (and $\hat{i}=0$ if no such $v_i$ exists), and let $\hat{r} = \lceil \hat{i}/n \rceil$, i.e., the block from which a value is chosen. 
The LHS of Equation~\eqref{eq:ex_th} (respectively, Equation~\eqref{eq:ex_th-stochastic})
can be rewritten as $\EE_{v \sim (F_1, \dots, F_n)^k}[\algt{\tau^*}(v) \mid  \hat{r}\neq 0]$ (respectively, $\Pr_{v \sim (F_1, \dots, F_n)^k}[\algt{\tau^*}(v)  \geq x \mid  \hat{r}\neq 0]$).
Thus, it is sufficient to prove that  Equation~\eqref{eq:ex_th} 
 holds for every non-zero realization of $\hat{r}$, i.e., for every $k\ge 1$ and every $r\in [k]$,
\begin{equation} \label{eq:ex_th2}
    \EE_{v \sim (F_1, \dots, F_n)^k}[\algt{\tau^*}(v) \mid  \hat{r}=r] \geq \EE_{v \sim (F_1, \dots, F_n)}[\max_{i \in [n]} v_i].
\end{equation}

Since the values of $v$ that don't belong to the $r$-th $n$-tuple of values can be ignored, one can observe that  Equation~\eqref{eq:ex_th2} is equivalent 
to the proof of the original prophet inequality, provided by Samuel-Cahn \cite{SamuelCahn84}. We next prove the stronger claim of Equation~\eqref{eq:ex_th-stochastic}.
To this end, we show that for every $x \in
\mathbb{R}_{\geq 0}$, and  for $r\neq 0$  it holds that 
$$   \Pr_{v \sim (F_1, \dots, F_n)^k}[\algt{\tau^*}(v) \geq x \mid  \hat{r}=r] \geq \Pr_{v \sim (F_1, \dots, F_n)}[\max_{i \in [n]} v_i \geq x]. $$
Note, that for $x\leq \tau^*$, the LHS is $1$, thus the inequality holds, and it is sufficient to prove it for $x>\tau^*$. Thus, 

\begin{align*}
  &\Pr_{v \sim (F_1, \dots, F_n)^k}[\algt{\tau^*}(v) \geq x \mid  \hat{r}=r] \\ 
  & =  
  \sum_{i=(r-1)n+1}^{rn} \Pr[\text{for all }{j \in [i-1]}: v_j < \tau^*,  \text{ and }  v_i>x  \mid \hat{r}=r]  
   \\
  &=\sum_{i=(r-1)n+1}^{rn} \Pr[\text{for all }{j \in [i-1]}: v_j < \tau^*, v_i>x  \text{ and } \hat{r}=r] \cdot \frac{1}{\Pr[\hat{r}=r]} \\
  &=2^r \sum_{i=(r-1)n+1}^{rn} \Pr[\text{for all }{j \in [i-1]}: v_j < \tau^*,  \text{ and } v_i>x ]  \\
  &\ge 2^r\sum_{i=(r-1)n+1}^{rn} \frac{1}{2^{r}}\cdot \Pr[ v_i>x ]  \geq  \Pr_{v \sim (F_1, \dots, F_n)}[\max_{i \in [n]} v_i \geq x],
\end{align*}
where the first equality is since $x>\tau^*$, the third equality is since by definition of $\tau^*$, $\hat{r}=r$ with probability $1/2^r$ and since $\hat{r}=r$ follows from the other events in the probability, the first inequality is since the  $\Pr[\text{for all } j\in\{1,\ldots (r-1)n\}: v_j <\tau^*] = 1/2^{r-1}$, and since  $\Pr[\text{for all } j\in \{(r-1)n+1,\ldots, i-1\}: 
v_j <\tau^*] \geq 1/2$, and all the events are independent. The last inequality follows since the sum of probabilities that $v_i$ exceeds $x$, is at least the probability that the maximum exceeds $x$.

The proof of the lemma then follows by combining Equation~\eqref{eq:ex_th} (respectively for the ``furthermore'' part, Equation~\eqref{eq:ex_th-stochastic}) with the observation that, 
by definition of $\tau^*$, it holds that 
\begin{align*}
\Pr[ \text{there exists } i\in [nk]: ~v_i \geq \tau^*] &  = 1-\Pr[ \text{for all } i\in [nk]: ~v_i \leq \tau^*] \\ 
&=  1-\Pr[ \text{for all } i\in [n]: ~v_i \leq \tau^*]^k =1-\frac{1}{2^k} \geq 1-\varepsilon,
\end{align*}
where the inequality holds  for $k\geq \log_2(1/\varepsilon)$.
\end{proof}

Our next result, Proposition~\ref{prop:static}, shows a matching lower bound of $\Omega(\log(1/\varepsilon))$ on the competition complexity of single
threshold algorithms. This lower bound holds even with respect to the case of i.i.d.~distributions.

\begin{proposition} \label{prop:static}
For every $n\geq 2$, and for every $\varepsilon\in(0,1)$ there exists an instance with $n$ i.i.d. values that are distributed according to some distribution $F$, such that for every $\tau$, and every $k  < \frac{\log_2  (1/\varepsilon)}{804}$, it holds that
\begin{equation} \label{eq:upper-static}
    \EE_{v \sim F^{k\cdot n}}[\algt{\tau}(v)] < (1-\varepsilon) \cdot \EE_{v \sim F^n}[\max_{i \in [n]} v_i].
\end{equation}
\end{proposition}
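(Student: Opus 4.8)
The goal is to exhibit, for every $n\geq 2$ and every $\varepsilon\in(0,1)$, an i.i.d.\ instance on which no single threshold can recover a $(1-\varepsilon)$ fraction of the prophet unless the number of copies is $\Omega(\log(1/\varepsilon))$. The natural approach is to build a distribution with geometrically spaced ``scales,'' so that the prophet's expected maximum is dominated by a few rare high-value outcomes, but any single threshold is forced into a dilemma: either it is set so low that it almost surely stops in the first block on a mediocre value (losing the high-value tail entirely), or it is set so high that it almost never triggers at all (and even $k$ blocks only boost the trigger probability multiplicatively, hence $(1-q)^k$ remains bounded away from $1$ unless $k$ is large). The constant $804$ in the statement is just the price of making this tradeoff quantitatively airtight with a concrete distribution.

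\medskip

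The plan is as follows. First I would fix the distribution $F$ to be supported on a geometric ladder of values $a_1 < a_2 < \cdots$, say $a_j = c^j$ for a suitable constant $c$, with $\Pr_{v\sim F}[v = a_j]$ or $\Pr_{v \sim F}[v \geq a_j]$ decaying geometrically as well (e.g.\ $\Pr[v \geq a_j] = b^{-j}$ for a constant $b$), truncated at a level high enough that the tail beyond it is negligible; I would tune $b$ and $c$ so that each scale contributes a comparable amount to $\EE_{v\sim F^n}[\max_i v_i]$, making that expectation of order (number of relevant scales) times a common block, so that ``missing'' the top few scales costs a constant fraction of the prophet. Second, for an arbitrary threshold $\tau$, I would locate the scale index $j(\tau)$ such that $\tau$ sits between $a_{j(\tau)}$ and $a_{j(\tau)+1}$, and split into two regimes. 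In the ``low threshold'' regime, $\Pr_{v\sim F^n}[\max_i v_i \geq \tau]$ is close to $1$, so with overwhelming probability $\ALG_\tau$ accepts some value in the very first block; conditioned on this, the accepted value is at most the block maximum restricted to being $\geq \tau$, and crucially cannot exceed the range of scales present, so the algorithm systematically forfeits the contribution of all scales above some bounded index — I would show this forfeited mass is at least an $\varepsilon$-sized fraction of the prophet once $\tau$ is not astronomically large. In the ``high threshold'' regime, let $q = \Pr_{v\sim F^n}[\max_i v_i \geq \tau]$ be small; then over $kn$ draws the probability that $\ALG_\tau$ ever accepts is $1 - (1-q)^{k} \le kq$ (by a union bound over blocks, or $1-(1-q)^k$ directly), and even if every acceptance were worth the full prophet value, this bounds $\EE[\ALG_\tau(v)]$ by $kq \cdot (\text{something})$; I would then argue that for the construction, $q$ times the conditional expected value is itself at most a constant times one ``scale block,'' so $kq$ times it is $o$ of the prophet unless $k \gtrsim \log(1/\varepsilon)$ — this is where the geometric spacing forces $q$ to be exponentially small in the number of scales the threshold skips.

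\medskip

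Concretely, I would organize the proof as: (i) define $F$ with parameters to be fixed, compute $\EE_{v\sim F^n}[\max_i v_i]$ and $\Pr[\max_i v_i \geq a_j]$ in closed form (these are geometric sums); (ii) prove a clean upper bound on $\EE_{v\sim F^{kn}}[\ALG_\tau(v)]$ that is uniform in $\tau$, of the form $\max$ over the two regimes above, roughly $\EE[\ALG_\tau(v)] \le (\text{prophet}) - (\text{contribution of top scales above }\tau) \;+\; k\cdot q(\tau)\cdot(\text{bounded per-scale value})$; (iii) optimize the right-hand side over $\tau$ — the worst case for us is when the two error terms balance, which happens at a threshold skipping $\Theta(\log k)$ or so scales — and show the resulting bound is still $<(1-\varepsilon)\cdot(\text{prophet})$ as long as $k < \log_2(1/\varepsilon)/804$. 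Step (iii), reconciling the two regimes with a single choice of distribution parameters and carrying the constants so that the final threshold comes out to exactly $804$ (or anything $O(1)$), is the main obstacle: one must choose $b, c$, the truncation level, and the number of active scales as explicit functions of $\varepsilon$ (the number of scales should be $\Theta(\log(1/\varepsilon))$ so that the prophet is dominated by a length-$\Theta(\log(1/\varepsilon))$ geometric sum) and then verify the inequality chain without any slack blowing up. A secondary subtlety is handling point masses: since $F$ here is discrete, $\ALG_\tau$ may need the tie-breaking randomization allowed in the model, but because we only need an \emph{upper} bound on the algorithm's performance, we may freely assume the algorithm accepts on ties, which only helps it and keeps the argument clean.
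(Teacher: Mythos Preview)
Your high-level tradeoff is the right one, but the geometric ladder is far more machinery than a single-threshold lower bound needs, and your low-threshold argument misidentifies where the loss comes from. The paper uses a \emph{two}-level construction: $v_i = 1 + 200\varepsilon\cdot\mathrm{Bernoulli}(1/(20n)) + \mathrm{Uniform}[0,\varepsilon]$. The prophet gets roughly $1+10\varepsilon$ because the bonus appears with constant probability among $n$ draws. For any threshold below the bonus level the algorithm accepts essentially the first value it sees; since the values are i.i.d., the conditional expectation of the accepted value is just $\EE[v\mid v\ge\tau]$, which is at most $1+O(\varepsilon)$ with a constant strictly smaller than what the prophet gets (the bonus is rare for a \emph{single} draw). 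For any threshold above the bonus level the per-block acceptance probability is at most a constant $<1$, so over $k$ blocks the acceptance probability is $1-16^{-k}$; multiplied by the maximum possible value $1+201\varepsilon$ this stays below $1$ unless $16^{-k}\le O(\varepsilon)$, forcing $k=\Omega(\log(1/\varepsilon))$. Two cases, each a short calculation; the constant $804$ falls out directly.

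By contrast, a multi-scale ladder is the natural hard instance for \emph{block} thresholds (where each of $k$ thresholds can target a different scale and you need many scales to defeat them all); for a single threshold it is overkill. More importantly, your low-threshold step is not right as written: ``the accepted value is at most the block maximum\ldots\ so the algorithm forfeits all scales above some bounded index'' does not hold --- with a low threshold the accepted value can land at \emph{any} scale above $\tau$. What you actually lose is the gap between $\EE[v\mid v\ge\tau]$ (first exceedance, one draw) and $\EE[\max_{i\in[n]}v_i]$ (best of $n$ draws), and with a ladder tuned so every scale contributes comparably to the max it is not obvious this gap is large enough without a separate computation. The two-level instance makes exactly this single-draw-versus-max-of-$n$ gap the source of the $\varepsilon$, with no balancing across scales needed.
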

\begin{proof}
It is sufficient to consider $\varepsilon< 1/20$, since otherwise $k=0$, and the claim holds trivially.
Consider the distribution $F$ in which $v_i = 1+200\varepsilon \cdot \text{Bernoulli}\left(1/(20 n)\right) +\text{Uniform}[0,\varepsilon]$  for every $i\in [n]$ and every $\ell \in [k]$.
The RHS of Equation~\eqref{eq:upper-static} satisfies that
\begin{equation}
     (1-\varepsilon) \cdot \EE_{v \sim F^n}[\max_{i \in [n]} v_i] \geq (1-\varepsilon) \cdot \Big(1 + 200\varepsilon (1-e^{-1/20}) \Big) \geq 1+ 8\varepsilon, \nonumber
\end{equation}
where the first inequality is since the maximum is at least $1+200\varepsilon$ with probability greater than $1-e^{-1/20}$, and otherwise it is at least $1$. The second inequality holds for every $\varepsilon \in(0,1/20)$.
Now consider a static threshold algorithm with a threshold $\tau$.\\

\noindent \textbf{Case 1:}
If $\tau> 1+\frac{(20n -29) \varepsilon}{20 n}$, then the LHS of Equation~\eqref{eq:upper-static} satisfies that 
\begin{align*}
\EE_{v \sim F^{k\cdot n}}[\algt{\tau}(v)]  & \leq  \Pr[\text{there exists } i\in [nk]: v_i \geq \tau] \cdot (1+201\varepsilon) \\ 
& \leq  (1-(1-3/2n)^{nk})(1+201\varepsilon) \leq (1-16^{-k}) (1+201\varepsilon) \leq 1,
\end{align*}
where the first inequality is since the support of $F$ is bounded by $1+201\varepsilon$, the second inequality is since the probability that $v_i>\tau$ is at most $3/(2n)$, the third inequality is since $n\geq 2$, and the last inequality is since $k  < \frac{\log_2  (1/\varepsilon)}{804} $.
Thus, Equation~\eqref{eq:upper-static} holds.\\

\noindent \textbf{Case 2:} If $\tau \leq  1+\frac{(20 n -29) \varepsilon}{20 n}$, then let $q= \frac{1+\varepsilon-\tau}{\varepsilon} \geq \frac{29}{20n}$. The LHS of Equation~\eqref{eq:upper-static} satisfies that 
\begin{align*}
\EE_{v \sim F^{k\cdot n}}[\algt{\tau}(v)]  & \leq  \EE_{v_i \sim F}[v_i \mid v_i \geq \tau ]\\
&= \EE_{v_i \sim F}[v_i \cdot  \mathds{1}[ v_i \geq \tau] ] /\Pr[v_i \geq \tau]  \\ 
& \leq  \left(\frac{1}{20 n}(1+201\varepsilon) + \frac{20n-1}{20n}q(1+\varepsilon)\right) /\left( \frac{1}{20n} + \frac{20n-1}{20n}q\right) \\ 
& =  \left((1+201\varepsilon) + (20n-1)q(1+\varepsilon)\right) /\left( 1 + (20n-1)q\right)\\ 
&  <  1+8\varepsilon,
\end{align*}
where the last inequality holds for every $n\geq 2$, and $q \geq \frac{29}{20n}$.
Thus, Equation~\eqref{eq:upper-static} holds.
\end{proof}

\section{Combinatorial Extensions}\label{sec:extensions}

In this section, we present a generalization of our model to combinatorial settings, and discuss some initial results. We present the general model in Section~\ref{sec:general-model}, and some preliminary results for combinatorial auctions in Section~\ref{sec:xos}. In Section~\ref{app:matching} we give additional results for bipartite matching with one-sided vertex arrivals. 
This latter result is actually implied by the result in Section~\ref{sec:xos}. The purpose of presenting an alternative proof
is to show how a different technique, in this case online contention resolution schemes, can also be used to study the competition complexity in combinatorial settings.

\subsection{A General Model}\label{sec:general-model}

In online combinatorial Bayesian selection problems, there is a series of  $n$ decisions that need to be made. Each decision $i\in[n]$ is associated with a set of alternatives $A_i$ from which the decision-maker needs to choose, and with an information $v_i$ drawn independently from some $F_i$ on support $S_i$ that is revealed to the decision-maker at the time of decision $i$. We denote by $S=\bigtimes_{i \in [n]} S_i$, $F= \bigtimes_{i \in [n]} F_i$, and $A= \bigtimes_{i \in [n]} A_i$.
Additionally,  there is a non-empty feasibility constraint $\feas \subseteq A$, such that the decision-maker, must select a tuple of alternatives $(a_1,\ldots,a_n) $ that is in $\feas$, and there is a reward function $f: A \times S \rightarrow \reals_{\geq 0}$. 

At each step $i \in [n]$, the algorithm
observes $v_i$, and needs to select an alternative $a_i \in A_i$ in 
an immediate and  irrevocable way.  
The algorithm's performance is measured against the offline optimum.
By fixing a class of feasibility constraints $\mathbb{C}_\feas$, a class of valuation functions $\mathbb{C}_f$, and a class of distributions $\mathbb{C}_F$,
an online algorithm $\ALG$ is $\alpha$-competitive if
\[
\inf_{F\in \mathbb{C}_F}  \inf_{\feas \in\mathbb{C}_\feas} \inf_{f\in \mathbb{C}_f} \frac{\EE_{v \sim F}[f(\ALG(v),v) \cdot \indicator{\ALG(v) \in \feas}]}{\EE_{v \sim F}[\max_{a\in \feas} f(a,v)]} \geq \alpha,
\]
where $\ALG(v)$ is the (possibly random) tuple of alternatives chosen by the algorithm $\ALG$ when observing a sequence of information $v$.

A simple example of this setting is when for all $i\in [n]$ we have
$A_i= \{0,1\}$, 
$\mathbb{C}_\feas = \{\{(a_1,\ldots,a_n) \mid \sum_{i\in [n]} a_i \leq 1\}\}$,  $\mathbb{C}_f = \{\{ f(a,v) = \sum_{i \in [n]} a_i \cdot v_i\}\}$,  and $\mathbb{C}_F = \Delta^n$, which corresponds to the standard (single-choice) prophet inequality setting.

We next describe a
generalization of the block model to the combinatorial Bayesian selection framework. 
The input to the algorithm is given by $k$ copies of an online combinatorial Bayesian selection problem, so there are $kn$ decisions in total. For every $j \in [kn]$, the $j$-th decision is of type $i$ if 
$j\equiv i\mod n$,
and for each decision $j$ of type $i$, the information $v_j$ is sampled independently according to $F_i$. 
For every decision $j$ of type $i$ we must select an alternative in $A_i$.

The output of the algorithm is a $kn$-dimensional vector of alternatives.
To define feasibility, and to evaluate the reward achieved by an output, we require that there is an infinite series of classes of feasibility constraints $(\mathbb{C}^i_\feas)_{i\in \mathbb{N}}$, and an infinite series of classes of reward functions $(\mathbb{C}^i_f)_{i \in \mathbb{N}}$ so that we can evaluate feasibility via $\feas_k \in \mathbb{C}^k_\feas$ and the reward via $f_k\in \mathbb{C}^k_f$. For a concrete problem it is typically clear how to define $(\mathbb{C}^i_\feas)_{i\in \mathbb{N}}$ and $(\mathbb{C}^i_f)_{i \in \mathbb{N}}$ based on $\mathbb{C}_\feas$ and $\mathbb{C}_f$.

We are interested in comparing the expected reward of the algorithm on $k$ copies to the expected optimal reward on a single copy.

\begin{definition}[Combinatorial competition complexity] \label{def:comb}
Given a series of decisions associated with alternatives $A=A_1\times\ldots\times A_n$, a class of distributions $\mathbb{C}_F$, an infinite series of classes of feasibility constraints $(\mathbb{C}^i_\feas)_{i\in \mathbb{N}}$, and an infinite series of classes of reward functions $(\mathbb{C}^i_f)_{i \in \mathbb{N}}$,   
for every $\varepsilon \ge 0$, the $(1-\varepsilon)$-competition complexity with respect to a class of algorithms $\mathcal{A}$ is the smallest positive integer number $k(\varepsilon)$ such that for every $k\geq k(\varepsilon)$, every $F\in \mathbb{C}_F$, $\feas_k\in \mathbb{C}^k_\feas$, $f_k\in \mathbb{C}^k_f$, it holds that
\begin{equation}
\label{eq:defcomb}    
\max_{\ALG \in \mathcal{A}_{n,k}}\EE_{v \sim F^k}[f_k(\ALG(v),v)\cdot \indicator{\ALG(v) \in \feas_k}] \geq (1-\varepsilon) \cdot \EE_{v \sim F}[\max_{a \in \feas_1} f_1(a,v)],
\end{equation}
where $\mathcal{A}_{n,k}$ are all algorithms in $\mathcal{A}$ that are defined on $\mathbb{C}_F$ and $\mathbb{C}^k_\feas$, and $\ALG(v)$ is the (possibly random) tuple of alternatives chosen by $\ALG$ when observing a sequence of values $v$. 
\end{definition}

For example, to obtain the competition complexity for the  standard prophet inequality setting, 
we can let 
$A_i= \{0,1\}$ for all $i\in [n]$, 
$\mathbb{C}^k_\feas = \{\{(a_1,\ldots,a_{n\cdot k}) \mid \sum_{i\in [n\cdot k]} a_i \leq 1\}\}$,  $\mathbb{C}^k_f = \{\{ f(a,v) = \sum_{i \in [n\cdot k]} a_i \cdot v_i\}\}$,  and $\mathbb{C}_F = \Delta^n$.

\subsection{Combinatorial Auctions}\label{sec:xos}

In this section, we generalize the static pricing scheme by Feldman et al.~\cite{FeldmanGL15} for XOS markets, to a dynamic pricing scheme that has a $(1-\varepsilon)$-competition complexity  of $O\left(\log(1/\varepsilon)\right)$, and a static pricing scheme that has a $(1-\varepsilon)$-competition complexity  of $O\left(1/\varepsilon\right)$.

In the combinatorial auction setting, there is a set $M$ of $m$ items, and $n$ agents. Each agent $i\in [n]$ is associated with a valuation function $v_i:2^M\rightarrow \reals_{\geq 0} $ that is drawn independently from a distribution $F_i$. We consider price-based algorithms, which set a price $p_j$ for each item $j \in M$. 
The agents then arrive one-by-one, and purchase a set of available items in their demand. That is, agent $i$ buys a set $T_i$ that maximizes the utility $v_i(T) - \sum_{j \in T} p_j$ over all $T \subseteq M'$, where $M' \subseteq M$ are the items that are available when agent $i$ arrives. 
We evaluate the performance of a pricing scheme by the expected social welfare it achieves, that is  $\EE_{v \sim F}[\sum_{i \in [n]} v_i(T_i)]$.

We consider a repeated version of the combinatorial auction problem where we see $k$ independent copies of the buyers. As before we refer to each copy as a block. The valuation of buyer $i \in [nk]$ of type $r \in [n]$ is drawn independently according to the distribution $F_{r}$. Each buyer $i \in [nk]$ purchases a set of available items $T_i$ that maximizes their utility. 
We compare the expected social welfare of a price-based algorithm on $k$ copies to the expected maximum social welfare on a single copy.

A pricing scheme is called \emph{static} if the prices of the items are fixed in advance before the arrival of the agents, and is called \emph{dynamic}, if the prices may adapt after each agent has made a purchase (before the arrival of the next agent).
We define {\it block-consistent prices} as prices that are static throughout each block but can change between blocks.

A valuation function $v:2^M \rightarrow \reals_{\geq 0}$ is called XOS (a.k.a. fractionally subadditive) if there exists a non-empty set of additive functions $G= \{g_1,\ldots g_s\}$ for some positive integer $s$, 
such that $v(T)=\max_{t\in [s]} \sum_{j\in T} g_t(j)$ for every $T\subseteq M$. 
A valuation function $v: 2^M \rightarrow \reals_{\geq 0}$ is submodular if for every $T, T'$ with $T \subseteq T'$ and every $j \in M \setminus T' \subseteq M$ it holds that $v_i(T \cup \{j\}) - v_i(T) \geq v_i(T' \cup \{j\}) - v_i(T')$. Every submodular valuation function is XOS, but not vice versa \cite{LehmannLN06}.
We refer to combinatorial auctions in which all valuation functions are submodular (resp.~XOS), as submodular (resp.~XOS) combinatorial auctions.

\begin{theorem}\label{thm:xos-dynamic}
For every $k\geq 1$, and every XOS combinatorial auction defined by $M$, $n$, and a product distribution $F=F_1\times\ldots\times F_n$,  there exists a block-consistent pricing scheme on $k$ copies such that 
$$ \EE_{v\sim F^k}\Big[\sum_{i\in [n k]} v_i(T_i)\Big] \geq  \left(1-\frac{1}{2^k}\right) \cdot \EE_{v\sim F}\Big[\max_{(T_1, \ldots, T_n) \in P(M)} \sum_{i\in [n]} v_i(T_i)\Big],$$
where $T_i$ is the demanded set purchased by each agent $i\in [n]$, and $P(M)$ is the set of all partitions of $M$ into $n$ sets. In particular, the $(1-\varepsilon)$-competition complexity of block-consistent prices for submodular and XOS combinatorial auctions is $O(\log(1/\varepsilon))$.
\end{theorem}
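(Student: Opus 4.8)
The plan is to lift the static posted‑price argument of Feldman, Gravin and Lucier~\cite{FeldmanGL15} from one copy to $k$ copies by letting the posted prices depend (only) on which items have already been sold, and to show that this makes the ``residual'' welfare that is still available halve with every block.

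Fix the XOS instance $(M,n,F)$ and write $\mathrm{OPT}(S):=\EE_{v\sim F}[\max\sum_{i\in[n]} v_i(T_i)]$ for the expected optimal welfare when only the items in $S\subseteq M$ are present, the maximum ranging over partitions $(T_1,\dots,T_n)$ of $S$; the right‑hand side of the theorem is $\mathrm{OPT}(M)$. For a realization $v$ of a single copy, fix an optimal partition and, for each agent $i$, an additive function $a_i^v$ supporting $v_i$ on its optimal bundle (which exists because $v_i$ is XOS); the \emph{contribution} of item $j$ is $c_j^S(v):=a^v_{i(j)}(j)$, where $i(j)$ is the owner of $j$, so that $\sum_{j\in S}\EE_v[c_j^S(v)]=\mathrm{OPT}(S)$. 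The scheme I would use is: in block $\ell$, let $S_{\ell-1}$ be the set of items not yet sold at the start of the block (so $S_0=M$), and post for every $j\in S_{\ell-1}$ the price $p_j^{(\ell)}:=\tfrac12\EE_{v\sim F}[c^{S_{\ell-1}}_j(v)]$, i.e.\ the Feldman--Gravin--Lucier price \emph{recomputed for the current item set}. These prices are fixed throughout block $\ell$, so the scheme is block‑consistent. Let $X_\ell\subseteq S_{\ell-1}$ be the (random) set of items this block sells, so $S_\ell=S_{\ell-1}\setminus X_\ell$.

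The argument then rests on two ingredients, applied to each block $\ell$ conditioned on $S_{\ell-1}$:
\begin{enumerate}
\item[(i)] \emph{Single‑shot guarantee} (Feldman et al.~\cite{FeldmanGL15}): the expected welfare generated in block $\ell$ is at least $\tfrac12\,\mathrm{OPT}(S_{\ell-1})$.
\item[(ii)] \emph{Conservation}: the expected welfare generated in block $\ell$ is at least $\mathrm{OPT}(S_{\ell-1})-\EE[\mathrm{OPT}(S_{\ell})\mid S_{\ell-1}]$, i.e.\ a block's welfare compensates, in expectation, for the drop in attainable expected optimum caused by the items it consumes.
\end{enumerate}
Given (i) and (ii) the theorem follows by a short computation: combining them gives $\EE[\mathrm{OPT}(S_\ell)\mid S_{\ell-1}]\le \mathrm{OPT}(S_{\ell-1})-\tfrac12\mathrm{OPT}(S_{\ell-1})=\tfrac12\mathrm{OPT}(S_{\ell-1})$, so iterating yields $\EE[\mathrm{OPT}(S_k)]\le 2^{-k}\mathrm{OPT}(M)$, while summing (ii) over $\ell=1,\dots,k$ telescopes to $\EE[\text{total welfare}]\ge \mathrm{OPT}(M)-\EE[\mathrm{OPT}(S_k)]\ge (1-2^{-k})\mathrm{OPT}(M)$. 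Since every submodular valuation is XOS, the submodular case is covered; and rephrasing $1-2^{-k}\ge 1-\varepsilon$ as $k\ge \log_2(1/\varepsilon)$ gives the $O(\log(1/\varepsilon))$ competition‑complexity bound.

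The main obstacle is ingredient (ii). Its proof mirrors the Feldman--Gravin--Lucier welfare accounting---decomposing a block's welfare into the revenue it collects plus the utilities of its agents, and lower‑bounding the latter via the supporting additive functions of a fresh copy restricted to the surviving items---but the delicate point is the very correlation that governs the constant in~\cite{FeldmanGL15}: the items a block actually sells need not be exactly the items carrying the optimum's value, so one must argue that deleting the sold set $X_\ell$ cannot cost a fresh copy's optimum more than block $\ell$ already extracted. This is precisely where it matters that the prices are FGL prices for the \emph{current} item set rather than fixed once and for all, and it is the combinatorial analogue of the ``retry on failure'' idea behind the single‑threshold bound of Lemma~\ref{lem:static}. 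Everything else---the telescoping, the submodular special case, and the translation into competition complexity---is routine.
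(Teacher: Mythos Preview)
Your ingredient (ii) is false, and in fact the mechanism you propose does not achieve the stated $(1-2^{-k})$ bound. Take a single item, two buyer types, with buyer~1 arriving first and having deterministic value $1$, and buyer~2 having value $100$ with probability $0.01$ and $0$ otherwise. Then $\mathrm{OPT}(M)=1.99$, the FGL price on the full item set is $0.995<1$, and in block~1 buyer~1 always buys, giving welfare $W_1=1$ and $S_1=\emptyset$. Conservation would require $1\ge \mathrm{OPT}(M)-\mathrm{OPT}(\emptyset)=1.99$, which fails. Worse, since the item is gone after block~1, your mechanism's total welfare is $1$ regardless of $k$, whereas the theorem demands at least $(1-2^{-k})\cdot 1.99$, which exceeds $1$ already for $k=2$. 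The phenomenon you flag as ``delicate''---that the sold set need not be the value-carrying set---is not just delicate but fatal: a low-value buyer can consume an item at the FGL price while destroying far more expected optimum than it pays in welfare.

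The paper's proof avoids this by using \emph{non-adaptive} prices that start high and decrease across blocks: in block $t$ the price of item $j$ is $(1-2^{-(k+1-t)})$ times its expected contribution to the original $\mathrm{OPT}$ (not to the optimum on the remaining items). Early blocks therefore only sell to buyers whose value is close to the item's full contribution, and the final block uses the standard half-price. The analysis is a direct revenue-plus-surplus accounting showing that the coefficient of each $\mathrm{OPT}_{j,i}$ telescopes to exactly $1-2^{-k}$; there is no conservation lemma. In the example above, the block-1 price is $1.4925$, so buyer~1 cannot buy early; the high-value realization of buyer~2 is caught in block~1, and buyer~1 picks up the item in block~2 at the lower price, recovering the full $1.99$.
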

\begin{proof}
For every item $j\in M$, and agent $i$ let $\OPT_{j,i} = E_{v\sim F }[g_{i(j)}(j) \cdot \indicator{i=i(j)}]$, where $i(j)$ is the agent that receives item $j$ in the welfare-maximizing  allocation according to valuation profile $v=(v_1,\ldots,v_n)$, and $g_{i(j)}$ is the additive function that corresponds to the definition of XOS valuation that maximizes the value of the set agent $i(j)$ receives in the optimal allocation according to valuation profile $v=(v_1,\ldots,v_n)$.

Let $p$ be the the block-consistent pricing scheme where the price during block $t\in [k]$  for item $j\in M$ is $p_{t,j} = (1-\frac{1}{2^{k+1-t}})\cdot \sum_{i\in [n]} \OPT_{j,i}$.
Let $X_j$ be the block in which item $j$ is purchased (if it is not purchased, then let $X_j=k+1$), and 
let $q_{t,j}=\Pr[X_j = t]$.

The welfare of the mechanism can be decomposed into revenue and surplus. 
For the revenue, we have
$$\EE_{v\sim F^k}[\text{Revenue}] =\sum_{j\in M}\sum_{t\in [k]}q_{t,j} \cdot p_{t,j} = \sum_{j\in M}\sum_{t\in [k]}q_{t,j} \cdot \left(1-\frac{1}{2^{k+1-t}}\right)\cdot \sum_{i \in [n]} \OPT_{j,i} . $$
For every agent $i\in [nk]$, let's denote by $\AV_i$ the set of items that is available when $i$ arrives. For each type $r \in [n]$ let's denote by $T_{r}^*$ the set that agent $i$ of type $r$ receives in the welfare-maximizing allocation on profile $v \sim F$, and by $g_{r}^*$ the additive function that maximizes agent $r$'s value with respect to $v_i$ and $T_{r}^*$. Then, if agent $i$ is of type $r$,

\begin{align*} 
\EE_{v\sim F^k}[\Surplus_i] & =    \sum_{M'\subseteq M} \Pr[\text{AV}_i = M'] \cdot \EE_{v_i\sim F_{r}} \Big[\max_{T\subseteq M' }(v_i(T) -p(T))\Big] \\ 
& \geq  \sum_{M' \subseteq M} \Pr[\AV_i = M'] \cdot \EE_{v\sim F} [ v_{r}(T^*_{r}\cap M') -p(T^*_{r} \cap M')] \\ 
& \geq   \sum_{M'\subseteq M} \Pr[\AV_i = M'] \cdot \EE_{v\sim F} \Big[ \sum_{j \in T^*_{r}\cap M'} (g_{r}^*(j) -p(j))\Big] \\
& \geq \sum_{j\in M} \Big(1-\sum_{t=1}^{\lceil i/n\rceil}q_{j,t} \Big)\cdot 
\frac{\OPT_{j,r}}{2^{k+1-\lceil\frac{i}{n}\rceil} },
\end{align*}
where the first inequality is since $v_i$ and $v_{r}$ are drawn from the same distribution, and since $T_i$ is the demand set; the second inequality is by the definition of an XOS function; and the last inequality is by the definition of $\OPT_{j,i}$, and because the probability that $j\in AV_i$ is at least $1$ minus the sum of probabilities that $j$ was sold up to block $\lceil i/n\rceil$ included. 

Since 
$\EE_{v\sim F^k}[\Welfare]   =  \EE_{v\sim F^k}[\Revenue] + \sum_{i \in [n k]} \EE_{v\sim F^k}[\Surplus_i], $
it is sufficient to lower bound for each $j\in M$, and $i\in [n]$, the sum of the coefficient of $\OPT_{j,i}$ in the revenue with the sum of coefficients of  $\OPT_{j,i}$ in the surplus, which is 
\begin{align*}
\sum_{t=1}^k q_{t,j} \left(1- \frac{1}{2^{k+1-t}}\right)  + \sum_{t=1}^k \frac{1-\sum_{t'=1}^{t}q_{j,t'}}{2^{k+1-t}}  & =  1-\frac{1}{2^k} +  \sum_{t=1}^k q_{t,j} \left(1- \frac{1}{2^{k+1-t}}\right)  - \sum_{t=1}^k \sum_{t'=t}^{k} \frac{q_{j,t}}{2^{k+1-t'}} \\ 
& =  1-\frac{1}{2^k}.  
\end{align*}
This concludes the proof.
\end{proof}

We next show a similar result with worse guarantees that uses static pricing.

\begin{theorem}\label{thm:xos-static}
For every $k\geq 1$, and every XOS combinatorial auction defined by $M$, $n$, and a product distribution $F=F_1\times\ldots\times F_n$,  there exists a static pricing scheme on $k$ copies such that 
$$ \EE_{v\sim F^k}\Big[\sum_{i\in [n k]} v_i(T_i)\Big] \geq  \left(1-\frac{1}{k+1}\right) \cdot \EE_{v\sim F}\Big[\max_{(T_1,\ldots,T_n)\in P(M) }\sum_{i\in [n]} v_i(T_i)\Big],$$
where $T_i$ is the demanded set purchased by each agent $i\in [n]$, and $P(M)$ is the set of all partitions of $M$ into $n$ sets. 
In particular, the $(1-\varepsilon)$-competition complexity of static prices for submodular and XOS combinatorial auctions is $O(1/\varepsilon)$.
\end{theorem}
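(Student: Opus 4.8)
The plan is to follow the proof of Theorem~\ref{thm:xos-dynamic} almost verbatim, replacing the block-indexed prices by a single fixed price per item that is calibrated to be robust against \emph{when} each item sells. As before, for each item $j \in M$ and each type $i \in [n]$ set $\OPT_{j,i} = \EE_{v \sim F}[g_{i(j)}(j) \cdot \indicator{i(j) = i}]$, where $i(j)$ is the agent that gets $j$ in the welfare-maximizing allocation on a fresh profile $v$ and $g_{i(j)}$ is the supporting additive function for that agent's set, and put $\OPT_j = \sum_{i \in [n]} \OPT_{j,i}$. Since the optimal welfare decomposes as $\sum_{j} g_{i(j)}(j)$, we have $\EE_{v \sim F}[\max_{(T_1,\ldots,T_n) \in P(M)} \sum_{i} v_i(T_i)] = \sum_{j \in M} \OPT_j$. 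The static price I would use is $p_j = \frac{k}{k+1}\,\OPT_j$ for every $j$.

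Next I would decompose $\EE[\Welfare] = \EE[\Revenue] + \sum_{i \in [nk]} \EE[\Surplus_i]$. Write $q_{t,j} = \Pr[\text{item } j \text{ is sold in block } t]$ and $Q_j = \sum_{t=1}^{k} q_{t,j}$. Then $\EE[\Revenue] = \sum_{j} p_j Q_j = \frac{k}{k+1}\sum_{j} \OPT_j Q_j$. For the surplus, I would reuse the XOS deviation argument from the proof of Theorem~\ref{thm:xos-dynamic}: an agent of type $r$ arriving in block $b$ can deviate to purchasing $T^*_r \cap \AV_i$, where $(T^*_1,\ldots,T^*_n)$ is the offline optimum on an independent fresh profile, which, after separating the randomness of the earlier blocks from the fresh profile, lower-bounds its surplus contribution from item $j$ by $\Pr[j \in \AV_i] \cdot (\OPT_{j,r} - p_j\Pr[i(j)=r])$. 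Since an agent in block $b$ sees $j$ available with probability at least $1 - \sum_{t=1}^{b} q_{t,j}$, summing over the $n$ types in block $b$ (so that $\sum_r \OPT_{j,r} = \OPT_j$ and $\sum_r \Pr[i(j)=r] \le 1$) and then over $b = 1,\ldots,k$ gives $\sum_{i} \EE[\Surplus_i] \ge \frac{1}{k+1}\sum_{j} \OPT_j \sum_{b=1}^{k}\bigl(1 - \sum_{t=1}^{b} q_{t,j}\bigr)$.

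It then remains to verify item by item that $\frac{k}{k+1}Q_j + \frac{1}{k+1}\sum_{b=1}^{k}\bigl(1 - \sum_{t=1}^{b} q_{t,j}\bigr) \ge \frac{k}{k+1}$. Using $\sum_{b=1}^{k}\bigl(1 - \sum_{t=1}^{b} q_{t,j}\bigr) = k - \sum_{t=1}^{k} (k+1-t)\,q_{t,j}$, the left-hand side equals $\frac{k}{k+1} + \frac{1}{k+1}\sum_{t=1}^{k} (t-1)\,q_{t,j} \ge \frac{k}{k+1}$, since every coefficient $t-1$ is nonnegative. Summing over $j$ yields $\EE[\Welfare] \ge \bigl(1 - \frac{1}{k+1}\bigr)\sum_j \OPT_j$, which is the stated bound; the competition-complexity claim follows since $1 - \frac{1}{k+1} \ge 1 - \varepsilon$ whenever $k \ge \frac{1}{\varepsilon} - 1$, so $k(\varepsilon) = O(1/\varepsilon)$.

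The crux, and the reason the price coefficient is pinned to exactly $\frac{k}{k+1}$, is that with a static price the instance, not the algorithm, controls the distribution $(q_{1,j},\ldots,q_{k,j})$ over the block in which item $j$ is sold, so the revenue/surplus tradeoff has to be bounded uniformly over all such distributions. A larger coefficient makes the coefficient of $q_{1,j}$ negative (an instance that sells everything in the first block then becomes the bottleneck), while a smaller coefficient loses in the no-sale regime $q_j \equiv 0$; $\frac{k}{k+1}$ is the unique balance point, and the fact that the residual coefficients $t-1$ come out nonnegative is what closes the argument (and makes the guarantee tight in this accounting). The only other technical care needed is the independence bookkeeping in the surplus step, which is identical to that in the proof of Theorem~\ref{thm:xos-dynamic}.
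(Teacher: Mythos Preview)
Your proof is correct and uses the same price $p_j=\tfrac{k}{k+1}\OPT_j$, the same welfare $=$ revenue $+$ surplus decomposition, and the same XOS deviation argument as the paper. The only difference is bookkeeping: you carry the block-by-block availability bound $\Pr[j\in\AV_i]\ge 1-\sum_{t\le b}q_{t,j}$ over from the proof of Theorem~\ref{thm:xos-dynamic} and then verify $\tfrac{k}{k+1}Q_j+\tfrac{1}{k+1}\sum_b\bigl(1-\sum_{t\le b}q_{t,j}\bigr)\ge \tfrac{k}{k+1}$ via the coefficient computation. The paper instead uses the cruder uniform bound $\Pr[j\in\AV_i]\ge 1-q_j$ (with $q_j=Q_j$ the total sale probability) for \emph{every} agent in every block, so that the per-item accounting collapses to the one-line identity $q_j\bigl(1-\tfrac{1}{k+1}\bigr)+k\cdot\tfrac{1-q_j}{k+1}=1-\tfrac{1}{k+1}$. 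Your more refined bound recovers the same guarantee (with a nonnegative slack $\tfrac{1}{k+1}\sum_t(t-1)q_{t,j}$), but the paper's cruder bound makes the argument shorter and the equality exact---which also shows that the extra precision in tracking when an item sells is not needed for the static result.
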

\begin{proof}

We define price $p_j=(1-\frac{1}{k+1}) \sum_{i\in [n]}\OPT_{j,i}$, where $\OPT_{j,i}$ is defined as in the proof of Theorem~\ref{thm:xos-dynamic}.
Let $X_j$ be the indicator whether item $j$ was sold to one of the $nk$ agents, and
let $q_j = \Pr[X_j=1]$.
It holds that 
$$\EE_{v\sim F^k}[\Revenue] =\sum_{j\in M} q_j\cdot p_{j} = \sum_{j\in M}q_{j} \cdot \left(1-\frac{1}{k+1}\right)\cdot \sum_{i \in [n]} \OPT_{j,i} . $$
For every agent $i\in [nk]$, let's denote by $\AV_i$ the set of items that is available when $i$ arrives. For each type $r \in [n]$ let's denote by $T_{r}^*$ the set that agent $i$ of type $r$ receives in the welfare-maximizing allocation on profile $v \sim F$, and by $g_{r}^*$ the additive function that maximizes agent $r$'s value with respect to $v_i$ and $T_{r}^*$. Then, if agent $i$ is of type $r$,
\begin{align*} 
\EE_{v\sim F^k}[\Surplus_i] & =    \sum_{M' \subseteq M} \Pr[\AV_i = M'] \cdot \EE_{v_i\sim F_{r}} \Big[\max_{T \subseteq M'} (v_i(T) -p(T))\Big] \\ 
& \geq  \sum_{M' \subseteq M} \Pr[\AV_i = M'] \cdot \EE_{v\sim F} [ v_{r}(T^*_{r}\cap M') -p(T^*_{r} \cap M')] \\ 
& \geq   \sum_{M' \subseteq M} \Pr[\AV_i = M'] \cdot \EE_{v\sim F} \Big[ \sum_{j \in T^*_{r} \cap M'} (g_{r}^*(j) -p(j))\Big] \\
& \geq    \sum_{j\in M} \left(1-q_j \right)\cdot 
\frac{\OPT_{j,r}}{k+1 }. 
\end{align*}
Since 
$\EE_{v\sim F^k}[\Welfare]   =  \EE_{v\sim F^k}[\Revenue] + \sum_{i \in [n k]} \EE_{v\sim F^k}[\Surplus_i], $
it is sufficient to lower bound for each $j\in M$, and $i\in [n]$, the sum of the coefficient of $\OPT_{j,i}$ in the revenue with the sum of coefficients of  $\OPT_{j,i}$ in the surplus, which is 
$$ q_{j} \left(1- \frac{1}{k+1}\right)  + \sum_{t=1}^k \frac{(1-q_j)}{k+1}    =  1-\frac{1}{k+1}.  
$$
This concludes the proof.
\end{proof}

\bibliographystyle{plain}
\bibliography{biblio}

\appendix
\newpage
%% Dealing with Point Masses
\section{Dealing with Point Masses}
\label{app:pointmass}
In this appendix, we discuss the adaptions of our results to the case where the distributions have point masses.
To deal with point masses, we can use standard techniques (e.g., \cite{arnosti2022tight}) by allowing our algorithms to select, whenever observing a value that is equal to the threshold $\tau$, with a fixed probability $p$, and when the value exceeds $\tau$, with probability one.
In particular, every threshold strategy can be described by two parameters, $\tau$ and $p$.

To generalize our proofs
to the case where there are point masses, we do the following:
Given distributions $F_1,\ldots,F_n$, and a quantile $g\in (0,1)$, 
let $\tau^*=\sup_{\tau\ge 0} \prod_{i=1}^n \Pr_{v_i
\sim F_i}[v_i \leq \tau] \leq g$ and $q= \prod_{i=1}^n \Pr_{v_i
\sim F_i}[v_i \leq \tau^*]$.
For every $i\in [n]$, let $q_i= \Pr_{v_i\sim F_i}[v_i \leq \tau^*]$, and let $p_i= \Pr_{v_i\sim F_i}[v_i=\tau^*]$.

Then, it holds that $\prod_{i=1}^n q_i = q \geq g \geq \prod_{i=1}^n (q_i-p_i)$, and both are strict inequalities, unless $p_i=0$ for all $i\in [n]$. If $p_i=0$ for all $i\in [n]$, then each  threshold should be interpreted as to select the first element that exceeds $\tau^*$.
Otherwise, let $p^*$ be the unique value satisfying $\prod_{i=1}^n (q_i-p^*\cdot p_i) = g$. 
Observe that $p^*$ must be in $[0,1]$.
Then, the algorithm should select the first element that exceeds $\tau^*$, and should accept the value $\tau^*$ with probability $p^*$. 
Note that $p^*$ is independent of the order of the distributions, and can be computed by only using distributional information.

We next show that if there are point masses, and if the algorithm must use a {\it deterministic} single threshold (i.e., that selects the first value that exceeds it), then the $(1-\varepsilon)$-competition complexity is in $\Theta(1/\varepsilon)$.
To prove this we first show that the $(1-\varepsilon)$-competition complexity is bounded from below by $\Omega(1/\varepsilon)$.
\begin{proposition}
\label{prop:static-mass}
For every $n\geq 2$, and for every $\varepsilon\in(0,1)$ there exists an instance with $n$ i.i.d. values that are distributed according to some distribution $F$ with point masses, such that for every $\tau$, and every $k  < \frac{  1}{60 \varepsilon}$, it holds that
\begin{equation} \label{eq:upper-static-mass}
    \EE_{v \sim F^{k\cdot n}}[\algt{\tau}(v)] < (1-\varepsilon) \cdot \EE_{v \sim F^n}[\max_{i \in [n]} v_i].
\end{equation}
\end{proposition}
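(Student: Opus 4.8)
The plan is to exhibit a two-atom i.i.d.\ instance on which no deterministic single threshold beats the prophet when the number of copies is below $\frac{1}{60\varepsilon}$. First I would reduce to the nontrivial regime $\varepsilon < \frac{1}{60}$: if $\varepsilon \ge \frac{1}{60}$ there is no positive integer $k < \frac{1}{60\varepsilon}$ (and for $k=0$ the left-hand side of \eqref{eq:upper-static-mass} is $0$ while the right-hand side is positive), so the statement is vacuous. For $\varepsilon < \frac{1}{60}$, I would take $F$ to be the distribution equal to $21$ with probability $\varepsilon/n$ and equal to $1$ with probability $1-\varepsilon/n$, i.e., $v_i = 1 + 20\cdot\text{Bernoulli}(\varepsilon/n)$, with all values i.i.d.\ from $F$. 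The offline maximum of a single copy equals $21$ exactly when at least one of its $n$ values realizes to $21$, so $\EE_{v \sim F^n}[\max_{i\in[n]} v_i] = 1 + 20\bigl(1-(1-\varepsilon/n)^n\bigr)$; using the elementary inequalities $1-x \le e^{-x}$ and $e^{-\varepsilon} \le 1-\varepsilon+\varepsilon^2/2$ together with $\varepsilon < \frac{1}{60}$, I would record the two consequences needed later: $(1-\varepsilon)\,\EE_{v\sim F^n}[\max_{i\in[n]}v_i] > 1 + 18\varepsilon$, and in particular this quantity is larger than $1$.

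The heart of the proof is the observation that a deterministic threshold rule on values supported in $\{1,21\}$ has exactly three possible behaviors, irrespective of whether ``exceeds'' is interpreted as $\ge$ or $>$: (i) it accepts the very first value; (ii) it accepts the first value that realizes to $21$ and nothing else; or (iii) it never accepts. In case (i), $\EE[\algt{\tau}(v)] = \EE_{v\sim F}[v] = 1 + 20\varepsilon/n \le 1 + 10\varepsilon$, using $n \ge 2$. In case (ii), $\EE[\algt{\tau}(v)] = 21\bigl(1-(1-\varepsilon/n)^{nk}\bigr)$, and a union bound over the $nk$ values gives $1-(1-\varepsilon/n)^{nk} \le nk\cdot\frac{\varepsilon}{n} = k\varepsilon < \frac{1}{60}$ for every $k < \frac{1}{60\varepsilon}$, so $\EE[\algt{\tau}(v)] < \frac{21}{60} < 1$. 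In case (iii), $\EE[\algt{\tau}(v)] = 0$. In all three cases $\EE[\algt{\tau}(v)] < 1 + 18\varepsilon < (1-\varepsilon)\,\EE_{v\sim F^n}[\max_{i\in[n]}v_i]$, which is exactly \eqref{eq:upper-static-mass}.

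The step I expect to require the most care is calibrating the two parameters of the construction --- the value of the large atom and its probability --- so that the \emph{single} threshold $\frac{1}{60\varepsilon}$ works simultaneously for all $n \ge 2$. The ``accept the first value'' threshold is most dangerous for the smallest $n$ (it captures the $21$-atom with probability $\varepsilon/n$, as large as $\varepsilon/2$ when $n=2$), whereas the ``accept only $21$'' threshold is most dangerous for $k$ near $\frac{1}{60\varepsilon}$, so one must choose the atom value and atom probability so that both quantities stay strictly below $(1-\varepsilon)\EE[\max]$. Setting the atom probability to be $\varepsilon/n$ is what makes these quantities scale \emph{linearly} in $\varepsilon$ rather than being bounded away from it, and this is precisely the mechanism behind the $\Theta(1/\varepsilon)$ rate --- and the qualitative difference from the continuous-distribution lower bound of Proposition~\ref{prop:static}, where a threshold placed between atoms can be ``sliced'' and the bound degrades only to $\Omega(\log(1/\varepsilon))$. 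Once the construction is fixed as above, the remaining work is the routine verification that $1+10\varepsilon$, $\frac{21}{60}$, and $0$ each lie below the bound $1+18\varepsilon$ from the first paragraph, which holds with a comfortable margin.
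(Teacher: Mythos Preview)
Your proof is correct and follows essentially the same approach as the paper: both use a two-atom i.i.d.\ distribution of the form $v_i = 1 + c\cdot\mathrm{Bernoulli}(p/n)$ (the paper takes $c=10,\ p=\varepsilon/3$; you take $c=20,\ p=\varepsilon$), then split into the cases ``threshold accepts the first value'' versus ``threshold waits for the high atom,'' bounding each below the prophet benchmark. The only differences are cosmetic choices of constants and that you make the three possible threshold behaviors explicit rather than phrasing them via the value of~$\tau$.
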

\begin{proof}
It is sufficient to consider $\varepsilon< 1/20$, since otherwise $k=0$, and the claim holds trivially.
Consider the distribution $F$ such that $v_i = 1+ 10\cdot \text{Bernoulli}(\varepsilon/(3n))$ for every $i\in [n]$.
The RHS of Equation~\eqref{eq:upper-static-mass} satisfies that 
\begin{align*}
     (1-\varepsilon) \cdot \EE_{v \sim F^n}[\max_{i \in [n]} v_i] &= (1-\varepsilon)  \Big(1 + 10 \Big(1-\Big(1-\frac{\varepsilon}{3n}\Big)^n\Big)  \Big)\\
     &\geq  (1-\varepsilon)  \Big(1 + 10 \Big(1-e^{-\varepsilon/3}\Big) \Big)\geq 1+2\varepsilon, \nonumber
\end{align*}
where the last inequality holds for every $\varepsilon\leq 1/20$.
Now consider a static threshold algorithm with a threshold $\tau$.
We have two cases.\\

\noindent \textbf{Case 1:}
If $\tau< 1$, then since the algorithm will select the first value $v_1$, and the LHS of Equation~\eqref{eq:upper-static-mass} satisfies that 
\begin{equation*}
\EE_{v \sim F^{k\cdot n}}[\algt{\tau}(v)]  =  \EE_{v \sim F}[v] =   1+\frac{10\varepsilon}{3n} < 1+2\varepsilon,
\end{equation*}
where the inequality is since $n\geq 2$.
Thus, Equation~\eqref{eq:upper-static-mass} holds.\\

\noindent \textbf{Case 2:} 
If $\tau \geq  1$, then the algorithm will only accept the value $11$, and the the LHS of Equation~\eqref{eq:upper-static-mass} satisfies that 
\begin{align*}
\EE_{v \sim F^{k\cdot n}}[\algt{\tau}(v)]  & =   \textstyle\Pr_{v \sim F^{k\cdot n}}[\text{there exists } i \in [nk]: v_i >1]\cdot 11  \\ 
& =  11 \cdot \Big(1-\Big(1-\frac{\varepsilon}{3n}\Big)^{nk}\Big) \leq  1,
\end{align*}
where the last inequality holds for every $n\geq 2$, every $\varepsilon\leq 1/20$, and every $k \leq 1/(60\varepsilon)$.
Thus, \eqref{eq:upper-static-mass} holds, and this concludes the proof.
\end{proof}

We next show that there is deterministic single threshold algorithm that guarantees a $(1-\varepsilon)$-competition complexity of $O(1/\varepsilon)$.
\begin{proposition}
\label{prop:static-mass-upper}
For every $\varepsilon>0$ the
$(1-\varepsilon)$-competition complexity of the class of deterministic single threshold algorithms for the case where there are point masses is at most  $\lceil 1/\varepsilon \rceil$.
\end{proposition}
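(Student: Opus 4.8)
The plan is to re-run the Samuel--Cahn ``median rule'' proof behind Lemma~\ref{lem:static} while keeping the point masses explicit, and to pay for the fact that a \emph{deterministic} threshold cannot always sit on a prescribed quantile by spending $k:=\lceil 1/\varepsilon\rceil$ copies. Fix an instance $F_1,\dots,F_n$ and write $M=\max_{i\in[n]}v_i$ for one block. For a threshold $\tau$ let $q_c(\tau)=\Pr_{v\sim(F_1,\dots,F_n)}[M<\tau]$ be the per-block ``miss probability'' of the strategy ``take the first value $\ge\tau$'', and $q_o(\tau)=\Pr[M\le\tau]$ that of ``take the first value $>\tau$''. The basic tool is the inequality
\[
\EE_{v\sim(F_1,\dots,F_n)^k}[\algt{\tau}(v)]\ \ge\ (1-q^k)\Bigl(\tau+\tfrac{q}{1-q}\textstyle\sum_{i\in[n]}\EE[(v_i-\tau)^+]\Bigr),
\]
which holds with $q=q_c(\tau)$ for ``$\ge\tau$'' and with $q=q_o(\tau)$ for ``$>\tau$''. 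To prove it I would write the expected reward on the $nk$ values as $\tfrac{1-q^k}{1-q}$ times the expected reward on a single block (the index of the first nonempty block is geometric with parameter $1-q$), and then bound the single-block reward by $\tau(1-q)+q\sum_i\EE[(v_i-\tau)^+]$ exactly as in Samuel--Cahn. Combined with $\EE[M]\le\tau+\sum_i\EE[(v_i-\tau)^+]$, this reduces everything to choosing $\tau$ well.

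The main case is that some deterministic threshold has miss probability $q\in[1/2,\varepsilon^{1/k}]$; this interval is nonempty because $k=\lceil 1/\varepsilon\rceil\ge\log_2(1/\varepsilon)$. For such a threshold $q\ge 1/2$ gives $\tfrac{q}{1-q}\ge1$, so the displayed bound is at least $(1-q^k)(\tau+\sum_i\EE[(v_i-\tau)^+])\ge(1-q^k)\EE[M]\ge(1-\varepsilon)\EE[M]$, using $q^k\le\varepsilon$. The miss probabilities realizable by deterministic thresholds form $[0,1)$ minus the open ``atom gaps'' $\bigl(\Pr[M<x],\Pr[M\le x]\bigr)$ over atoms $x$ of $M$, with the endpoints of each gap still realizable (left endpoint by ``$\ge x$'', right endpoint by ``$>x$''). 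Hence the main case can fail only when $M$ has an atom at some $x_0$ with $\Pr[M<x_0]<\tfrac12$ and $\Pr[M\le x_0]>\varepsilon^{1/k}$; in that case $\Pr[M>x_0]<1-\varepsilon^{1/k}\le\ln(1/\varepsilon)/k\le\varepsilon\ln(1/\varepsilon)$, so the probability of landing above $x_0$ is tiny.

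In this obstruction case I would put $a=\Pr[M<x_0]$, $b=\Pr[M\le x_0]$, $A_0=\sum_i\EE[(v_i-x_0)^+]$, note $\EE[M]\le x_0+A_0$, and compare the threshold ``$\ge x_0$'' (miss probability $a<\tfrac12$, so $a^k\le 2^{-k}\le\varepsilon$) with the threshold ``$>x_0$'' (miss probability $b>\varepsilon^{1/k}\ge\tfrac12$, so $\tfrac{b}{1-b}\ge1$ but only $b^k>\varepsilon$). Feeding these into the displayed bound, using the geometric identity $\tfrac{1-b^k}{1-b}=\sum_{j=0}^{k-1}b^j$ and the elementary estimates $\sum_{j=1}^{k}b^j\ge 1-2^{-k}\ge1-\varepsilon$ and $\sum_{j=1}^{k-1}b^j\ge(k-1)b^{k-1}$, one shows that ``$\ge x_0$'' already achieves $(1-\varepsilon)\EE[M]$ whenever $A_0\le\gamma(a)\,x_0$ for an explicit $\gamma(a)$, and that ``$>x_0$'' achieves $(1-\varepsilon)\EE[M]$ whenever $A_0\ge\beta^*\,x_0$, where $\beta^*=\tfrac{b^k-\varepsilon}{\,b\sum_{j<k}b^j-1+\varepsilon\,}$. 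The punchline is that $\beta^*<\varepsilon\le\gamma(a)$ for every admissible $a<\tfrac12$, so the intervals $[0,\gamma(a)x_0]$ and $[\beta^*x_0,\infty)$ together cover all values of $A_0$ and one of the two thresholds always works. The inequality $\beta^*<\varepsilon$ boils down to $k\ge 1/\varepsilon$ (via $\sum_{j=1}^{k-1}b^j\ge(k-1)b^{k-1}$ and $b<1$), and $\varepsilon\le\gamma(a)$ boils down to the bookkeeping estimate $a^k\le\varepsilon^2+\tfrac{\varepsilon a(1-a^k)}{1-a}$, which holds since $a<\tfrac12$ and $2^{-k}$ is small; the finitely many ``large $\varepsilon$'' (small $k$) cases are checked by hand.

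I expect the last step---showing that the ``grab the atom'' regime $A_0\le\gamma(a)x_0$ and the ``wait for the heavy tail above the atom'' regime $A_0\ge\beta^*x_0$ overlap---to be the main obstacle, since it is the only place where the exact value $k=\lceil 1/\varepsilon\rceil$ (rather than $\Theta(1/\varepsilon)$) is needed. Morally it says that with $k\approx 1/\varepsilon$ copies the ``$>x_0$'' algorithm captures all but an $\varepsilon$-fraction of \emph{any} tail sitting above $x_0$, however heavy---the quantitative form being $\tfrac{b^{k-1}}{\sum_{j<k}b^j}\le\tfrac1k\le\varepsilon$---while ``$\ge x_0$'' handles the complementary case where $\EE[M]$ is already dominated by the atom. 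Everything else is routine.
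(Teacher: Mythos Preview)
Your approach is correct but takes a substantially more involved route than the paper. You rescue the median rule by a case split: either some deterministic threshold lands in the quantile window $[1/2,\varepsilon^{1/k}]$, or a single atom $x_0$ straddles the whole window, in which case you compare ``$\ge x_0$'' against ``$>x_0$'' and show their regimes of validity overlap. The bookkeeping does go through; in fact your ``checked by hand'' worry is unfounded, since $a^k\le\varepsilon\sum_{j=1}^k a^j$ follows immediately from $\sum_{j=1}^k a^j\ge ka^k$ and $k\varepsilon\ge1$, and likewise for the $\beta^*<\varepsilon$ step.

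The paper instead drops the median rule entirely and sets $\tau=(1-\varepsilon)\,\EE[M]$, a threshold that is well-defined irrespective of point masses. With $p=\Pr[M\le\tau]$ one gets in a few lines
\[
\EE[\ALG_\tau]\ \ge\ (1-p^k)\tau+p^k\, k\,\bigl(\EE[M]-\tau\bigr)\ =\ (1-p^k)(1-\varepsilon)\EE[M]+p^k\,k\varepsilon\,\EE[M]\ \ge\ (1-\varepsilon)\EE[M],
\]
using only $\sum_i\EE[(v_i-\tau)^+]\ge\EE[M]-\tau$ and $k\ge1/\varepsilon$. No case analysis, no atom-hunting: the scaled-mean threshold balances the guaranteed floor $\tau$ against the per-block excess $\EE[M]-\tau=\varepsilon\,\EE[M]$ so that $k\approx1/\varepsilon$ blocks exactly compensate for an arbitrarily large miss probability $p$.

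What your argument buys is a proof that the quantile viewpoint \emph{can} be forced through with point masses at $\Theta(1/\varepsilon)$ copies; what the paper's argument buys is a one-paragraph proof. One small caveat: the paper's working definition of a deterministic single threshold in this appendix is ``accept the first value that \emph{exceeds} $\tau$'', i.e.\ only the ``$>\tau$'' rule. Your ``$\ge x_0$'' option is then not literally available and would need to be approximated by ``$>x_0-\delta$'', adding one more limiting step to an already long argument.
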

\begin{proof}
Given an instance $(F_1,\ldots,F_n)$, consider the threshold $\tau=(1-\varepsilon)\cdot\EE_{v\sim (F_1,\ldots,F_n)}[\max_{i\in [n]} v_i]$, and
let $p=\Pr_{v\sim (F_1,\ldots,F_n)}[\max_{i\in [n]} v_i \leq \tau]$.
Then, for $k=\lceil1/\varepsilon\rceil$, it holds that 

\begin{align*}
\EE_{v \sim (F_1, \dots, F_n)^k}[\algt{\tau}(v)] &  =  (1-p^k)\tau + \sum_{\ell=1}^k\sum_{i=1}^n \EE_{v \sim (F_1,\ldots,F_n)^k }[\max(v_i^{(\ell)}-\tau,0)]\cdot p^{\ell-1}\prod_{j<i}\Pr[v_j^{(\ell)} \leq \tau] \\ 
& \geq  (1-p^k)\tau + p^k\sum_{\ell=1}^k \sum_{i=1}^n \EE_{v_i\sim F_i}[\max(v_i-\tau,0)] \\ 
& \geq  (1-p^k)\tau + p^k \sum_{\ell=1}^k \EE_{v\sim (F_1,\ldots,F_n)}[\max(\max_{i\in [n]} v_i-\tau,0)]  \\
& \geq  (1-p^k)\tau + p^k\cdot \frac{1}{\varepsilon}  \cdot \EE_{v\sim (F_1,\ldots,F_n)}[\max_{i\in [n]} v_i-\tau]  \\
& =  (1-p^k)\tau + p^k \cdot \EE_{v\sim (F_1,\ldots,F_n)}[\max_{i\in [n]} v_i] \\  
& \geq  (1-\varepsilon)\cdot \EE_{v\sim (F_1,\ldots,F_n)}[\max_{i\in [n]} v_i],
\end{align*}
which concludes the proof.
\end{proof}

\section{Alternative Arrival Orders}
\label{sec:arrivalorder}

In this appendix, we study the competition complexity of the single-choice problem beyond the block model.
We first observe that our results continue to hold if the variables in each block arrive in arbitrary order. We then consider a version where variables can be moved across blocks, but in a limited way. Finally, we establish a lower bound for the fully adversarial case where the order can be arbitrary.

\paragraph{Displacement Model.} 
In our basic model --- the block model --- we assumed that variables arrive in the same order within each block. 
In the {\it displaced} block model, within each of the $k$ blocks the $n$ values sampled from the distributions $F_1,\ldots,F_n$ are presented in arbitrary order.

A first observation is that our proofs of the competition complexity results for block threshold algorithms and single threshold algorithms continue to hold in the displaced block model.

\begin{observation}\label{obs:displace}
 In the displaced block model, for every $\varepsilon>0$ the
 $(1-\varepsilon)$-competition complexity of the class of block threshold algorithms is $\Theta\left(\log\log(1/\varepsilon)\right)$.
 Furthermore, the $(1-\varepsilon)$-competition complexity of the class of single threshold algorithms is $\Theta\left(\log(1/\varepsilon)\right)$.
\end{observation}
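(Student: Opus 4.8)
The plan is to observe that none of the arguments establishing Theorem~\ref{thm:dynamic} or Theorem~\ref{thm:static} ever exploits the order in which the $n$ variables of a block arrive. The only features used are (i) that a block (resp.\ single) threshold algorithm selects the \emph{first} value inside a block that exceeds the relevant threshold, and (ii) that all probabilistic quantities appearing in the analysis are symmetric functions of the $n$ coordinates of a block. Moreover, both upper bounds are proved by exhibiting a threshold strategy and lower bounding its performance in a stochastic-dominance sense, and those strategies depend only on order-invariant data: in Lemma~\ref{lem:dynamic-2} the thresholds $\overline{\boldsymbol\tau}$ are defined through the distribution of $\max_{i\in[n]}v_i$, and in Lemma~\ref{lem:static} the threshold $\tau^*$ is defined by $\prod_{i\in[n]}\Pr[v_i\le\tau^*]=1/2$. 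Hence these strategies are well defined in the displaced block model no matter how the adversary permutes each block, and the permutations are allowed to differ from block to block because the analysis proceeds block by block.

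For the upper bound for block threshold algorithms, I would revisit the proof of Lemma~\ref{lem:dynamic-1}. The decomposition of $\Pr[\algt{\boldsymbol\tau}(v)\ge x]$ according to the first block $\calB$ in which some value exceeds its threshold is unchanged, since $\{\calB=\ell\}$ depends on block $\ell$ only through the order-invariant event $\{\max_{i\in[n]}v_i^{(\ell)}\ge\tau_\ell\}$. In the sub-case $\ell\ge\ell(\boldsymbol\tau,x)$, inequality~\eqref{lem1-case2} lower bounds the relevant probability by $\sum_{i=1}^n\Pr[v_i\ge x \text{ and } v_j<\tau_\ell\ \text{for all } j\neq i]$; on each of these (disjoint) events coordinate $i$ is the unique coordinate of the block clearing $\tau_\ell$, so the block threshold algorithm selects it and collects a value $\ge x$ — and this conclusion is insensitive to the arrival order. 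The sub-case $\ell<\ell(\boldsymbol\tau,x)$ underlying \eqref{lem1-case1} likewise uses only $\Pr[\max_{i\in[n]}v_i^{(\ell)}\ge\tau_\ell]$. Thus Lemma~\ref{lem:dynamic-1} holds verbatim for arbitrary intra-block orders, Lemma~\ref{lem:dynamic-2} is untouched, and the proof of Theorem~\ref{thm:dynamic} gives the $O(\log\log(1/\varepsilon))$ upper bound. An analogous inspection of the proof of Lemma~\ref{lem:static} shows that it only uses prefix products $\prod_j\Pr[v_j<\tau^*]$ within a block (each $\ge 1/2$, since each such product is over a sub-collection of the factors whose full product is $1/2$) together with the union bound $\sum_i\Pr[v_i\ge x]\ge\Pr[\max_{i\in[n]}v_i\ge x]$ over a block — both symmetric — so the $O(\log(1/\varepsilon))$ upper bound for single threshold algorithms also carries over.

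For the matching lower bounds, I would note that the displaced block model only \emph{enlarges} the adversary's power relative to the block model, so any lower bound for the block model remains valid. Furthermore the relevant constructions are already i.i.d.: the $\Omega(\log\log(1/\varepsilon))$ lower bound invoked in the proof of Theorem~\ref{thm:dynamic} comes from \cite{BrustleCDV22} on instances with $F_1=\cdots=F_n$, and the $\Omega(\log(1/\varepsilon))$ lower bound is Proposition~\ref{prop:static}, again on i.i.d.\ instances; for such instances the arrival order within a block is immaterial. Combining the two directions yields Observation~\ref{obs:displace}.

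The main obstacle is not conceptual but bookkeeping: one has to check carefully that no step in the earlier proofs silently relied on the fixed order. The single delicate point is the one flagged above in \eqref{lem1-case2} — confirming that on the event ``coordinate $i$ of block $\ell$ exceeds $x$ while all other coordinates of that block lie below $\tau_\ell$'' the algorithm does end up selecting coordinate $i$. This is true precisely because that coordinate is the unique one in its block clearing $\tau_\ell$, so the order in which the block's coordinates are presented is irrelevant.
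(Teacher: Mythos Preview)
Your proposal is correct and matches the paper's approach: the paper simply asserts (without a written-out proof) that the earlier arguments for Theorems~\ref{thm:dynamic} and~\ref{thm:static} go through unchanged in the displaced block model, and you have carefully verified exactly that, identifying the one place (inequality~\eqref{lem1-case2}) where order-independence needs a moment's thought and handling it correctly.
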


Naturally, this motivates the question of studying the competition complexity in a setting where the values can be displaced across the different blocks.
In what follows, for a pair of natural numbers $(n,k)$ we denote by $S_{n}$ the set of permutations of $[n]$
and by $\sigma(F_1,\ldots,F_n)^k$ the instance that permutes $(F_1,\ldots,F_n)^k$ according to the permutation $\sigma\in S_{nk}$.

We introduce the {\it $\gamma$-displacement} model in which the adversary may change the order of $\gamma k$ copies of an instance $F = (F_1, \ldots, F_n)$, in a way such that if we split up the $\gamma n k$ variables into $\gamma$-blocks of size $\gamma n$, then within each $\gamma$-block every distribution appears at least once.
In what follows, we define a few objects to formally introduce this model.

Given a positive integer $\gamma$ and an integer $i$,
let $B_{\gamma,n}(i) =  
\{(i-1)\gamma n +j : j\in [\gamma n]\}$ be the $i$-th $\gamma$-block. 
Given an order $\sigma\in S_{\gamma nk}$,  and $j\in [n]$, let $I(\sigma,j) $ be the indices of $\sigma$ where there is a variable of type $j$.  Finally, let $D_{\gamma}(n,k)$ be the subset of permutations in $S_{\gamma nk}$ satisfying that, after permutation according to $\sigma$, each $\gamma$-block, contains at least one of each index in $[n]$, that is 
$$D_{\gamma}(n,k) = \Big\{\sigma \in S_{\gamma nk} : \text{for every } j\in [n]\text{ and } \text{every } i \in [ k], \ | I(\sigma,j) \cap B_{\gamma,n}( i)|\geq 1 \Big\}.$$

In the $\gamma$-displacement model, we compare $\min_{\sigma \in D_{\gamma}(n,k)}\EE_{v \sim \sigma F^{\gamma k}}[\algt{{\boldsymbol \tau}} (v)]$ to $\EE_{v\sim F}[\max_{i\in [n]} v_i]$, where $F=(F_1,\ldots,F_n)$.
We remark that the 1-displacement model corresponds to the displaced model discussed in Observation \ref{obs:displace}. In the following proposition, we show that the competition complexity in the $\gamma$-displacement model increases by at most a factor $\gamma$ in comparison to the $1$-displacement model.
\begin{proposition}\label{prop:displace}
In the $\gamma$-displacement model, for every $\varepsilon>0$ the $(1-\varepsilon)$-competition complexity of the class of block threshold algorithms is $O\left(\gamma \log\log(1/\varepsilon)\right)$.
\end{proposition}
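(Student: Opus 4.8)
The plan is to take $k:=\max\{2,\lceil\log_{5/4}\log_{5/4}(1/\varepsilon)\rceil\}=\Theta(\log\log(1/\varepsilon))$, so the $\gamma k=O(\gamma\log\log(1/\varepsilon))$ available copies are organized into $k$ meta-blocks $B_{\gamma,n}(1),\dots,B_{\gamma,n}(k)$ of size $\gamma n$, each containing at least one variable of every type (after the adversarial permutation $\sigma\in D_\gamma(n,k)$, which we fix). The algorithm uses a single threshold $\tau_i$ within meta-block $i$; this is a block threshold algorithm. We choose $\tau_i$ so that, conditioned on the algorithm reaching meta-block $i$, the probability that it accepts some value equals $\hat q_i:=1-(5/4)^{-(5/4)^i}$ — equivalently, writing $t(m)$ for the type of position $m$, so that $\prod_{m\in B_{\gamma,n}(i)}F_{t(m)}(\tau_i)=1-\hat q_i$. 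Such a $\tau_i$ exists by the no-point-mass assumption and is computable from $\sigma$ (hence from each meta-block's composition). We take $\tau_1\ge\dots\ge\tau_k$.

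Next I would establish the $\gamma$-analog of Lemma~\ref{lem:dynamic-1}: for every $x\ge 0$,
\[
\Pr_{v\sim\sigma F^{\gamma k}}[\algt{\boldsymbol\tau}(v)\ge x]\ \ge\ \Phi(\hat q_1,\dots,\hat q_k)\cdot\Pr_{v\sim F}[\textstyle\max_{i\in[n]}v_i\ge x],
\]
with $\Phi=\min_i\Phi_i$ the exact same function as in Section~\ref{sec:reduction}, now evaluated at $p_\ell=\hat q_\ell$. Let $\calB$ be the first meta-block containing a value exceeding its threshold; by independence across meta-blocks and our calibration, $\Pr[\calB>i-1]=\prod_{j<i}(1-\hat q_j)$. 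It then suffices to lower bound, for each meta-block $\ell$, the probability that (conditioned on reaching it) the value accepted there is $\ge x$. If $\tau_\ell\ge x$ this is exactly $\hat q_\ell$. If $\tau_\ell<x$, setting $a_m=\Pr[v_m<\tau_\ell]$ and $b_m=\Pr[v_m\ge x]$ for $m\in B_{\gamma,n}(\ell)$, it is at least
\[
\prod_m(a_m+b_m)-\prod_m a_m\ \ge\ \Big(\textstyle\prod_m a_m\Big)\sum_m b_m\ \ge\ (1-\hat q_\ell)\cdot\Pr_{v\sim F}[\textstyle\max_{i\in[n]}v_i\ge x],
\]
using $a_m\le 1$, then $\prod_m a_m=1-\hat q_\ell$ by calibration, and finally that the first occurrences of the $n$ types inside $B_{\gamma,n}(\ell)$ form an independent copy of $(v_1,\dots,v_n)$. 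Summing over $\ell$ exactly as in the proof of Lemma~\ref{lem:dynamic-1} — splitting at $\ell(\boldsymbol\tau,x)=\min\{\ell:x\ge\tau_\ell\}$, a prefix split by monotonicity, and using $\Pr_{v\sim F}[\max_i v_i\ge x]\le\hat q_{\ell(\boldsymbol\tau,x)}$ (because meta-block $\ell(\boldsymbol\tau,x)$ also contains a full copy) — reproduces the functions $\Phi_i$ term by term.

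To finish, observe that $\hat q_\ell=1-(5/4)^{-(5/4)^\ell}$ is precisely the sequence of quantiles in Lemma~\ref{lem:dynamic-2}, which therefore gives $\Phi(\hat q_1,\dots,\hat q_k)\ge 1-(5/4)^{-(5/4)^k}\ge 1-\varepsilon$ for our choice of $k$. Integrating the stochastic-dominance inequality over $x$ yields $\EE_{v\sim\sigma F^{\gamma k}}[\algt{\boldsymbol\tau}(v)]\ge(1-\varepsilon)\,\EE_{v\sim F}[\max_{i\in[n]}v_i]$ for every $\sigma\in D_\gamma(n,k)$, so the $(1-\varepsilon)$-competition complexity of block threshold algorithms in the $\gamma$-displacement model is at most $\gamma k=O(\gamma\log\log(1/\varepsilon))$.

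The main obstacle is the per-meta-block analysis in the presence of \emph{repeated} types: a meta-block may, e.g., contain $(\gamma-1)n$ extra copies of one heavy-tailed type, which makes the acceptance event far likelier, so calibrating $\tau_i$ to a single-copy quantile (as in the plain block model) fails — one can in fact drive the guarantee to $0$ as $\gamma\to\infty$. Calibrating instead to the meta-block's \emph{own} $\hat q_i$-quantile is what restores the clean telescoping $\Pr[\calB>i-1]=\prod_{j<i}(1-\hat q_j)$, and the elementary inequality $\prod_m(a_m+b_m)-\prod_m a_m\ge(\prod_m a_m)\sum_m b_m$ is what lets the extra variables be absorbed without losing a factor depending on $n$. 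A further point requiring care is checking that these calibrated thresholds can simultaneously be taken monotone non-increasing, on which the prefix-splitting step relies.
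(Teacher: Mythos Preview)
Your approach is essentially the paper's: treat each $\gamma$-block as one ``super-block,'' calibrate the threshold $\tau_\ell$ to hit the target quantile $\hat q_\ell=1-(5/4)^{-(5/4)^\ell}$ on that meta-block's own max-distribution, and then rerun the approximate stochastic-dominance analysis of Section~\ref{sec:block} on meta-blocks. The paper compresses all of this into ``adjust the block model proof,'' so your version is considerably more explicit; in particular, the inequality $\prod_m(a_m+b_m)-\prod_m a_m\ge(\prod_m a_m)\sum_m b_m$ is a clean device for absorbing the extra, possibly repeated, variables in a meta-block.

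The one loose end you flag --- monotonicity of the calibrated thresholds --- is a real concern (different meta-blocks can have different compositions, and it is easy to build examples where $\tau_1<\tau_2$), but the prefix split does not actually need monotone $\tau$; it only needs the $\hat q_\ell$ to be increasing, which they are by construction. For arbitrary $x$, set $S_x=\{\ell:\tau_\ell\le x\}$ and $i^*=\min S_x$. Since meta-block $i^*$ contains a full copy and $\tau_{i^*}\le x$, one has $\Pr_{v\sim F}[\max_i v_i\ge x]\le \hat q_{i^*}$. For any $\ell>i^*$ with $\ell\notin S_x$, the ``high-threshold'' contribution $\hat q_\ell/\hat q_{i^*}$ dominates the target $(1-\hat q_\ell)$ simply because $\hat q_\ell\ge \hat q_{i^*}$ implies $\hat q_\ell/\hat q_{i^*}\ge 1\ge 1-\hat q_\ell$. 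Hence one still gets
\[
\Pr_{v\sim\sigma F^{\gamma k}}[\algt{\boldsymbol\tau}(v)\ge x]\ \ge\ \Phi_{i^*}(\hat q)\cdot\Pr_{v\sim F}[\textstyle\max_i v_i\ge x]\ \ge\ \Phi(\hat q)\cdot\Pr_{v\sim F}[\textstyle\max_i v_i\ge x],
\]
and the rest of your argument (Lemma~\ref{lem:dynamic-2} and integration) goes through unchanged. The paper's proof is terse enough that it never surfaces this subtlety.
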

\begin{proof}
To prove the result, we provide a reduction of the $\gamma$-displacement model to the block model. 
In the block model with $k$ copies, let ${\boldsymbol \tau} = (\tau_1,\ldots,\tau_k)$ be the thresholds corresponding to the quantiles $p_{\ell} =\Pr_{v \sim (F_1, \dots, F_n)}[\max_{j\in [n]}v_j \geq \tau_{\ell}]$ for $\ell \in [k]$.
Given some permutation $\sigma \in D_{\gamma}(n, k)$, 
define the thresholds ${\boldsymbol\tau'}=({\tau}_1',\ldots,{\tau}_k')$ to be such that $\Pr_{v\sim \sigma F^{\gamma k}}[\max_{j\in B_{\gamma,n}(\ell)}v_j \geq {\tau_\ell'}] = p_\ell$ for $\ell \in [k]$. 
Then, in the instance $\sigma(F_1,\ldots,F_n)^{\gamma k}$ with $\gamma k$ copies, consider the block thresholds ${\boldsymbol \tau'} =(\tau_1',\ldots,\tau_1',\tau_2',\ldots,\tau_2',\ldots,\tau_k',\ldots, \tau_k')$ where each threshold $\tau_\ell'$ for every $\ell \in [k]$ is repeated consecutively $\gamma$ times. 
Since $\sigma\in D_{\gamma}(n,k),$ for every $x\ge 0$ and every $\ell\in [k]$,
$$\textstyle\Pr_{v\sim \sigma F^{\gamma k}}[\max_{j\in B_{\gamma,n}(\ell)}v_j \geq x] \geq \Pr_{v\sim F}[\max_{j\in [n]} v_j \geq x],$$
that is, we have a stochastic dominance inequality.
We can then adjust the block model proof in Section \ref{sec:block} on $\gamma$-blocks instead of single blocks to get that $O\left(\gamma \log\log(1/\varepsilon)\right)$ copies provides a $(1-\varepsilon)$-competition complexity in the $\gamma$-displacement model. 
\end{proof}

We next show that an equivalent guarantee for the case of single-threshold algorithms cannot be achieved: namely, the $(1-\varepsilon)$-competition complexity of this type of algorithms in the $\gamma$-displacement model has to scale at least polynomially in $1/\varepsilon$.

\begin{proposition}\label{prop:displace-single-threshold}
For the $\gamma$-displacement model, there exists a $\gamma > 0$ such that the $(1-\varepsilon)$-competition complexity of any single-threshold algorithm is at least $1/(6\varepsilon^{1/3}).$
\end{proposition}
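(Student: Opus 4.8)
The idea is to design a \emph{non-identical} instance (the $\gamma$-displacement adversary has no power on i.i.d.\ instances, so non-identity is essential) in which the extra resources are, to a single threshold, useless: I will use a ``jackpot'' coordinate together with many ``decoy'' coordinates. Concretely I plan to take $n$ types, where type $1$ is worth a large value $H$ with a tiny probability $q$ and is worthless (value $0$, or a negligible value) otherwise, while types $2,\dots,n$ deterministically (or with a concentrated continuous perturbation, to avoid point masses) take a moderate value $t\ge 1$. The parameters $t$, $q$, $H$, $n$ and the displacement factor $\gamma$ will be tuned as explicit functions of $\varepsilon$ so that (i) $\Pr_{v\sim F}[\max_i v_i=H]\cdot H$, i.e.\ the jackpot's contribution to $\OPT$, is $\Theta(\varepsilon^{1/3})$, and (ii) $\gamma\cdot k^\star\cdot q H \asymp \OPT$ with $k^\star=\tfrac1{6\varepsilon^{1/3}}$. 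The first makes $\OPT$ noticeably larger than the decoy value $t$; the second makes the ``hunt only for the jackpot'' strategy fall just short of $(1-\varepsilon)\OPT$ at $k^\star$ copies.

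Given any single threshold $\tau$ (and, in the presence of point masses, any acceptance probability at $\tau$), the adversary's permutation $\sigma\in D_\gamma(n,k)$ will \emph{front-load the decoy types}: within each $\gamma$-block it places all $\gamma(n-1)$ decoy variables first and the $\gamma$ type-$1$ variables last (this is clearly in $D_\gamma(n,k)$, since every type still appears $\gamma$ times per $\gamma$-block). I would then do a case analysis on $\tau$. If $\tau$ is low enough that the algorithm accepts a decoy value, then the very first (decoy) variable already exceeds $\tau$, so $\EE[\algt\tau]$ equals the expected value of a single decoy, which is at most $t< (1-\varepsilon)\OPT$ by the tuning of (i). If $\tau$ lies strictly above every decoy value and at most $H$, then the algorithm can only ever stop on a type-$1$ jackpot; by a union bound over the $\gamma k$ type-$1$ variables, $\EE[\algt\tau]\le H\cdot\gamma k q$, which is $<(1-\varepsilon)\OPT$ for all $k<k^\star$ by (ii). If $\tau>H$, then $\EE[\algt\tau]=0$. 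The only remaining case is $\tau$ sitting exactly at the decoy level (or inside the tiny perturbation interval around it): there the adversary's front-loading still matters, because surviving all $\gamma(n-1)$ decoys of a $\gamma$-block has probability at most $(1-\Pr[\text{decoy}\ge\tau])^{\gamma(n-1)}$, which I will make $\le\varepsilon$ by choosing $n$ large; iterating over the $k$ blocks shows the probability of ever reaching — and hence stopping on — a jackpot is $O(\varepsilon\cdot\gamma kq)$, so again $\EE[\algt\tau]\le t+O(\varepsilon\gamma kq)<(1-\varepsilon)\OPT$.

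\textbf{Main obstacle.} I expect the genuinely delicate part to be precisely this last case: the threshold (or randomized threshold) pinned \emph{at} a decoy value. Here neither the ``stops immediately on a decoy'' argument nor the crude ``union bound over jackpots'' argument is by itself strong enough, and one must trade off the probability that the algorithm accepts a front-loaded decoy against the probability that it skips them all and catches a jackpot; a naive bound of the form $\EE\le t + \gamma k\cdot qH$ degrades like $(1-e^{-L})(1+a/L)$ (with $L$ the expected number of decoys survived per block and $a=\gamma k\,qH/t$) and can overshoot $(1-\varepsilon)\OPT$ unless the parameters are chosen so that $a/\log(1/a)$ stays below the jackpot's relative contribution to $\OPT$. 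This is exactly what forces the polynomial-in-$1/\varepsilon$ (rather than logarithmic) calibration and, I believe, is where the exponent $1/3$ and the constant $6$ come from; getting these constants to line up — while simultaneously respecting the $\gamma$-block feasibility constraint on the permutation and, if point masses are used, the freedom to randomize at the threshold — is the crux of the argument. Everything else is bookkeeping with geometric sums and a few $1-(1-x)^m\le mx$ type estimates.
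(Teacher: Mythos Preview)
There is a genuine gap in the intermediate–threshold case, and it is not a matter of bookkeeping: with your symmetric within–block arrangement the ``main obstacle'' actually breaks the proof. Your proposed remedy is to choose $n$ large so that $(1-\Pr[\text{decoy}\ge\tau])^{\gamma(n-1)}\le\varepsilon$, but the algorithm controls $\tau$, not you. For \emph{any} $n$, it can set $\tau$ so that $\Pr[\text{decoy}\ge\tau]=c/(\gamma n)$ for a constant $c$ of its choosing; then the per-$\gamma$-block survival probability is $(1-c/(\gamma n))^{\gamma(n-1)}\approx e^{-c}$, a constant independent of $n$. With that, every $\gamma$-block is identical, and a direct computation shows the mixed strategy with $a:=1-s=C/k$ achieves, for $k$ near your target $k^\star$, expected reward approximately
\[
(1-e^{-C})\cdot\frac{(C/k)\,t+\gamma qH}{(C/k)+\gamma q}\;\approx\;(1-e^{-C})\,t\Bigl(1+\tfrac{6k\varepsilon^{1/3}}{C}\Bigr),
\]
which exceeds $(1-\varepsilon)\OPT\approx t(1+\varepsilon^{1/3})$ already at $k=\Theta(\log(1/\varepsilon))$ (take $C\approx\tfrac13\log(1/\varepsilon)$). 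So your construction only yields the block-model lower bound $\Omega(\log(1/\varepsilon))$, not $\Omega(1/\varepsilon^{1/3})$.

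The paper's construction differs in two essential ways. First, the roles are inverted: there is a \emph{single} safe type (value $\approx 1$) and a \emph{single} jackpot type (value $\approx 1+\varepsilon^{1/3}$ with probability $\varepsilon^{1/3}$), together with many \emph{zero} filler types whose only purpose is to make the $\gamma$-block size large enough. Second—and this is the crucial idea you are missing—the adversary does not treat the $\gamma$-blocks symmetrically: it moves $2k+1$ of the $3k$ safe variables into the \emph{first} $\gamma$-block, ahead of the first jackpot opportunity. Consequently, writing $q=\Pr[\text{safe value}\le T]$, the event $\{\ALG\ge 1+\varepsilon^{1/3}\}$ forces the algorithm to skip all $2k{+}1$ front-loaded safe values (probability $\le q^{2k+1}$), while the event $\{\ALG<1\}$ has probability at least $q^{3k}(1-\varepsilon^{1/3})^{3k}$. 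The mismatch of exponents $2k$ versus $3k$ is what makes the tradeoff bite; balancing them via $q^{3k}+\varepsilon^{2/3}\ge 2q^{1.5k}\varepsilon^{1/3}$ (AM–GM) yields the $1/(6\varepsilon^{1/3})$ bound. Your symmetric arrangement never creates this mismatch, which is why the mixed strategy escapes.
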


\begin{proof}
    Consider $\varepsilon > 0$ small enough,  such that $0 < \varepsilon < 0.076$. For ease of presentation,  suppose that $1/\varepsilon$ is an integer. Consider the following problem instance with $n=\frac{1}{\varepsilon}+2$ random variables of three different types:
    There is one random variable of type 1, with distribution $1+U[0,\varepsilon^{10}]$.
    There is one random variable of type $2$, whose distribution is $1+\varepsilon^{1/3} + U[0,\varepsilon^{10}]$ with probability $\varepsilon^{1/3}$ and $U[0,\varepsilon^{10}]$ otherwise. 
    There are $1/\varepsilon$ many random variables of type $3$, with distribution $U[0,\varepsilon^{10}]$. 
    So we basically have a deterministic one (type 1), a high value with low probability (type 2), and many zeros (type 3) --- plus some random noise.
    We denote the distributions of these random variables by $F_1,F_2,$ and $F_3$. Let $F = F_1 \times F_2 \times F_3 \times \ldots \times F_3$.

    The prophet can take the high value (the $\approx 1+\varepsilon^{1/3}$ value) from the second random variable if it realizes, or the guaranteed value of $\approx 1$ from the first random variable if it doesn't. It can thus achieve an expected value of at least 
    \begin{align*}
    \OPT \geq \varepsilon^{1/3} \cdot (1+\varepsilon^{1/3}) + (1- \varepsilon^{1/3}) \cdot 1 = 1 + \varepsilon^{2/3}. 
    \end{align*}

    Consider the $\gamma$-displacement model for $\gamma=3$ and $k$ copies.   We want to show that any single-threshold algorithm, in order to achieve a $(1-\varepsilon)$-approximation to the prophet, must have $k \geq 1/(6\varepsilon^{1/3})$. Assume towards contradiction that $k <1/(6\varepsilon^{1/3})$.

    Recall that in the $\gamma$-displacement model we seek a guarantee that applies for any possible grouping of the $\gamma n k$ random variables, into $k$ many $\gamma$-blocks of size $\gamma n$ each. We construct a hard instance as follows.
The first $\gamma$-block is: 
\[
(\underbrace{1,\ldots,1}_{2k+1},2,2,2,\underbrace{3,\ldots,3}_{\frac{3}{\varepsilon} +2-2k})
\]
All the remaining $\gamma$-blocks are:
\[
    (1, 2,2,2, \underbrace{3,\ldots,3}_{\frac{3}{\varepsilon}+2})
\]

Here $1, 2,$ and $3$ indicate that the respective random variable is of that type. So in the first $\gamma$-block we have $2k+1$ random variables of type 1, followed by $3$ random variables of type $2$, followed by $3/\varepsilon+2-2k$ random variables of type $3.$ Similarly, in the remaining $\gamma$-blocks we have one random variable of type $1$, followed by $3$ random variables of type $2$, followed by $3/\varepsilon+2$ random variables of type $3$. Note that both types of blocks consist of $\gamma n = 3/\varepsilon + 6$ random variables, and that variables of type $1$ and type $2$ occur for a total of $3k$ times while variables of type $3$ occur for a total of $3k/\varepsilon$ times. So this forms a valid instance.

Let's now analyze the performance of an arbitrary single-threshold algorithm $\ALG$. Let $T \geq 0$ be the algorithm's threshold, and let $q = \Pr_{X\sim F_1}[X \leq T]$. 
\[
\Pr[\ALG \geq 1] 
\leq 1- q^{3k}
(1-\varepsilon^{1/3})^{3k} \leq 1- q^{3k}(1-3k\varepsilon^{1/3}),
\]
where the last inequality uses that $(1-x)^{3k} \geq (1-3kx)$ for $x = \varepsilon^{1/3}$.

Next we show an upper bound on the probability that the algorithm stops on a value that is at least $1+\varepsilon^{1/3}$. For this the algorithm must skip over all the random variables of type $1$ in the first $\gamma$-block. Therefore,
\[
\Pr[\ALG \ge 1+\varepsilon^{1/3}] \leq 
q^{2k+1}\leq q^{2k},
\]
where we used that $q \leq 1$.
We thus obtain,
\begin{align*}
\EE[\ALG] 
&\leq \Pr[\ALG \geq 1] + \varepsilon^{1/3} \cdot \Pr[\ALG \geq 1+\varepsilon^{1/3}] + \varepsilon^{10} \notag\\
&= 1- q^{3k}
(1-3k\varepsilon^{1/3}) + \varepsilon^{1/3}  q^{2k} + \varepsilon^{10}. 
\end{align*}

We need that $\EE[\ALG] \geq (1-\varepsilon) \OPT$.  This gives us the following inequality 
\[
1- q^{3k}
(1-3k\varepsilon^{1/3}) + \varepsilon^{1/3}  q^{2k}+ \varepsilon^{10}
\geq (1-\varepsilon) (1+\varepsilon^{2/3}),
\]
which by rearrangement gives us that 

\[
\underbrace{\varepsilon+\varepsilon^{5/3} +\varepsilon^{10}}_{< \frac{\varepsilon^{2/3}}{2}}+
\underbrace{\varepsilon^{1/3} q^{2k}}_{\leq q^{1.5k} \varepsilon^{1/3}} + \underbrace{3k\varepsilon^{1/3}q^{3k}}_{\leq \frac{q^{3k}}{2}  }  \geq q^{3k} +\varepsilon^{2/3}  ,
\]
where the underlined inequalities hold since $\varepsilon<0.076$, $q \leq 1$, and 
$k < 1/(6\varepsilon^{1/3})$.

However, by the short multiplication formula\footnote{The short multiplication formula that we are using here is $(a-b)^2 = a^2 - 2ab + b^2$. Since $(a-b)^2 \geq 0$, this shows that $a^2 + b^2 \geq 2ab$. We apply this to $a = \sqrt{\varepsilon^{2/3}/2}$ and $b = \sqrt{q^{3k}/2}$.} we get that 
$$ q^{3k} +\varepsilon^{2/3} =  \frac{\varepsilon^{2/3}}{2} + \frac{q^{3k}}{2} + \left(\sqrt{\frac{\varepsilon^{2/3}}{2}}^2 + \sqrt{\frac{q^{3k}}{2}}^2\right) \geq \frac{\varepsilon^{2/3}}{2} + \frac{q^{3k}}{2} + q^{1.5k}\varepsilon^{1/3}.$$
This yields a contradiction, which implies that 
$k \geq 1/(6\varepsilon^{1/3})$.
\end{proof}

\paragraph{Fully Adversarial Model.} Beyond bounded displacement, a natural question is to ask what can be said in the case where the adversary is not restricted to choosing an order among a limited family of permutations, but can instead arrange the $nk$ distributions in any order. We refer to this as the {\it fully adversarial model}.  
It corresponds to the case where the instance of the online algorithm is given by $\sigma(F_1,\ldots,F_n)^k$, for an arbitrary $\sigma \in 
 S_{nk}$.
That is, each distribution is used $k$ times to sample values, but the order in which they are presented to the online algorithm is arbitrary.
We show that in this case the $(1-\varepsilon)$-competition complexity is lower bounded by $\Omega(1/\varepsilon)$.
\begin{proposition}\label{prop:adversarial}
For all $n\geq 2$, $\varepsilon \in (0,1)$ and every  $k<1/(4\varepsilon)$, there exist distributions $F_1,\ldots,F_n$ and a permutation $\sigma \in S_{nk}$ such that for every sequence of thresholds ${\boldsymbol \tau}$ we have
$$\EE_{v\sim \sigma(F_1,\ldots,F_n)^k}[\ALG_{\boldsymbol\tau}(v)]<(1-\varepsilon)\cdot\EE_{v\sim (F_1,\ldots,F_n)}[\textstyle \max_{i\in [n]}v_i].$$
\end{proposition}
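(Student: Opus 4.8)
The plan is to exhibit a two-type instance (padded with trivial variables when $n>2$) together with a front-loaded ordering that forces any threshold algorithm into an unfavorable ``commit early versus gamble late'' dilemma. Since no positive integer $k$ satisfies $k<1/(4\varepsilon)$ unless $\varepsilon<1/4$, we may assume $\varepsilon<1/4$; fix a small smoothing parameter $\delta\in(0,\varepsilon)$, say $\delta=\varepsilon^2$. Let $F_1$ be concentrated on $1+U[0,\delta]$; let $F_2$ equal $4+U[0,\delta]$ with probability $\varepsilon$ and $U[0,\delta]$ otherwise; and let $F_3=\dots=F_n=U[0,\delta]$. (The uniform noise only removes point masses, exactly as in the proofs of Propositions~\ref{prop:static} and~\ref{prop:displace-single-threshold}; the atomic variant $F_1\equiv 1$, $\Pr[F_2=4]=\varepsilon$ with $F_2=0$ otherwise, $F_3=\dots=F_n\equiv 0$, works just as well.) For the permutation $\sigma\in S_{nk}$ I would list first the $k$ copies of $F_1$, then the $k$ copies of $F_2$, then the remaining $(n-2)k$ copies of $F_3,\dots,F_n$.

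The prophet's value is easy to bound below: on a single copy the maximum is at least $4$ whenever $F_2$ takes its high value (probability $\varepsilon$) and at least $1$ always, so $\EE[\max_{i\in[n]}v_i]\ge 4\varepsilon+(1-\varepsilon)=1+3\varepsilon$. For the online algorithm I would analyze an arbitrary $\ALG_{\boldsymbol\tau}$ on $\sigma(F_1,\dots,F_n)^k$ via a dichotomy on whether it ever stops inside the opening block of $F_1$-copies. If it does, its reward is at most $1+\delta$. If it does not, then it has reached the $F_2$-block, and this event depends only on the thresholds, realizations and (if present) tie-coins of the first block, hence is independent of the $F_2$-realizations; conditioned on it, at least one of the $k$ independent copies of $F_2$ attains its high value with probability $1-(1-\varepsilon)^k\le k\varepsilon<\tfrac14$ (this is where $k<1/(4\varepsilon)$ and Bernoulli's inequality enter), and otherwise every remaining value is at most $\delta$. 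So the conditional reward is at most $(1-(1-\varepsilon)^k)(4+\delta)+\delta<(4+\delta)\cdot\tfrac14+\delta<1+\tfrac{5\delta}{4}$, and taking both cases together $\EE[\ALG_{\boldsymbol\tau}]<1+\tfrac{5\delta}{4}<1+\tfrac{5\varepsilon}{4}$. Since $1+\tfrac{5\varepsilon}{4}<1+2\varepsilon-3\varepsilon^2=(1-\varepsilon)(1+3\varepsilon)$ for every $\varepsilon\in(0,\tfrac14)$, we conclude $\EE_{v\sim\sigma(F_1,\dots,F_n)^k}[\ALG_{\boldsymbol\tau}(v)]<(1-\varepsilon)\,\EE_{v\sim(F_1,\dots,F_n)}[\max_{i\in[n]}v_i]$.

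The step I expect to demand the most thought is the choice of parameters rather than any computation. The tempting ``single huge lottery'' instance (prize $\gg 1/\varepsilon$) backfires: with $k$ independent copies the prize is caught with probability $\approx k\varepsilon$, so $k$ copies multiply the gambler's expected gain and only help it --- extra resources would make the online side stronger, not weaker. The fix is to keep the prize bounded ($=4$) so that catching it across all $k<1/(4\varepsilon)$ copies still yields only $\approx 1$, strictly below $\OPT=1+3\varepsilon$, while the front-loaded order makes the ``safe'' reward $\approx 1$ and the late gamble mutually exclusive, so the algorithm cannot combine them. The rest is routine: tie-breaking/point masses are handled by the uniform smoothing (so the optimal thresholds are deterministic) or by noting rewards are affine in the tie probabilities, and the padding types $F_3,\dots,F_n$, placed last and carrying value $\le\delta$, change neither $\OPT$ nor the algorithm analysis.
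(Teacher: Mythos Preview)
Your proof is correct and follows essentially the same approach as the paper: a front-loaded instance with a ``safe'' value $\approx 1$ shown first and a low-probability ``prize'' shown afterward, so any threshold rule must either commit early (getting $\approx 1$) or gamble on the prize (yielding expected value $<1$ since $k<1/(4\varepsilon)$). The paper uses prize $2$ with probability $2\varepsilon$, $n-1$ deterministic safe variables, and reduces to a single threshold by noting it is optimal for this specific instance; you use prize $4$ with probability $\varepsilon$, one safe variable plus trivial padding, smooth out point masses, and replace the single-threshold reduction by a direct dichotomy---all cosmetic variations on the same construction and analysis.
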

\begin{proof}
First note that for $\varepsilon \in [1/2,1)$, the condition $k<1/(4\varepsilon)$ implies $k=0$. Hence we may assume $\varepsilon \in (0,1/2)$. Let $F_1 = \ldots = F_{n-1} = 1$ with probability $1$, and $F_n = 2$ with probability $2\varepsilon$ and $F_n=0$ otherwise. 
Let $\sigma$ be the permutation that sends the $k$ copies of $F_n$ to be the last $k$ distributions seen by the algorithm. Observe that a single threshold algorithm is optimal for this instance. 
Then, consider any single threshold $\tau\in (1,2]$. Then for every $\varepsilon \in (0,1/2)$ and every $k<1/(4\varepsilon)$, we have
$$\EE_{v\sim \sigma(F_1,\ldots,F_n)^k}[\ALG_{\tau}(v)] = (1-(1-2\varepsilon)^k)\cdot2 \leq 7/8.$$
Hence, the optimal threshold is at most 1, and yields $\EE_{v\sim \sigma(F_1,\ldots,F_n)^k}[\ALG_{\tau}(v)] =1.$
On the other hand, 
$$\EE_{v\sim (F_1,\ldots,F_n)}[\textstyle\max_{i\in [n]}v_i] = 2\varepsilon\cdot 2 + (1-2\varepsilon)\cdot 1 = 1+2\varepsilon.$$
This concludes the proof, as for every $\varepsilon \in (0,1/2)$, we have $(1-\varepsilon)(1+2\varepsilon)>1.$
\end{proof}

\section{Expected Number of Blocks}
\label{sec:expectedruntime}

In this appendix, we make a remark on the expected performance of the single threshold algorithm. Recall that its competition complexity in the block model is $\Theta(\log(1/\varepsilon))$ (Theorem \ref{thm:static}).

\begin{proposition}
There is a single threshold algorithm that, in expectation, terminates after two blocks and achieves an expected value of at least $\EE_{v\sim F}[\max_{i\in [n]} v_i]$.
\end{proposition}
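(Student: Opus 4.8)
The plan is to reuse the ``median threshold'' $\tau^*$ from Lemma~\ref{lem:static}, but now applied to an \emph{unbounded} stream of i.i.d.\ blocks rather than to a fixed number $k$ of them. Recall that $\tau^*$ is characterized by $\prod_{i=1}^n \Pr_{v_i \sim F_i}[v_i \le \tau^*] = 1/2$, so each block of the stream independently contains a value exceeding $\tau^*$ with probability exactly $1/2$. The algorithm is $\algt{\tau^*}$ run on $(F_1,\dots,F_n)^{1},(F_1,\dots,F_n)^{2},\dots$, accepting the first value that is at least $\tau^*$.

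First I would handle termination and the expected block count. Let $\hat r$ be the (random) index of the block in which $\algt{\tau^*}$ stops. Since the blocks are independent and each is ``successful'' (contains a value $\ge \tau^*$) with probability $1/2$, the variable $\hat r$ is geometrically distributed with success probability $1/2$; in particular $\algt{\tau^*}$ stops almost surely, $\Pr[\hat r < \infty] = 1$, and $\EE[\hat r] = 2$. Hence the algorithm terminates after two blocks in expectation.

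Next I would lower bound the expected value. Conditioning on $\hat r = r$, the values outside block $r$ carry no information about the accepted value, and the computation already carried out in the proof of Lemma~\ref{lem:static} — precisely the Samuel-Cahn ``median rule'' estimate recorded in Equation~\eqref{eq:ex_th2} — gives
\[
\EE_{v}\bigl[\algt{\tau^*}(v) \mid \hat r = r\bigr] \ \ge\ \EE_{v\sim (F_1,\dots,F_n)}\Bigl[\max_{i\in[n]} v_i\Bigr] \qquad \text{for every } r \ge 1.
\]
Averaging over $r$ and using $\Pr[\hat r < \infty] = 1$ then yields $\EE_{v}[\algt{\tau^*}(v)] \ge \EE_{v\sim (F_1,\dots,F_n)}[\max_{i\in[n]} v_i]$, which is the claimed guarantee.

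The only point that needs care — and the mild ``obstacle'' here — is the passage from finitely many blocks to a countably infinite stream: one must check that the stopping time is finite almost surely (immediate from the geometric distribution, equivalently $\Pr[\text{not stopped within }k\text{ blocks}] = 2^{-k} \to 0$) and that the per-block conditional bound, stated in Lemma~\ref{lem:static} for every fixed $k$, may legitimately be invoked blockwise in the limit (e.g.\ via monotone convergence on $(1-2^{-k})\,\EE[\algt{\tau^*} \mid \text{accepted within }k\text{ blocks}]$). Both are routine. As usual, if the $F_i$ admit point masses the threshold $\tau^*$ should be replaced by a randomized threshold so that the success probability of a block is exactly $1/2$ (cf.\ Appendix~\ref{app:pointmass}).
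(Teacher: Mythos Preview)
Your proposal is correct and follows essentially the same argument as the paper: both use the median threshold $\tau^*$ from Lemma~\ref{lem:static}, observe that the stopping block is geometric with parameter $1/2$ (hence expected value $2$), and invoke the conditional guarantee from the proof of Lemma~\ref{lem:static} to conclude. Your write-up is slightly more careful about the passage to infinitely many blocks and the point-mass caveat, but the substance is identical.
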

\begin{proof}
We consider the same algorithm as in Lemma \ref{lem:static},
namely a single-threshold algorithm, with threshold $\tau^*$ satisfying 
$\prod_{i=1}^n \Pr_{v_i
\sim F_i}[v_i \leq \tau^*] = \frac{1}{2}.$
Equation \eqref{eq:ex_th} shows that the expected value achieved by this algorithm, conditional on stopping, is at least $\EE_{v\sim F}[\max_{i\in [n]} v_i]$.
We additionally observe that the expected number of blocks after which it stops is given by
$\sum_{i=1}^{\infty}i\left(1/2\right)^i = 2.$
\end{proof}

We remark that the preceding result is essentially tight: With $k$ copies, no algorithm can get a better approximation to $\EE_{v\sim F}[\max_{i\in [n]} v_i]$ than $k/2$. Thus, the expected number of blocks to achieve a $(1-\varepsilon)$-approximation, is at least $2-2\varepsilon$.

\section{Matching Feasibility Constraints}
\label{app:matching}
In this appendix, we study the competition complexity for the vertex arrival model in bipartite graphs with one-sided arrival.
This problem falls in the more general XOS combinatorial auctions setting of Section \ref{sec:xos}.

More specifically, in this setting there is an underlying bipartite graph $G=(U,V,E)$, and the feasibility constraint is given by the set of matchings in $G$, that is, $\feas = \{S\subseteq E : \text{for all } i\in U\cup V, |\{e\in S : i\in e \}| \leq 1 \} $, and the valuation function is additive, i.e., $f(v,S) =\sum_{e \in S} v_e$.
The vertices of $V$ arrive online, one by one, and upon their arrival, their edges to all vertices in $U$ are revealed.
For every edge $e\in E$, the value $v_e$ is sampled according to a distribution $F_e$, and we denote $F= \bigtimes_{e\in E} F_e$.

Theorem \ref{thm:xos-dynamic} implies that the $(1-\varepsilon)$-competition complexity of block-consistent prices for this matching setting is $O(\log(1/\varepsilon))$.
In what follows, we show how to obtain this result by a different approach based on online contention resolution schemes. 
Our Algorithm \ref{alg:vertex} extends the one proposed by~\cite{ezra2022prophet} for vertex arrival model in bipartite graphs with one-sided arrival\footnote{Ezra et al.~\cite{ezra2022prophet} showed that the result holds also with respect to non-bipartite graphs.}.

\begin{algorithm}
    \caption{Online contention resolution scheme}
    \label{alg:vertex}
    
    Initialize $\ALG\leftarrow \emptyset$\;
    
    \For{$i\in \{1,\ldots,k\}$}{
        \For{$j\in\{1,\ldots,|V|\}$}{
            1. Observe $v_{j,u}^{(i)}$ for every $u\in U$\;
            2.  Sample $\tilde{v}_e\sim F_e$ for every $e\in E$ not incident to $j$\;
            3. Compute a maximum matching $\mu^{\star}$ in $G$ with edge-weights $w$ as follows:\\
            $w_{j,u}=v^{(i)}_{j,u}$ for every $u\in U$, and $w_e=\tilde{v}_e$ for every other edge\;
            
            4. \If{$u^*\in U$, the partner of $j$ in $\mu^{\star}$, exists and is available}{
                Match $j$ with $u^*$ with probability $1/(2 - \sum_{j'<j} x_{(j',u^*)})$, where $x_e = \Pr[e\in \arg\max_{S\in \feas}\sum_{e\in S} v_e ]$ for every $e\in E$\;
                Update $\ALG\leftarrow \ALG\cup \{(j,u^*)\}$\;
            }
        }
    }
    \Return $\ALG$\;
\end{algorithm}

\begin{theorem}\label{thm:matching}
For every $k\geq 1$, for every bipartite graph $G=(U,V,E)$, and every $F= \bigtimes_{e\in E} F_e$, Algorithm \ref{alg:vertex} always returns a matching in $G$, and it holds that
$$\EE_{v\sim F^k}[\textstyle\sum_{e\in \ALG(v)} v_e] \geq  \displaystyle\left(1-\frac{1}{2^k}\right) \cdot \EE_{v\sim F}[\textstyle\max_{S\in \feas }\sum_{e\in S} v_e],$$
where $\calF$ is the set of matchings in $G$. 
In particular, the $(1-\varepsilon)$-competition complexity of Algorithm~\ref{alg:vertex} for the online matching problem with one-sided vertex arrival is $O(\log(1/\varepsilon))$.
\end{theorem}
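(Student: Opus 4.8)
The plan is to treat Algorithm~\ref{alg:vertex} as an online contention resolution scheme and argue in the standard OCRS‑to‑prophet‑inequality way. Write $\OPT$ for an optimal matching of the single copy $v\sim F$, put $x_e=\Pr_{v\sim F}[e\in\OPT]$ (so $x$ lies in the bipartite matching polytope of $G$, i.e.\ $\sum_{e\ni w}x_e\le 1$ for every vertex $w$), and let $p_e^{(i)}$ denote the acceptance probability the algorithm uses for edge $e$ in block $i$.

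First I would check that $\ALG$ is always a matching in $G$: a $U$-vertex $u^{\star}$ is matched only while it is available, and a $V$-vertex $j$ is processed (and matched) only while it is available, so every vertex is used at most once. Second, I would reduce the welfare bound to a per‑edge selectability estimate. The matching $\mu^{\star}$ computed at step $(i,j)$ uses the realized block-$i$ weights on the edges at $j$ and fresh i.i.d.\ samples everywhere else, hence $\mu^{\star}$ has exactly the law of $\OPT$; moreover this law is unchanged upon conditioning on the realized weight of one edge $e\ni j$, and it is independent of everything revealed before step $(i,j)$. Writing $\EE[\sum_{e\in\ALG}v_e]=\sum_e\sum_i\EE[v_e^{(i)}\indicator{e\ \text{selected in block}\ i}]$ and using this independence, each term equals $\EE[v_e\indicator{e\in\OPT}]\cdot p_e^{(i)}\cdot\Pr[\text{both endpoints of }e\ \text{available at step }(i,j)]$, so
\[
\EE\Big[\sum_{e\in\ALG}v_e\Big]=\sum_e\EE[v_e\indicator{e\in\OPT}]\cdot\frac{\Pr[e\ \text{selected}]}{x_e},
\]
and it suffices to prove $\Pr[e\ \text{selected}]\ge(1-2^{-k})x_e$ for every edge $e$ with $x_e>0$.

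For this last bound I would run an induction over the blocks that mirrors the single‑block analysis of Ezra et al.~\cite{ezra2022prophet} (which gives the factor $1/2$, matching $k=1$). The within‑block step is a short telescoping computation: fixing a $U$-vertex $u$ and the state at the start of a block, the chain of inequalities ``$\Pr[(j,u)\ \text{selected this block}]\le p_{(j,u)}\cdot(\text{remaining availability mass of }u)$'' together with the precise form $p_{(j,u)}=\min\{1,\,1/(2-\sigma)\}$, where $\sigma$ is the cumulative $x$-mass on $u$ of all earlier arrivals, controls how much the availability of $u$ (and symmetrically of $j$) decays from one block to the next. Combining these availability estimates with $p_e^{(i)}\ge 1/2$ and with the way the cumulative $x$-mass in the denominators grows block by block, the $k$ per‑block contributions form (essentially) a geometric sum that adds up to $1-2^{-k}$; I would phrase the accounting in the same ``contribution of $\OPT$‑mass / telescoping'' style as in the proof of Theorem~\ref{thm:xos-dynamic}.

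The hard part is the joint availability of the two endpoints $j\in V$ and $u\in U$ of the edge being tracked. Treating ``$j$ has been matched'' and ``$u$ has been matched'' by a union bound is far too lossy, since after a couple of blocks each of these can have probability close to $1$ while their intersection is small; the argument must exploit that the two endpoints are removed together at a single step only when $e$ itself is selected, and that the acceptance probabilities are tuned (cumulative $x$-mass, capped at $1$) exactly so that the per‑block selection probabilities of $e$ compound to the target $1-2^{-k}$. Making this bookkeeping telescope cleanly — rather than the per‑edge availability bounds being individually clean — is where the real work of the proof sits.
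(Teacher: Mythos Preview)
Your setup (OCRS framing, per-edge selectability reduction, the fact that $\mu^\star$ has the law of $\OPT$ and is independent of the past) matches the paper's. The gap is in what you call ``the hard part'': there is no joint-availability issue in the one-sided model. Each arriving $V$-vertex is a fresh copy and is vacuously available at the moment it is processed; the algorithm's only check is whether the offline endpoint $u^\star\in U$ is still free. So the union-bound worry you raise never arises, and the analysis only needs to track $\Pr[u\text{ available at time }(j,i)]$.

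Once you see this, the paper's argument is both simpler and sharper than your sketch suggests. After assuming WLOG that $\sum_{j}x_{(j,u)}=1$ for every $u$ (add dummy zero-weight $V$-vertices), one proves by a two-line induction the exact equality
\[
\Pr[u\text{ available at time }(j,i)]=\frac{2-\sum_{j'<j}x_{(j',u)}}{2^{i}},
\]
which cancels the acceptance probability $1/(2-\sum_{j'<j}x_{(j',u)})$ exactly, giving $\Pr[e\text{ selected in block }i]=x_e/2^{i}$ and hence $\sum_i \Pr[e\text{ selected}]=x_e(1-2^{-k})$. No inequalities, no capping by $\min\{1,\cdot\}$, no geometric-sum ``essentially'' --- the availability probability is an equality, and the WLOG normalization is what makes the block boundary telescope ($2-\sum_{j'\le j_{\text{last}}}x_{(j',u)}=2-1=1=(2-0)/1$). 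Your induction-over-blocks plan with per-block inequalities would presumably get there too, but only after you drop the spurious $V$-side availability term; as written, that term would make the bound degrade rather than telescope.
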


\begin{proof}
First, observe that the algorithm is well defined since $2 -  \sum_{j'<j} x_{(j',u^*)} \geq 2 -  \sum_{j'} x_{(j',u^*)} \geq 2-1 = 1$.
We assume without loss of generality that for every $u\in U$, it holds that $\sum_{j\in V} x_{(j,u)} =1 $. This assumption can be made by adding $|U|$ auxiliary vertices that have edges to all edges in $U$, and all their edges always have a value of zero.
It holds that  
\begin{equation} \label{eq:bound-opt}
    \EE_{v\sim F}\Big[\max_{S\in \feas }\sum_{e\in S} v_e\Big]  = \sum_{e\in E} \Pr_{v\sim F} \Big[e\in \arg\max_{S\in \feas }\sum_{e\in S}v_e\Big] \cdot \EE_{v\sim F}\Big[v_e \mid e\in \arg\max_{S\in \feas }\sum_{e\in S}v_e\Big].
\end{equation}
On the other hand,  for every $i\in [k]$, and every $e\in E$, it holds that 
\begin{equation} \label{eq:bound-algv}
    \EE_{v\sim F^k}[v_e^{(i)} \mid  v_e^{(i)} \in \ALG(v)]  =  \EE_{v\sim F}\Big[v_e \mid e\in \arg\max_{S\in \feas }\sum_{e\in S}v_e\Big],
\end{equation}
since $(v^{(i)}_{(j,u)})_{u\in U}$, $(\tilde{v}_{(j',u)})_{j' \in V \setminus\{j\}, u\in U}$ are distributed the same as $v\sim F $.
We refer to time $(j,i)$ to the arrival of the $j$-th node in block $i$.
In what follows, we prove by induction that for every $u\in U$ \begin{equation}
    \label{eq:induction}
\Pr[u \text{ is available at time } (j,i)] = \frac{2 -  \sum_{j'<j} x_{(j',u)}}{2^i}.\end{equation}
The base case is when $i=1$, and $j$ is the first vertex to arrive, and then it holds that $(2 -  \sum_{j'<j} x_{(j',u)})/2^i =(2-0)/2 =1$.
Assume by induction that it is true if $j$ is not the first vertex to arrive at block $i$, then for $i$ and $j-1$, and else $j$ is the first to arrive at block $i>1$, then for $i-1$ and $j_{\text{last}}\in V$, where $j_{\text{last}}$ is the last vertex to arrive in block $i-1$.
If $j$ is not the first vertex of block $i$, then if we denote by $j_{\text{pre}}$ the vertex that arrives before $j$ in block $i$, then
\begin{align*}
    &\Pr[u \text{ is available at time } (j,i)] \\ 
    &= \Pr[u \text{ is available at time } (j_{\text{pre}},i)] \left(1- \Pr[u \text{ is the partner of } j_{\text{pre}} ] \cdot  \frac{1}{2 -  \sum_{j'<j_{\text{pre}}} x_{(j',u)}}\right) \\ &= \frac{2 -  \sum_{j'<j_{\text{pre}}} x_{(j',u)}}{2^i} \left(1- x_{(j_{\text{pre}},u)} \cdot  \frac{1}{2 -  \sum_{j'<j_{\text{pre}}} x_{(j',u)}}\right) 
    \\
    &= \frac{2 -  \sum_{j'<j_{\text{pre}}} x_{(j',u)}}{2^i} \left(  \frac{2-\sum_{j'<j} x_{(j',u)}}{2 -  \sum_{j'<j_{\text{pre}}} x_{(j',u)}}\right)  = \frac{2 -  \sum_{j'<j} x_{(j',u)}}{2^i}.
\end{align*}
Else, if $j$ is the first vertex of block $i$, then
\begin{align*}
    &\Pr[u \text{ is available at time } (j,i)] \\ 
    &=\Pr[u \text{ is available at time } (j_{\text{last}},i-1)] \left(1- \Pr[u \text{ is the partner of } j_{\text{last}} ] \cdot   \frac{1}{2 -  \sum_{j'<j_{\text{last}}} x_{(j',u)}}\right) \\ 
    &= \frac{2 -  \sum_{j'<j_{\text{last}}} x_{(j',u)}}{2^{i-1}} \left(1- x_{(j_{\text{last}},u)} \cdot  \frac{1}{2 -  \sum_{j'<j_{\text{last}}} x_{(j',u)}}\right) 
    \\ 
    &= \frac{2 -  \sum_{j'<j_{\text{last}}} x_{(j',u)}}{2^{i-1}} \left(  \frac{2-\sum_{j'\leq j_{\text{last}}} x_{(j',u)}}{2 -  \sum_{j'<j_{\text{last}}} x_{(j',u)}}\right)  = \frac{2 -  \sum_{j'<j} x_{(j',u)}}{2^i},
\end{align*}
where the last equality is since $\sum_{j'<j} x_{(j',u)}=0 $, and $\sum_{j'\leq j_{\text{last}}} x_{(j',u)} = 1$, which concludes the proof of the induction.
Finally, for every $e\in E$, we have 
\begin{align} 
\sum_{i\in [k]} \Pr[v_e^{(i)} \in \ALG(v)]  &=  x_e \cdot \frac{1}{2 -  \sum_{j'<j} x_{(j',u^*)}} \cdot \sum_{i\in [k]} \Pr[u^* \text{ is available at time } (j,i)] \nonumber\\ 
& =   x_e \cdot\frac{1}{2 -  \sum_{j'<j} x_{(j',u^*)}} \cdot \sum_{i\in [k]} \frac{2 -  \sum_{j'<j} x_{(j',u^*)}}{2^i}  =x_e\cdot \left(1- \frac{1}{2^k}\right),\label{eq:alg-prob}
\end{align} 
where the second equality holds by Equation~\eqref{eq:induction}.
The theorem then holds by combining \eqref{eq:bound-opt}, \eqref{eq:bound-algv}, and \eqref{eq:alg-prob}, together with $ \Pr_{v\sim F} [e\in \arg\max_{S\in \feas }\sum_{e\in S}v_e] =x_e$.
\end{proof}

\end{document}